\definecolor{OliveGreen}{rgb}{0,0.6,0}
\theoremstyle{definition} 
\newtheorem{theorem}{Theorem}[section]
\newtheorem*{theorem*}{Theorem}
\newtheorem*{proposition*}{Proposition}
\newtheorem{lemma}[theorem]{Lemma}
\newtheorem*{lemma*}{Lemma}
\newtheorem{corollary}[theorem]{Corollary}
\newtheorem*{conjecture*}{Conjecture}
\newtheorem*{fact*}{Fact}
\newtheorem*{hypothesis*}{Hypothesis}
\newtheorem*{claim*}{Claim}
\theoremstyle{definition}
\newtheorem{definition}[theorem]{Definition}
\newtheorem*{remark*}{Remark}
\newtheorem*{observation*}{Observation}
\newcommand{\Paren}[1]{\left(#1\right)}
\newcommand{\brac}[1]{[#1]}
\newcommand{\Brac}[1]{\left[#1\right]}
\newcommand{\Abs}[1]{\left\lvert#1\right\rvert}
\newcommand{\Norm}[1]{\left\lVert#1\right\rVert}
\newcommand{\Esymb}{\mathbb{E}}
\newcommand{\Psymb}{\mathbb{P}}
\DeclareMathOperator*{\E}{\Esymb}
\DeclareMathOperator*{\ProbOp}{\Psymb}
\renewcommand{\Pr}{\ProbOp}
\newcommand{\ex}[1]{\E\brac{#1}}
\newcommand{\Ex}[2][]{\E_{{#1}}\Brac{#2}}
\DeclareMathOperator{\Tr}{Tr}
\DeclareMathOperator{\poly}{poly}
\DeclareMathOperator{\negl}{negl}
\DeclareMathOperator{\polylog}{polylog}
\newcommand{\braket}[1]{\langle #1 \rangle}
\newcommand{\cC}{\mathcal C}
\newcommand{\cD}{\mathcal D}
\newcommand{\cF}{\mathcal F}
\newcommand{\cH}{\mathcal H}
\newcommand{\cR}{\mathcal R}
\newcommand{\cS}{\mathcal S}
\newcommand{\cU}{\mathcal U}
\newcommand{\bbI}{\mathbb I}
\newcommand*{\NC}{\mathsf{NC}}
\newcommand*{\AC}{\mathsf{AC}}
\newcommand*{\QNC}{\mathsf{QNC}}
\newcommand*{\QAC}{\mathsf{QAC}}
\newcommand*{\BPP}{\mathsf{BPP}}
\definecolor{BrickRed}{rgb}{0.8, 0.25, 0.33}
\definecolor{DarkBlue}{rgb}{0,0,0.8}
\definecolor{DarkOrange}{rgb}{0.8,0.4,0}
\newcommand{\Id}{\mathbb{I}}
\newcommand{\ket}[1]{|#1\rangle}
\newcommand{\bra}[1]{\langle#1|}
\renewcommand{\to}{\rightarrow}
\newcommand{\zo}{\{0,1\}}
\newcommand{\eps}{\varepsilon}
\newcommand{\haar}{\mathsf{Haar}}
\title{Unconditional Pseudorandomness against Shallow Quantum Circuits}
\author[1]{Soumik Ghosh\thanks{\texttt{soumikghosh@uchicago.edu}}}
\author[2]{Sathyawageeswar Subramanian\thanks{\texttt{ss2310@cam.ac.uk}}}
\author[1]{Wei Zhan\thanks{\texttt{weizhan@uchicago.edu}}}
\affil[1]{University of Chicago}
\affil[2]{University of Cambridge}
\date{} 
\begin{document}
\maketitle

\begin{abstract}
    Quantum computational pseudorandomness has emerged as a fundamental notion that spans connections to complexity theory, cryptography and fundamental physics. However, all known constructions of efficient quantum-secure pseudorandom objects rely on complexity theoretic assumptions. 
    
    In this work, we establish the first \emph{unconditionally} secure efficient pseudorandom constructions against shallow-depth quantum circuit classes. We prove that:
    \begin{itemize}
    \item Any quantum state $2$-design yields unconditional pseudorandomness against both $\QNC^0$ circuits with arbitrarily many ancillae and $\AC^0\circ\QNC^0$ circuits with nearly linear ancillae. 
    \item Random phased subspace states, where the phases are picked using a $4$-wise independent function, are unconditionally pseudoentangled against the above circuit classes. 
    \item Any unitary $2$-design yields unconditionally secure parallel-query pseudorandom unitaries against geometrically local $\QNC^0$ adversaries, even with limited $\AC^0$ postprocessing.
    \end{itemize}
    Our indistinguishability results for $2$-designs stand in stark contrast to the standard setting of quantum pseudorandomness against \textsf{BQP} circuits, wherein they can be distinguishable from Haar random ensembles using more than two copies or queries. Our work demonstrates that quantum computational pseudorandomness can be achieved unconditionally for natural classes of restricted adversaries, opening new directions in quantum complexity theory.
\end{abstract}

\newpage
\section{Introduction}\label{sec: intro}
Randomness is fundamental to both classical and quantum computation, particularly in cryptography and algorithm design. However, true randomness is often scarce or computationally impractical. The theory of pseudorandomness studies deterministic objects that appear random to resource-bounded observers. For example, classical pseudorandom generators (PRGs) produce bit strings indistinguishable from random strings for computationally limited observers. A rich theory connects hardness to pseudorandomness for complexity classes such as $\BPP$ \cite{Nisan1991,ImpagliazzoNW94}.

Quantum pseudorandomness extends these ideas to quantum states and unitaries. Truly random quantum objects, described by the Haar measure over the unitary group \cite{mele2024introduction}, require exponential resources to generate. Two approaches have emerged for more efficient alternatives. First, the information-theoretic or statistical notion of efficient quantum $t$-designs: ensembles of states or unitaries that match the first $t$ moments of the Haar measure, and can be prepared or implemented efficiently \cite{Brando2016prl,Brando2016randomcircuits}. Second, the more recent computational approach: ensembles of quantum pseudorandom states (PRS) or unitaries (PRU) that appear Haar-random to computationally bounded quantum observers \cite{Ji2018}.

Classically, \emph{unconditional} pseudorandomness has been successfully constructed against several restricted computational models such as constant-depth circuits \cite{Nisan1991,Braverman08IndependenceFoolsAc0,DeETT10}, read-once branching programs \cite{Nisan92,ImpagliazzoNW94,NisanZ96}, and low-degree polynomials \cite{Viola09, BogdanovV10}. These results bypass the need for complexity theoretic assumptions, and have profound implications for cryptography \cite{Luby94book}, derandomization \cite{SaksZ99,Hoza21}, and lower bounds \cite{HealyVV06,GopalanMRTV12}.\footnote{Readers may refer to the survey by Hatami and Hoza \cite{HatamiH24} for a comprehensive review of the more recent developments on classical unconditional pseudorandomness.} 

In contrast, all existing quantum pseudorandom constructions target powerful adversaries such as polynomial-sized quantum circuits ($\mathsf{BQP}$), and rely on cryptographic assumptions such as the existence of quantum-secure one-way functions.  

\paragraph{Our contributions.}We establish the first unconditional efficient quantum pseudorandomness results against shallow-depth circuit classes. Such circuits model near-term quantum devices with limited coherence times and gate counts. We show that efficient pseudorandom objects, including PRS, pseudoentanglement, and PRU secure against parallel queries, can be constructed unconditionally for shallow quantum circuits. Our key insight is that due to the depth constraints, each output qubit of shallow quantum circuits locally depends only on a subset of input qubits, thus fundamentally limiting the ability of such circuits to distinguish certain structured quantum objects from Haar-random ones.

A notable aspect of our results is that the only property needed for our constructions is that of being an (approximate) $2$-design. \emph{A priori}, the design property only imposes conditions on the behavior of the object when few (in this case exactly two) copies of the objects are present, while pseudorandomness is a property that concerns an arbitrary polynomial number of copies. Rather surprisingly, our results bridge this gap, showing that when the power of the adversaries is restricted, \emph{information-theoretic} indistinguishability on two copies is strong enough to imply \emph{computational} indistinguishability on polynomially many copies.

Our work raises several open questions, ranging from constructing new classes of unconditionally pseudorandom objects against other shallow circuit classes, to applying these results to quantum cryptography and complexity theory. We discuss some of these directions in \Cref{sec: discussions}, providing new perspectives for analyzing near-term quantum devices.

\subsection{Main results}
In this article we construct unconditionally secure efficient pseudorandom objects against two important shallow-depth quantum circuit classes---$\QNC^0$ and $\AC^0\circ\QNC^0$. 

\paragraph{PRS from state designs.}We first demonstrate that unconditional pseudorandomness can be derived from state designs, whose security holds against circuits up to $\AC^0\circ\QNC^0$.

\begin{theorem}[Informal; See \Cref{thm:qnc_multiout}]
    Every $2$-design state ensemble is an unconditionally secure PRS against $\QNC^0$ circuits with arbitrarily many ancillae and almost linearly many bits of output.
\end{theorem}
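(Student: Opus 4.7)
The plan is to exploit the bounded light-cone structure of $\QNC^0$ circuits to reduce the multi-copy distinguishing problem to a single-copy second-moment statement about reduced density matrices, which then follows directly from the $2$-design property.

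First, I would set up the reduction via light cones. Let $C$ be a $\QNC^0$ distinguisher of depth $d$ acting on $k=\poly(n)$ copies of $|\psi\rangle$ together with $a=\poly(n)$ ancillae $|0\rangle^{\otimes a}$, and producing an $m$-qubit output register. Each output qubit has a reverse light cone of at most $2^d$ input qubits; let $L$ be the union of these light cones, and $L_i \subseteq [n]$ its intersection with the $i$-th copy. Because the output register is determined entirely by the reduced input state
\[
\rho_L(\psi) \;=\; \bigotimes_{i=1}^k \rho_{L_i}(\psi) \otimes |0\rangle\langle 0|^{\otimes|L_a|}, \qquad \rho_{L_i}(\psi) \;=\; \Tr_{[n]\setminus L_i}|\psi\rangle\langle\psi|,
\]
and the ancilla factor is independent of $\psi$, data processing reduces the problem to bounding $\|\E_{\psi\sim\cE}\rho_L(\psi) - \E_{\psi\sim\haar}\rho_L(\psi)\|_1$.

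Next, I would carry out the single-copy concentration step. Note that $\Tr[\rho_{L_i}(\psi)^2]$ is a degree-$2$ polynomial in $|\psi\rangle\langle\psi|$, so its expectation matches between $\cE$ and $\haar$. The standard Haar calculation gives $\E_\psi\Tr[\rho_{L_i}^2] = (2^{|L_i|} + 2^{n-|L_i|})/(2^n+1)$, yielding $\E_\psi\|\rho_{L_i}(\psi) - I/2^{|L_i|}\|_2^2 = O(2^{|L_i|-n})$. Combining $\|X\|_1 \le \sqrt{\mathrm{rank}(X)}\,\|X\|_2$ with Cauchy--Schwarz converts this into the trace-norm estimate $\E_\psi\|\rho_{L_i}(\psi) - I/2^{|L_i|}\|_1 \le O(2^{|L_i|-n/2})$.

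Finally, I would combine across copies. The telescoping identity for tensor products of unit-trace operators gives $\|\rho_L(\psi) - I/2^{|L|}\|_1 \le \sum_{i=1}^k \|\rho_{L_i}(\psi) - I/2^{|L_i|}\|_1$, so $\E_\psi\|\rho_L(\psi) - I/2^{|L|}\|_1 \le \sum_i O(2^{|L_i|-n/2})$ under either $\cE$ or $\haar$. The triangle inequality then bounds the distinguishing advantage by twice this sum. Since $|L_i| \le \min(n,\, m\cdot 2^d)$ and $k=\poly(n)$, this is negligible whenever $m\cdot 2^d$ is a sufficiently small constant fraction of $n$, which is precisely the almost-linear-output regime promised in the theorem.

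The main conceptual obstacle is that $\rho_L(\psi) = \bigotimes_i \rho_{L_i}(\psi)$ is a degree-$k$ function of $|\psi\rangle\langle\psi|$, so the $2$-design property does not apply to it directly. The key trick is to invoke the design property only on the single-copy, degree-$2$ scalar $\Tr[\rho_{L_i}^2]$, producing single-copy concentration toward the maximally mixed state, and then lift this concentration to the tensor product via the telescoping inequality above. This separation---single-copy second moments implying multi-copy indistinguishability against shallow adversaries---is the conceptual heart of the theorem, and breaks down once the adversary's depth is large enough that a single output qubit's reverse light cone contains a non-trivial fraction of a copy.
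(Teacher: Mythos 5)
Your proposal is correct and follows essentially the same route as the paper: a lightcone reduction to the reduced state on at most $m\cdot 2^d$ input qubits, the second-moment (purity) identity $\E\Tr[\rho_{L_i}^2]$ forced by the $2$-design property together with the Haar value from \Cref{lma:page}, and the telescoping bound $\bignorm{\bigotimes_i\rho_{L_i}-2^{-|L|}\Id}_1\le\sum_i\bignorm{\rho_{L_i}-2^{-|L_i|}\Id}_1$ across copies, exactly as in \Cref{col:page_haar,col:page_design,thm:prs,thm:qnc_multiout}. The only (harmless, even slightly advantageous) deviation is that you bound the expected trace distance directly by convexity for the fixed lightcone of the given circuit, rather than invoking Markov's inequality plus the $n^{O(k)}$ union bound over all size-$k$ subsystems as the paper does.
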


\begin{theorem}[Informal; See \Cref{col:acqnc}]
    Every $2$-design state ensemble is an unconditionally secure PRS against $\AC^0\circ\QNC^0$ circuits with almost linearly many ancillae.
\end{theorem}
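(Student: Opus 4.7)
The plan is to derive this corollary from the preceding $\QNC^0$ theorem by absorbing the classical $\AC^0$ stage into the statistical analysis of the quantum stage. An $\AC^0\circ\QNC^0$ adversary $\mathcal{A}$ first applies a constant-depth quantum circuit $Q$ to $\poly(n)$ copies of the input state together with ancillae, measures the output qubits in the computational basis to obtain a classical bitstring $y$, and then applies a classical $\AC^0$ circuit $f$ to $y$ to produce a single decision bit. The distinguishing advantage of $\mathcal{A}$ between the $2$-design ensemble and the Haar ensemble is therefore upper bounded by the total variation distance between the induced distributions of $y$ in the two cases.

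The cleanest route is to read the preceding $\QNC^0$ theorem as already bounding this total variation distance: since it holds against every $\QNC^0$ distinguisher with up to almost linearly many output bits, the joint distribution of measurement outcomes on $m=\tilde{O}(n)$ output qubits of $Q$ must be $\negl(n)$-close in statistical distance between the two ensembles. The data processing inequality then yields
\[
\Bigabs{\Pr[f(y_{\text{2-des}})=1] - \Pr[f(y_{\haar})=1]}\;\le\;\mathrm{TV}\Bigparen{y_{\text{2-des}},\,y_{\haar}}\;=\;\negl(n)
\]
for \emph{any} classical postprocessor $f$, and in particular for $f\in\AC^0$. The ``almost linearly many ancillae'' budget of the $\AC^0\circ\QNC^0$ adversary is precisely what keeps the number of measured output bits of $Q$ within the $\tilde{O}(n)$ regime of the preceding theorem, since every input wire of $f$ must be a measured output of $Q$.

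The main obstacle I anticipate is whether the preceding theorem is actually phrased as a TV-distance bound or only as a single-bit pseudorandomness guarantee. In the latter case the reduction above is not automatic, because single-bit indistinguishability against $\QNC^0$ does not by itself imply joint TV closeness of the output bits: an arbitrary Boolean predicate on the output register need not lie in $\QNC^0$, and $\AC^0\not\subseteq\QNC^0$ rules out naive absorption of $f$ into the quantum part. The fallback I would pursue is to exploit the Fourier-analytic structure of $\AC^0$: after a switching-lemma random restriction, $f$ collapses with high probability to a shallow decision tree on $\polylog(n)$ bits, and the required indistinguishability can then be reduced to the preceding theorem by analyzing $f$ one restricted branch at a time and taking a union bound over the restriction. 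Either way, I expect the overall argument to be a short, essentially mechanical reduction, with no new conceptual ingredients beyond those already present in the $\QNC^0$ theorem.
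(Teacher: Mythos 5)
There is a genuine gap, and it sits exactly where you flagged your ``main obstacle.'' Your primary route assumes that the preceding $\QNC^0$ result gives a $\negl(n)$ total-variation bound on the joint distribution of \emph{all} measured output bits, but this fails for two reasons. First, the $\AC^0$ stage reads the measurement outcomes of all $tn+a=\poly(n)$ qubits ($t=\poly(n)$ copies of the $n$-qubit state plus the ancillae); the ancilla budget bounds the number of ancillae, not the number of output wires, so you are far outside the $k\le 2^{-d}\cdot o(n/\log n)$ output regime of \Cref{thm:qnc_multiout}. Second, and more fundamentally, no such TV bound on the full output can hold: \Cref{thm:qnc0cte} exhibits a $t$-design whose computational-basis measurement outcomes on $d+1$ copies are $(1-\negl(n))$-far in total variation from the Haar case (a linear-dependence test distinguishes them). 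So the data-processing shortcut is unavailable, and the restriction of the postprocessor to $\AC^0$ is essential rather than cosmetic.

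Your fallback gestures at the right family of ideas but misses the two ingredients the argument actually needs. A switching-lemma restriction by itself does not show that the $\QNC^0$ output distribution fools $\AC^0$ (the restriction would have to be taken with respect to a distribution that is neither uniform nor product); what is needed is Braverman's theorem that almost $k$-wise independent distributions fool $\AC^0$ (\Cref{lma:fool}), applied with $k=\polylog(n)$ after showing, via the lightcone and $2$-design arguments, that every $k$-bit marginal of the output is $\delta$-close to its value on maximally mixed inputs. Moreover, once ancillae are present even this is not enough: the output distribution is no longer close to $k$-wise \emph{uniform}, only $k$-wise \emph{indistinguishable} from a fixed non-product distribution $\cU\otimes\cR$ determined by the forward lightcone of the ancillae, and $k$-wise indistinguishability alone does not fool $\AC^0$. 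The paper resolves this with a generalization of Braverman's theorem (\Cref{lma:fool2}) for $k$-wise indistinguishable distributions one of which has min-entropy at least $m-r$; the almost-linear ancilla bound is precisely what keeps the min-entropy deficit $r=2^d a$ small enough for that lemma to apply with $k=o(n/\log n)$ (see \Cref{thm:acqnc2} and \Cref{col:acqnc}). None of these steps is mechanical, and your proposal as written does not reach them.
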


Observe that this stands in stark contrast to the case of \textsf{BQP} adversaries. Consider an ensemble that forms at most a $t$-design, for some $t = \poly(n)$. Then, there are cases---for instance, in the case of random stabilizer states---where a \textsf{BQP} adversary can distinguish between a state from this ensemble and a Haar random state \cite{Arunachalam2025,Gross2021} using more than $t$ copies.

Pseudoentanglement refers to the phenomenon whereby states having very low entanglement are indistinguishable from states having very low entanglement. We also prove that unconditional pseudoentanglement can be achieved against the above shallow quantum circuits.
\begin{theorem}[Informal; See \Cref{col:pre}]
    There exists efficiently constructible, unconditionally secure pseudoentanglement against $\QNC^0$ circuits, and against $\AC^0\circ\QNC^0$ circuits with poly-logarithmically many ancillae.
\end{theorem}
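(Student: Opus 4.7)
My strategy is to instantiate the standard two-ensemble scheme for pseudoentanglement using the random phased subspace states flagged in the second bullet of the main contributions, and then invoke the PRS-from-$2$-design theorems already stated in this subsection.

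First, I would define, for a parameter $k$, the ensemble $\cE_k$ of states $|\phi_{S,f}\rangle = 2^{-k/2}\sum_{x \in S} f(x)|x\rangle$, where $S \subseteq \mathbb{F}_2^n$ is a uniformly random $k$-dimensional $\mathbb{F}_2$-linear subspace and $f:S \to \sbits$ is sampled from a $4$-wise independent family. Each $\cE_k$ is efficiently preparable: sample $S$ by picking a random full-rank $n \times k$ generator matrix, prepare the uniform superposition over $S$ with elementary gates, and apply the phase oracle computed from the $4$-wise independent seed. Since every state in $\cE_k$ has Schmidt rank at most $|S| = 2^k$ across any bipartition, its entanglement entropy on any cut is at most $k$.

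Second, I would verify the key technical claim: for a suitable range of $k$, $\cE_k$ is an approximate state $2$-design. This reduces to evaluating $\E_{S,f}[|\phi_{S,f}\rangle\langle\phi_{S,f}|^{\otimes 2}]$. The $4$-wise independence of $f$ kills every term except those in which the multiset $\{x,y,x',y'\}$ pairs up, leaving contributions proportional to identity and SWAP on $S^{\otimes 2}$; averaging over the uniformly random subspace $S$ then yields, up to controlled error, the projector onto the symmetric subspace of $(\mathbb{C}^{2^n})^{\otimes 2}$, which is exactly the Haar second moment. This is the standard second-moment calculation used to show that random phased subset/subspace states form approximate $2$-designs, and $4$-wise independence is the minimal requirement for it to go through.

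Third, with the $2$-design property in hand, I would instantiate with $k_{\text{low}} = \polylog(n)$ and $k_{\text{high}} = \Omega(n)$. The PRS theorems proved just before this corollary imply that both $\cE_{k_{\text{low}}}$ and $\cE_{k_{\text{high}}}$ are computationally indistinguishable from Haar-random states against $\QNC^0$ with arbitrary ancillae, and against $\AC^0 \circ \QNC^0$ with $\polylog(n)$ ancillae. By transitivity, the two ensembles are indistinguishable from each other, yet $\cE_{k_{\text{low}}}$ has entanglement entropy at most $\polylog(n)$ on any cut while $\cE_{k_{\text{high}}}$ has entropy $\Omega(n)$ across a balanced cut with high probability (via a Page-type estimate for random subspace states, or alternatively by taking the high-entanglement side to be Haar-random and using the $2$-design indistinguishability only on the low side).

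I expect the main obstacle to be parameter tracking in the interplay between steps two and three. The $2$-design error of $\cE_k$ degrades as $k$ shrinks, and the PRS theorem for $\AC^0 \circ \QNC^0$ almost certainly loses a factor that scales with the number of $\AC^0$ ancillae. Pushing $k_{\text{low}}$ down to $\polylog(n)$ to obtain a meaningful entanglement separation, while simultaneously preserving negligible distinguishing advantage, is presumably what forces the pseudoentanglement statement to tolerate only $\polylog(n)$ ancillae on the $\AC^0$ layer, as opposed to the nearly-linear budget allowed in the plain PRS statement. Balancing these three quantities against one another is the delicate step I would expect to consume the bulk of the argument.
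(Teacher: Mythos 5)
Your proposal matches the paper's proof of \Cref{col:pre} essentially step for step: random phased subspace states with $4$-wise independent phases, the second-moment calculation showing they form an $O(2^{-d})$-approximate $2$-design, indistinguishability via the preceding PRS theorems (with Haar-random states serving as the high-entanglement ensemble), and the Schmidt-rank bound giving entropy at most $d$ across any cut. The only quibble is parametric: for the $\AC^0\circ\QNC^0$ case the subspace dimension must be $d=\log^{\omega(1)}n$ (slightly super-polylogarithmic) rather than $\polylog(n)$, since \Cref{thm:acqnc2} requires design error $2^{-\log^{\omega(1)}n}$ --- precisely the parameter-tracking issue you flag, though the $\polylog(n)$ ancilla budget comes from the lightcone argument in that theorem rather than from the pseudoentanglement balancing itself.
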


\paragraph{PRU from unitary designs.}Similarly, we also prove that unitary $t$-designs are unconditionally pseudorandom against geometrically local shallow quantum circuits when queried only in parallel (i.e. non-adaptively).

\begin{theorem}[Informal; See \Cref{thm:qncpru}]
    Every unitary $2$-design is an unconditionally non-adaptive secure PRU against $1$-dimensional geometrically local $\QNC^0$ circuits with arbitrarily many ancillae, and almost linear depth $1$-dimensional geometrically local $\QNC$ pre-processing. 
\end{theorem}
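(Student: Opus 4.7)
The plan is to reduce distinguishing many parallel queries of a $2$-design from many parallel queries of a Haar-random unitary to an equality of reduced density matrices on at most two query registers, where the $2$-design property applies verbatim. A non-adaptive adversary of the claimed form consists of (i) an almost-linear-depth $1$D geometrically local $\QNC$ pre-processing that prepares some state $|\psi\rangle$ on the query registers together with ancillae, (ii) the parallel query $U^{\otimes k}$ applied to $k$ designated query registers of width $n$ laid out as contiguous blocks in the $1$D geometry, (iii) a constant-depth $1$D geometrically local $\QNC^0$ post-processing circuit $C$ with arbitrarily many fresh $|0\rangle$ ancillae, and (iv) the measurement of a single output qubit that produces the decision bit.

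First I would invoke the standard light-cone property of $\QNC^0$: the distribution of the decision bit depends only on the reduced density matrix $\rho_Q$ of the post-query state on a constant-sized interval $Q$ in the $1$D layout, via $\Pr[b=1] = \Tr[\Pi \cdot \rho_Q]$ where $\Pi$ is the Heisenberg evolution of the measurement projector through $C$ restricted to $Q$. By $1$-dimensional geometric locality and the contiguous-block layout of the query registers, for $n$ larger than the constant $|Q|$ the interval $Q$ intersects at most two query registers, say $R_1$ and $R_2$. Next I would compute $\rho_Q$ explicitly: tracing out the remaining $k-2$ query registers exploits the invariance of the partial trace under a unitary acting on the traced-out qubits, so the $U^{\otimes(k-2)}$ factors cancel and $\rho_Q$ reduces to a further partial trace of $(U \otimes U \otimes I_{\mathrm{anc}}) \sigma (U \otimes U \otimes I_{\mathrm{anc}})^{\dagger}$, where $\sigma = \Tr_{\mathrm{rest}}[|\psi\rangle\langle\psi|]$ is the reduced pre-processing state on $R_1 \cup R_2$ together with the relevant ancillae.

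The unitary $2$-design property now applies directly to the twirl by $U \otimes U$: for every operator $X$, $\mathbb{E}_{U\sim D}[(U\otimes U) X (U\otimes U)^\dagger] = \mathbb{E}_{U\sim \haar}[(U\otimes U) X (U\otimes U)^\dagger]$ exactly for an exact $2$-design, and up to the design's approximation parameter in an appropriate norm for an approximate one. Propagating through the outer partial trace gives $\mathbb{E}_{U\sim D}[\rho_Q] = \mathbb{E}_{U\sim \haar}[\rho_Q]$, and because $\Pr[b=1]$ is a linear functional of $\rho_Q$ this immediately yields zero distinguishing advantage (resp.\ negligible, in the approximate case).

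The main obstacle will be verifying that no combination of pre-processing, ancilla placement, and post-processing in the adversary's model can force $Q$ to intersect more than two query registers. The crux is that $1$D-local gates act on qubits in place without physically relocating them, so the contiguous layout of the query registers is preserved at query time; and only the constant depth of $C$, not the almost-linear depth of the pre-processing, governs the light-cone size. A secondary bookkeeping step is propagating the approximate-design error through the partial trace and the final linear functional, which contributes at most an additive term depending on the diamond-norm quality of the design.
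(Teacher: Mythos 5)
Your reduction is clean and, in the regime it covers, it is genuinely different from (and in one respect stronger than) the paper's argument: for a \emph{single} output bit, the backward lightcone through a $1$-dimensional geometrically local $\QNC^0$ post-processing circuit is indeed one short interval meeting at most two query blocks, the $U$'s on the fully traced-out blocks cancel, and for an \emph{exact} $2$-design the twirl identity $\Ex[i]{(U_i\otimes U_i)X(U_i\otimes U_i)^\dagger}=\Ex{(U_\haar\otimes U_\haar)X(U_\haar\otimes U_\haar)^\dagger}$ matches the expected reduced state exactly. Notably, your argument places no restriction on the pre-processing at all, whereas the paper needs the pre-processing depth to be $o(n)$. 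That is because the paper takes a different route: it never matches expectations directly, but instead shows that for \emph{both} the Haar ensemble and the design, the reduced state on the lightcone is close to maximally mixed with high probability over the unitary (\Cref{col:pru_haar}, \Cref{col:pru_design}), using the recursive Schmidt decomposition (\Cref{lma:RSD}, whose rank is controlled by the pre-processing depth) together with the off-diagonal Page-type bound (\Cref{lma:page_orth}).

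The reason the paper pays this price is the gap in your "crux" step. The PRU definition compares the full output \emph{distributions} in total variation, and \Cref{col:pru_haar} is deliberately stated for an arbitrary $k$-qubit subsystem $A$ with $k_1,\ldots,k_t$ qubits in the respective query blocks. Once the circuit measures more than one output bit, the relevant region is a union of intervals that can intersect up to $t$ query blocks; then tracing out the untouched blocks leaves a $\tau$-fold twirl with $\tau$ as large as $t$, and a $2$-design gives you no control over the $\tau$-th moment. Your claim that no adversary in the model can force the region to meet more than two query registers is exactly what fails here, and it cannot be repaired within the "match expectations" strategy---that strategy intrinsically needs $t$-designs, while the concentration-to-maximally-mixed strategy needs only second moments. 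A second, smaller gap is the approximate case: the paper's \Cref{definition_tdesign} bounds the moment channel in the induced trace norm, not the diamond norm, so applying the twirl to a state $\sigma$ that remains entangled with ancillae and with the purification of the traced-out blocks is not controlled by $\eps$; the paper sidesteps this by applying the design property only to rank-one operators $\ket{v}\bra{w}^{\otimes 2}$ supported on the query registers themselves, which the Schmidt decomposition furnishes. You flag the diamond-norm issue but do not resolve it, and resolving it (or restricting to exact designs and single-bit outputs) is needed before your argument establishes the stated theorem.
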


We also extend this PRU construction to $\QNC^0$ circuits with $\AC^0$ post-processing, with the caveat that it must be weakened slightly since we do not have a natural means to deal with ancillae in the post-processing phase.
\begin{theorem}[Informal; See \Cref{thm:acqncpru}]
    Every unitary $2$-design on $n$-qubits is an unconditionally non-adaptive secure PRU against a subclass of $\AC^0\circ\QNC^0$ circuits on a multiple of $n$ qubits, and the pre-processing $\QNC^0$ circuit before the queries is $1$-dimensional geometrically local. 
\end{theorem}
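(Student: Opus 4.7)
The overall plan is to reduce the security proof to the $\QNC^0$ case established in \Cref{thm:qncpru} by handling the $\AC^0$ post-processing via a random restriction / switching lemma, in direct analogy with how the PRS result is lifted from \Cref{thm:qnc_multiout} to \Cref{col:acqnc}. The adversary has the form: a $1$-dimensional geometrically local $\QNC^0$ pre-processing circuit produces $\poly(n)$ query inputs on $kn$ qubits; parallel non-adaptive queries to the challenge unitary are applied to the $n$-qubit query registers; and a classical $\AC^0$ circuit post-processes the measurement outcomes (plus any auxiliary bits permitted by the subclass) to output a single-bit distinguisher.

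First, I would reuse the core structural lemma behind \Cref{thm:qncpru}: because the pre-processing $\QNC^0$ is $1$D geometrically local with constant depth, its light-cone is $O(1)$, and the state fed into the parallel queries decomposes, up to bounded overlap, as a tensor product of states on $O(1)$-sized contiguous windows. Combined with the unitary $2$-design property, this yields that the joint distribution of any two query-output registers under $U$ sampled from the design is indistinguishable in trace distance from the Haar distribution on those registers. At this stage, the guarantee only concerns two-register marginals, which is enough to fool a $\QNC^0$ post-processor whose light-cone naturally involves $O(1)$ query outputs, but \emph{a priori} is too weak for an $\AC^0$ post-processor that can touch all $\poly(n)$ measured bits at once.

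To bridge this gap I would apply the H\aa stad switching lemma to the $\AC^0$ post-processing circuit. Under a suitable random restriction of the measurement-outcome bits, the $\AC^0$ circuit collapses with high probability to a depth-$\polylog(n)$ decision tree that reads only $\polylog(n)$ output bits. Pulling this restriction back through the measurements, only $\polylog(n)$ query-output qubits are ever examined, and each of these sits in a $\QNC^0$ light-cone of constant size in the original input. Thus the relevant marginal lives on a small set of qubits whose description depends on only $\polylog(n)$ query registers, and the $2$-design guarantee on two-register marginals can be bootstrapped along this small set to match the Haar distribution, completing the indistinguishability argument modulo the switching error. A union bound over the polynomially many measurement outcomes together with re-randomization then yields the claimed non-adaptive PRU security.

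The principal obstacle, and the reason for the ``subclass'' hedge in the statement, is the treatment of post-processing ancillae. If the $\AC^0$ circuit is handed arbitrary auxiliary wires that are coherently entangled with the query outputs, a random restriction of only the outcome bits no longer suffices to collapse the circuit, because the ancillae can carry global information that bypasses the $\QNC^0$ light-cone and defeats the ``small marginal'' step. Resolving this cleanly requires either restricting the post-processing ancillae to be classically initialized and classically maintained, or bounding their number by $n^{1-\Omega(1)}$ so that the switching lemma can be invoked after absorbing the ancillae into the restriction. Identifying a natural such subclass, compatible with the $\QNC^0$ light-cone decomposition used in the first step, is where the main technical care in proving \Cref{thm:acqncpru} is expected to sit.
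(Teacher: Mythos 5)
There is a genuine gap at the heart of your plan: the switching-lemma step. After the structural reduction, all you know about the measured output distribution is that its marginal on every set of $k=\polylog(n)$ bits is $\delta$-close to uniform (this is what \Cref{col:pru_haar} and \Cref{col:pru_design} give, via the recursive Schmidt decomposition of \Cref{lma:RSD}). H\aa stad's switching lemma cannot be applied to such a distribution in the way you describe: a random restriction fixes a constant fraction of the input bits, far more than $k$, and the behavior of the restricted circuit is only controlled when the fixed bits are drawn from a product distribution. An almost $k$-wise uniform distribution gives you no handle on the joint distribution of $\Omega(m)$ restricted coordinates, so ``pulling the restriction back through the measurements'' is not a valid move. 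Deducing that (almost) $k$-wise independent distributions fool $\AC^0$ was precisely the Linial--Nisan conjecture, which resisted switching-lemma attacks for two decades and was resolved by Braverman using low-degree polynomial approximations. The paper's proof of \Cref{thm:acqncpru} is the analogue of \Cref{thm:acqnc}: it combines the $k$-qubit near-maximal-mixedness of the post-query state (\Cref{col:pru_haar}, \Cref{col:pru_design}) with Braverman's theorem (\Cref{lma:fool}) as a black box, rather than re-deriving an $\AC^0$ collapse via restrictions. Without invoking \Cref{lma:fool} (or reproving something equally strong), your argument does not close.

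Two secondary points. First, your intermediate guarantee is misstated: the $2$-design property is not used to control ``two-register marginals'' that are then ``bootstrapped''; via \Cref{lma:RSD} and the off-diagonal bound \Cref{lma:page_orth}, one directly shows that \emph{every} $k$-qubit subsystem of the full $t$-block output is close to maximally mixed, and there is no bootstrapping step. Second, your diagnosis of the ``subclass'' hedge misplaces the obstruction: the post-processing in $\AC^0\circ\QNC^0$ is classical and acts after measurement, so there are no coherently entangled post-processing ancillae. The actual issue is that any qubits of the circuit on which $U$ is \emph{not} applied play the role of ancillae in the $k$-wise independence argument, and the paper has no natural way to bound their number; the subclass therefore forces $U^{\otimes t}$ to act on all $tn$ qubits, which is why the theorem is stated for circuits on a multiple of $n$ qubits with no leftover wires.
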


\paragraph{Proof techniques.} 
A recurring tool that we use in our proofs is that, over any subsystem, the reduced states of a Haar random state are close to maximally mixed with high probability (see \Cref{col:page_haar}).  Our PRU results require the analogue of this observation for the output of non-adaptive queries to a Haar random unitary (see \Cref{col:pru_haar}). For this, we bound the expected norm of partial traces of off-diagonal terms $\ket{v}\bra{w}$ conjugated by a Haar random unitary (see \Cref{lma:page_orth}).

To prove our results for $\QNC^0$ with $\AC^0$ post-processing, we observe that when the number of ancillae in the pre-processing $\QNC^0$ circuit is small, the resulting output distribution has high entropy, although the output distributions are no longer $k$-wise independent. To deal with this, we prove a generalization of Braverman's result \cite{Braverman08IndependenceFoolsAc0} that $\AC^0$ circuits cannot distinguish $k$-wise independent distributions from uniform, showing that $\AC^0$ circuits also fail to distinguish \emph{$k$-wise indistinguishable} distributions with high min-entropy (see \Cref{lma:fool2}). 

We achieve our pseudoentanglement construction using random phased subspace states, which are superpositions over the orthonormal basis vectors of a subspace with equal amplitudes and random $\pm 1$ phases. We show that such states, when instantiated with a $4$-wise independent function for the random phase, are indistinguishable from Haar random by shallow circuits, and have low von Neumann entropy across any cut (see \Cref{col:pre}).

We believe these technical developments may be of independent interest.

\subsection{Related work}
Quantum computational notions of pseudorandomness were introduced in \cite{Ji2018} and have been studied in a variety of recent works. For instance, many types of pseudoentangled states have been constructed against \textsf{BQP} distinguishers in recent work (for examples, see \cite{aaronson2023quantumpseudoentanglement, bouland2023publickeypseudoentanglementhardnesslearning, bouland2024hardnesslearninggroundstate, akers2024holographicpseudoentanglementcomplexityadscft, giurgicatiron2023pseudorandomnesssubsetstates, jeronimo2024pseudorandompseudoentangledstatessubset, feng2024dynamicspseudoentanglement}). These notions have found a wide range of applications, from cryptography \cite{ananth2022cryptographypseudorandomquantumstates, goulão2024pseudoentanglementnecessaryefipairs, ananth2023pseudorandomfunctionlikequantumstate} to physics \cite{yang2023complexitylearningpseudorandomdynamics, Gu_2024, Feng_2025, bouland2023publickeypseudoentanglementhardnesslearning, chakraborty2025fastcomputationaldeepthermalization}. A number of recent works have also considered the problem of constructing highly efficient pseudorandom unitaries that are implementable in extremely low depth \cite{schuster2025lowdepth,cui2025designs}. 

However, all these constructions rely on complexity theoretic assumptions to obtain pseudorandomness against polynomial-sized quantum circuits. Usually, the assumption has to do with the existence of quantum-secure one way functions \cite{zhandry}, based on computational hardness assumptions like the quantum hardness of the learning with errors (LWE) problem \cite{regev2024latticeslearningerrorsrandom}.

In contrast to this line of work, in our work the adversary is a class of shallow-depth quantum circuits against which we would like our pseudorandom constructions to be secure, and our contribution lies in showing that pseudorandomness against such circuit classes can be obtained without making any complexity theoretic assumptions.

\section{Preliminaries}
\label{sec:prelims}
We first define some commonly used notations. We use $[n]$ to denote the set $\{1,\ldots,n\}$. For two distributions $\cD$ and $\cD'$ over a set $X$ we use $\Abs{\cD-\cD'}_1$ to denote their total variation distance. We denote a random sample $x$ drawn according to $\cD$ by $x\sim\cD$, and we abuse the notation to denote $x$ drawn uniformly from a set $X$ by $x\sim X$. The identity operator on $n$ qubits is denoted as $\Id_n$.

We use $\Norm{\cdot}_p$ to denote the Schatten-$p$ norms of Hermitian operators. Specifically, $\Norm{\cdot}_1$, $\Norm{\cdot}_2$ and $\Norm{\cdot}_\infty$ respectively refers to the trace norm, Frobenius norm and operator norm.

We use the following shorthands for asymptotic growth: $\poly(n)=n^{O(1)}$, $\polylog(n)=\log^{O(1)}n$ and $\negl(n)=n^{-\omega(1)}.$ 

We assume the readers are familiar with the definitions of the following polynomial-sized circuit classes: $\QNC$ for quantum bounded fan-in circuits, $\AC$ for classical circuits with unbounded fan-in $\mathsf{AND}$ gates, and $\QAC$ for quantum circuits with unbounded size $\mathsf{CZ}$ gates (but without unbounded fan-out gates). We use $\QNC^0$, $\AC^0$ and $\QAC^0$ to denote their constant-depth subclasses respectively. Without loss of generality, we assume the bounded fan-in is at most $2$ (otherwise the constants in some of our results will be changed). For the purpose of this paper, we do not require the quantum circuits to compute cleanly: the ancillae could start with any specified state and also could end up in arbitrary states.

Following \cite{Slote24}, we also consider the hybrid circuits with quantum pre-processing and classical post-processing:
\begin{definition}
    For a class of classical circuits $\cF$ and a class of quantum circuits $\cC$, the circuit class $\cF\circ\cC$ consist of all circuits $F\circ C$ that are composed of a quantum circuit $C\in\cC$, followed by computational basis measurements on all output qubits of $C$, and then with some $F\in\cF$ applied on the measurement outcomes. The output distribution of $F\circ C$ with the input state $\rho$ is $F(C(\rho))$.
\end{definition}
The class that we are specifically interested in is $\AC^0\circ\QNC^0$, which is justified in \Cref{subsection: AC0}. It is shown in \cite{Slote24} that parity is hard to compute in this class, assuming either no ancillae or linear size of the $\AC^0$ circuit. It worth noting that we do not know yet whether $\AC^0\circ\QNC^0$ is comparable with $\QAC^0$.

\subsection*{Quantum Pseudorandom Primitives} Below we generalize the commonly used definitions of quantum pseudorandom primitives to those with respect to specific classes of adversaries, rather than simply polynomial-time adversaries. The state and unitary ensembles are all discrete distributions, which we denote by their supports for succinctness.

\begin{definition}
    The state ensemble $\{\ket{\psi_i}\}$ on $n$ qubits is a \emph{pseudorandom state ensemble (PRS)} against a class of quantum circuits $\cC$, if for the $n$-qubit Haar random state $\ket{\psi_\haar}$, every $t=\poly(n)$ and every circuit $C\in\cC$, we have
    \[\Abs{\Ex[i]{C\Paren{\ket{\psi_i}\bra{\psi_i}^{\otimes t}}}-\Ex{C\Paren{\ket{\psi_\haar}\bra{\psi_\haar}^{\otimes t}}}}_1=\negl(n).\]
    Here $C(\rho)$ represents the output distribution with input state $\rho$. 
\end{definition}

\begin{definition}
    We say two state ensembles $\{\ket{\psi_i}\}$ and $\{\ket{\psi_j}\}$ on $n$ qubits demonstrate \emph{pseudoentanglement} against a class of quantum circuits $\cC$, if every for every $t=\poly(n)$ and every circuit $C\in\cC$, we have
    \[\Abs{\Ex[i]{C\Paren{\ket{\psi_i}\bra{\psi_i}^{\otimes t}}}-\Ex[j]{C\Paren{\ket{\psi_j'}\bra{\psi_j'}^{\otimes t}}}}_1=\negl(n),\]
    while across the same bipartition or cut of the qubits, the expected entanglement entropies of $\{\ket{\psi_i}\}$ and $\{\ket{\psi_j'}\}$ are asymptotically different.
\end{definition}

\begin{definition}\label{def:PRU}
    The unitary ensemble $\{U_i\}$ on $n$ qubits is a \emph{pseudorandom unitary ensemble (PRU)} against a class of quantum circuits $\cC$, if for the $n$-qubit Haar random unitary $U_\haar$ and every circuit $C^U\in\cC^U$ on $t=\poly(n)$ input qubits, we have
    \begin{equation}\label{eq:prudef}
        \Abs{\Ex[i]{C^{U_i}(\ket{0^t}\bra{0^t})}-\Ex{C^{U_\haar}(\ket{0^t}\bra{0^t})}}_1=\negl(n).
    \end{equation}
    Here $C^U$ stands for a circuit $C$ which uses $U$ as oracle gates.

    In this work we are concerned with the notion of PRU when $U$ is guaranteed to be applied in parallel, that is, \eqref{eq:prudef} is only required to hold for circuits $C^U$ that apply all their $U$ gates in a single layer. In this case, we say $\{U_i\}$ on $n$ qubits is a \emph{parallel-query} (or \emph{non-adaptive-query}, as defined in \cite{metger2024pseudorandom}) \emph{secure PRU} against $\cC$. We refer to the part of the circuit $C^U$ before the layer of $U$ gates as pre-processing, and the part after as post-processing.
\end{definition}

We will also discuss some potential constructions of unconditional pseudorandom generators against shallow quantum circuits, defined as follows.

\begin{definition}
    The boolean function $G:\zo^\ell\to\zo^n$ is a \emph{$t$-copy pseudorandom generator (PRG)} against a class of quantum circuits $\cC$, if for every circuit $C\in\cC$, we have
    \[\Abs{\Ex[x\sim\zo^\ell]{C\Paren{\ket{G(x)}\bra{G(x)}^{\otimes t}}}-\Ex[x\sim\zo^n]{C\Paren{\ket{x}\bra{x}^{\otimes t}}}}_1=\negl(n).\]
    In particular, $G$ is a \emph{pseudorandom generator} against $\cC$ if it is a $t$-copy pseudorandom generator for every $t=\poly(n)$.
\end{definition}

\subsection*{State and Unitary Designs} Here we recall the definitions and properties of exact and approximate designs, which are statistical notions of pseudorandomness.

\begin{definition}[State $t$-design]
    The state ensemble $\{\ket{\psi_i}\}$ on $n$ qubits is a \emph{$t$-design}, if for the $n$-qubit Haar random unitary $\ket{\psi_\haar}$, we have
    \[\Ex[i]{\ket{\psi_i}\bra{\psi_i}^{\otimes t}}=\Ex{\ket{\psi_\haar}\bra{\psi_\haar}^{\otimes t}}.\]
    We say that $\{\ket{\psi_i}\}$ is an \emph{$\eps$-approximate $t$-design}, if instead we have
    \[\Norm{\Ex[i]{\ket{\psi_i}\bra{\psi_i}^{\otimes t}}-\Ex{\ket{\psi_\haar}\bra{\psi_\haar}^{\otimes t}}}_1\leq\eps.\]
\end{definition}
\begin{lemma}\label{lma:statedesign}
    Let $\{\ket{\psi_i}\}$ be an $\epsilon$-approximate state $2$-design on $n$ qubits, and let $B$ be a subsystem of containing $n-k$ qubits. We have
    \[\Ex[i]{\Norm{\Tr_B[\ket{\psi_i}\bra{\psi_i}]}_2^2}\leq \Ex{\Norm{\Tr_B[\ket{\psi_\haar}\bra{\psi_\haar}]}_2^2} + \eps.\]
\end{lemma}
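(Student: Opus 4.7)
The plan is to reduce the claim to the defining property of an $\eps$-approximate state $2$-design via the standard swap trick. Let $A$ denote the $k$-qubit subsystem complementary to $B$, and let $S_A$ be the swap operator acting on two copies of $A$. The key identity is that for any pure state $\ket{\psi}$ on $AB$,
\[\Norm{\Tr_B[\ket{\psi}\bra{\psi}]}_2^2 = \Tr\Bigbrac{\Tr_B[\ket{\psi}\bra{\psi}]^{\,2}} = \Tr\Bigbrac{\paren{\ket{\psi}\bra{\psi}^{\otimes 2}} \cdot \paren{S_A\otimes\Id_{B}\otimes\Id_{B}}},\]
which follows from $\Tr[\rho_A^2]=\Tr[(\rho_A\otimes\rho_A)S_A]$ and the fact that $\rho_A\otimes\rho_A=\Tr_{BB}[\ket{\psi}\bra{\psi}^{\otimes 2}]$.

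Next I would take the expectation of both sides under $i$ in the design and under the Haar measure, obtaining two expressions of the form $\Tr[\rho_2\cdot M]$, where $M := S_A\otimes\Id_B\otimes\Id_B$ is the same bounded operator in both cases, and $\rho_2$ is either $\Ex_i[\ket{\psi_i}\bra{\psi_i}^{\otimes 2}]$ or $\Ex[\ket{\psi_\haar}\bra{\psi_\haar}^{\otimes 2}]$. Subtracting and applying the matrix Hölder inequality gives
\[\Abs{\Ex[i]{\Norm{\Tr_B[\ket{\psi_i}\bra{\psi_i}]}_2^2} - \Ex{\Norm{\Tr_B[\ket{\psi_\haar}\bra{\psi_\haar}]}_2^2}} \leq \Norm{M}_\infty \cdot \Norm{\Ex[i]{\ket{\psi_i}\bra{\psi_i}^{\otimes 2}} - \Ex{\ket{\psi_\haar}\bra{\psi_\haar}^{\otimes 2}}}_1.\]

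The first factor is $\Norm{S_A}_\infty=1$ since the swap is unitary, and the second factor is at most $\eps$ by the definition of an $\eps$-approximate $2$-design. This directly yields the stated one-sided bound (in fact a two-sided bound).

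There is no real obstacle here; everything is routine. The only thing to be careful about is keeping the operator $M$ bounded independently of $n$ and $k$ (which is automatic because $S_A$ is unitary on the two-copy $A$-subsystem), so that the approximation error is controlled by $\eps$ alone without any dimension-dependent prefactor.
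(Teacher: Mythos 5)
Your proof is correct and follows essentially the same route as the paper: the operator you call $M=S_A\otimes\Id_B\otimes\Id_B$ is exactly the paper's swap operator $R$, and both arguments conclude via H\"older's inequality with $\Norm{R}_\infty=1$ and the trace-norm bound from the approximate $2$-design definition. No gaps.
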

\begin{proof}
Denote the complementary subsystem to $B$ by $A$, which contains $k$ qubits. Define the swap operator $R$ by the identity
\[\Tr[\Tr_B\rho_1\cdot\Tr_B\rho_2]
=\Tr\left[\left(\rho_1\otimes\rho_2\right)\cdot R\right],\]
where one can check that
\[R = \sum_{x,y\in\zo^k}\ket{x}\bra{y}_A\ket{y}\bra{x}_{A'}\otimes\Id_{BB'}.\]
Here $A'$ and $B'$ are identical copies of $A$ and $B$ respectively.

Since $R$ is a permutation matrix, the operator norm of $R$ is exactly $1$. As a result, by H\"{o}lder's inequality we have
\begin{align*}
    &\ \Ex[i]{\Norm{\Tr_B[\ket{\psi_i}\bra{\psi_i}]}_2^2}-\Ex{\Norm{\Tr_B[\ket{\psi_\haar}\bra{\psi_\haar}]}_2^2}  \\
    =&\ \Tr\left[\left(\Ex[i]{\ket{\psi_i}\bra{\psi_i}^{\otimes 2}}-\Ex{\ket{\psi_\haar}\bra{\psi_\haar}^{\otimes 2}}\right)\cdot R\right] \\
    =&\ \Norm{\Ex[i]{\ket{\psi_i}\bra{\psi_i}^{\otimes t}}-\Ex{\ket{\psi_\haar}\bra{\psi_\haar}^{\otimes t}}}_1\cdot\Norm{R}_\infty \leq \eps. \qedhere
\end{align*}
\end{proof}

\begin{definition}[Unitary $t$-design]\label{definition_tdesign}
    Let \(\mathcal{D}=\{U_i\}\) be an ensemble of \(n\)-qubit unitaries. The unitary ensemble $\mathcal{D}$  is a \emph{unitary $t$-design} if, for the $n$-qubit Haar‐random unitary $U_{\haar}$, we have
    \[\Ex[i]{U_i^{\otimes t}\otimes (U_i^\dagger)^{\otimes t}}=
    \Ex{U_{\haar}^{\otimes t}\otimes (U_{\haar}^\dagger)^{\otimes t}}.\]
    Define the $t$-th moment channels as
    \[\Phi_\cD^{(t)}(\rho)=\Ex[i]{U_i^{\otimes t}\cdot\rho\cdot(U_i^\dagger)^{\otimes t}},\qquad
    \Phi_\haar^{(t)}(\rho)=\Ex{U_{\haar}^{\otimes t}\cdot\rho\cdot(U_{\haar}^\dagger)^{\otimes t}},\]
    we say that $\mathcal{D}$ is an \emph{$\eps$-approximate unitary $t$-design}, if for all operators $\rho$ with $\Norm{\rho}_1\leq 1$ we have
    \[\Norm{\Phi_\cD^{(t)}(\rho)-\Phi_\haar^{(t)}(\rho)}_1\leq\eps.\]
\end{definition}

\begin{lemma}
\label{lma:approxdesign}
    Let $\{U_i\}$ be an $\epsilon$-approximate unitary $2$-design on $n$ qubits, and let $B$ be a subsystem of the $n$ qubits. For every two $n$-qubit states $\ket{v}$ and $\ket{w}$, we have
    \[
    \Ex[i]{\Norm{\Tr_B[U_i\ket{v}\bra{w}U_i^\dagger]}_2^2}\leq \Ex{\Norm{\Tr_B[U_\haar\ket{v}\bra{w}U_\haar^\dagger]}_2^2} + \eps.
    \]
\end{lemma}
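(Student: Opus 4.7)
The plan is to mimic the proof of \Cref{lma:statedesign}, with two adaptations needed for this setting: $\ket{v}\bra{w}$ is non-Hermitian when $v\neq w$, so the dagger in the Frobenius norm produces two \emph{distinct} tensor factors; and we must pull the conjugation by $U$ out of the trace to expose the $2$-fold moment channel $\Phi_\cD^{(2)}$ of the unitary design instead of the $2$-fold state average.

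First I would set $\sigma_i := \Tr_B[U_i\ket{v}\bra{w}U_i^\dagger]$ and use $\Norm{\sigma_i}_2^2 = \Tr[\sigma_i^\dagger\sigma_i]$, noting that $\sigma_i^\dagger = \Tr_B[U_i\ket{w}\bra{v}U_i^\dagger]$. Applying the same swap identity on subsystem $A$ that was used in \Cref{lma:statedesign}, I obtain
\[\Norm{\sigma_i}_2^2 = \Tr\bigl[U_i^{\otimes 2}(\ket{w}\bra{v}\otimes\ket{v}\bra{w})(U_i^\dagger)^{\otimes 2}\cdot R_A\bigr].\]
Taking the expectation over $i$ on both sides yields $\Tr[\Phi_\cD^{(2)}(\rho)\cdot R_A]$ where $\rho := \ket{w}\bra{v}\otimes\ket{v}\bra{w}$, and the identical calculation under the Haar measure produces $\Tr[\Phi_\haar^{(2)}(\rho)\cdot R_A]$.

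Subtracting the two expressions and applying H\"{o}lder's inequality as in \Cref{lma:statedesign} (with $\Norm{R_A}_\infty = 1$) reduces the task to bounding $\Norm{\Phi_\cD^{(2)}(\rho)-\Phi_\haar^{(2)}(\rho)}_1$. I would then observe that $\Norm{\rho}_1 = \Norm{\ket{w}\bra{v}}_1^2 = 1$, since a rank-one outer product of two unit vectors has trace norm equal to the product of their norms. The $\eps$-approximate $2$-design property from \Cref{definition_tdesign} then directly delivers the desired $\eps$ bound.

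I do not foresee any genuine obstacle: the computation is essentially identical to the one in \Cref{lma:statedesign}. The only point that requires care is the bookkeeping of daggers when expanding $\sigma_i^\dagger\sigma_i$, which forces the two tensor slots of $\rho$ to hold the distinct operators $\ket{w}\bra{v}$ and $\ket{v}\bra{w}$ rather than two copies of the same operator as in the Hermitian case. Once this is correctly tracked, verifying $\Norm{\rho}_1 = 1$ and invoking the design definition finishes the proof.
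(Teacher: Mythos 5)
Your proposal is correct and follows essentially the same route as the paper's proof: the swap-operator identity on subsystem $A$, expressing the averaged quantity through the $2$-fold moment channel, and H\"{o}lder's inequality with $\Norm{R}_\infty=1$ and a unit-trace-norm input to the channel. Your dagger bookkeeping, which places $\ket{w}\bra{v}\otimes\ket{v}\bra{w}$ in the two tensor slots rather than $\ket{v}\bra{w}^{\otimes 2}$, is in fact the more careful rendering of the non-Hermitian case, and it changes nothing in the final bound since that operator still has trace norm $1$.
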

\begin{proof}
Denote the complementary subsystem to $B$ by $A$, and assume that $A$ contains $k$ qubits. Define the operator $R$ the same way as the proof above for \Cref{lma:statedesign}, and we have
\[\Norm{\Tr_B(U|v\rangle\langle w|U^\dagger)}_2^2 
    =\Tr\left[(U\ket{v}\bra{w}U^\dagger)^{\otimes 2}\cdot R\right].\]
Similarly, since $\Norm{R}_\infty=1$ and $\Norm{\ket{v}\bra{w}^{\otimes2}}_1=1$, by H\"{o}lder's inequality we have
\begin{align*}
    &\ \Ex[i]{\Norm{\Tr_B[U_i\ket{v}\bra{w}U_i^\dagger]}_2^2}-\Ex{\Norm{\Tr_B[U_\haar\ket{v}\bra{w}U_\haar^\dagger]}_2^2}  \\
    =&\ \Tr\left[\left(\Phi_\cD^{(t)}(\ket{v}\bra{w}^{\otimes2})-\Phi_\haar^{(t)}(\ket{v}\bra{w}^{\otimes2})\right)\cdot R\right] \\
    =&\ \Norm{\Phi_\cD^{(t)}(\ket{v}\bra{w}^{\otimes2})-\Phi_\haar^{(t)}(\ket{v}\bra{w}^{\otimes2})}_1\cdot\Norm{R}_\infty \leq \eps. \qedhere
\end{align*}
\end{proof}

\subsection*{Schmidt Decomposition} 
We will need several facts about the Schmidt decomposition (listed below), whose proofs can be found in e.g. \cite{NielsenChuang}.
\begin{definition}
    Let $\cH_1,\cH_2$ be two Hilbert spaces, and let $x\in\cH_1\otimes\cH_2$. If we write
    \begin{equation}\label{eq:tensorprod}
        x=\sum_{i=1}^r\alpha_i\cdot v_i\otimes w_i,
    \end{equation}
    where $\alpha_i\in\mathbb{C}$, $v_i\in\cH_1$ and $w_i\in\cH_2$, we call \eqref{eq:tensorprod} a \emph{tensor product decomposition} of $x$. Furthermore, if both $\{v_i\}$ and $\{w_i\}$ are orthonormal and each $\alpha_i$ is non-zero, we call \eqref{eq:tensorprod} a \emph{Schmidt decomposition} and $r$ the \emph{Schmidt rank} of $x$ with respect to $\cH_1$ and $\cH_2$.
\end{definition}
\begin{lemma}
\label{schmidt rank}
    Let $\cH_1,\cH_2$ be two Hilbert spaces, and let $x\in\cH_1\otimes\cH_2$. Then:
    \begin{itemize}
        \item In any tensor product decomposition of $x$ as in \eqref{eq:tensorprod}, the number of terms $r$ is at least the Schmidt rank of $x$;
        \item Let $\{v_i\}$ and $\{w_j\}$ be two orthonormal basis for $\cH_1$ and $\cH_2$, respectively. If we write
        \[x=\sum_{i,j}\alpha_{ij}\cdot v_i\otimes w_j,\]
        then the Schmidt rank of $x$ is exactly the rank of the matrix with entry $\alpha_{ij}$ at the $i$-th row and $j$-th column.
        \item The von Neumann entanglement entropy of $x$, with respect to the subsystems $\cH_1$ and $\cH_2$, is at most $\log_2 r$ where $r$ is the Schmidt rank of $x$.
    \end{itemize}
\end{lemma}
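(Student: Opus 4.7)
The plan is to reduce all three items to standard linear-algebraic facts about the singular value decomposition (SVD), after which each item becomes a short computation. I would fix orthonormal bases $\{e_i\}$ for $\cH_1$ and $\{f_j\}$ for $\cH_2$, and for $x\in\cH_1\otimes\cH_2$ write $x=\sum_{i,j}\alpha_{ij}\,e_i\otimes f_j$; let $M=(\alpha_{ij})$ denote the coefficient matrix and $\rho=\mathrm{rank}(M)$ its matrix rank.

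I would start with the second bullet, which serves as the workhorse. Apply SVD to write $M=U\Sigma V^{\dagger}$ where $U,V$ are unitary and $\Sigma$ is diagonal with exactly $\rho$ strictly positive entries $\sigma_1,\ldots,\sigma_\rho$. Setting $\tilde v_k=\sum_i U_{ik}e_i$ and $\tilde w_k=\sum_j \overline{V_{jk}}\,f_j$ produces orthonormal families (since $U,V$ are unitary) satisfying $x=\sum_{k=1}^{\rho}\sigma_k\,\tilde v_k\otimes\tilde w_k$, which is a Schmidt decomposition of length exactly $\rho$. Conversely, any Schmidt decomposition of $x$ with $r$ terms expresses $M$ in the chosen bases as a sum of $r$ nonzero rank-one matrices whose column and row factors are each orthonormal, so $\rho\le r$. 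Thus the Schmidt rank equals $\rho$, and in particular does not depend on the chosen orthonormal bases.

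For the first bullet, given any tensor product decomposition $x=\sum_{k=1}^{r}\alpha_k v_k\otimes w_k$ (with no orthonormality assumed), I would expand each $v_k$ and $w_k$ in the fixed orthonormal bases and collect terms to obtain $M$ as a sum of at most $r$ rank-one matrices. Hence $\rho\le r$, and combining with the second bullet gives Schmidt rank at most $r$. For the third bullet, assume $\|x\|=1$ and start from a Schmidt decomposition $x=\sum_{k=1}^{s}\alpha_k v_k\otimes w_k$ where $s$ is the Schmidt rank. A direct computation yields $\Tr_{\cH_2}[\ket{x}\bra{x}]=\sum_k |\alpha_k|^2\ket{v_k}\bra{v_k}$, so the reduced state is diagonal in an orthonormal basis with eigenvalues $|\alpha_k|^2$ summing to $1$. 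Its von Neumann entropy is therefore the Shannon entropy $-\sum_k|\alpha_k|^2\log_2|\alpha_k|^2$ of a probability distribution supported on $s\le r$ outcomes, which by concavity (Jensen's inequality) is at most $\log_2 s\le\log_2 r$.

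There is no substantive obstacle, since this is a standard textbook result; the main design choice is ordering. I would prove the second item first because uniqueness of the Schmidt rank and its basis-independence both fall out of SVD immediately, after which the first and third items become essentially one-line corollaries. The only mild subtlety is ensuring that the Schmidt decomposition produced by SVD satisfies the definitional requirement that all coefficients be nonzero and the $\{\tilde v_k\},\{\tilde w_k\}$ be orthonormal as families (rather than merely as the first $\rho$ columns of unitary matrices), which is handled simply by discarding the zero singular values and keeping only the columns of $U,V$ indexed by $k\le\rho$.
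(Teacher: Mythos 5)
Your proof is correct; the paper itself gives no argument for this lemma and simply defers to Nielsen--Chuang, and your SVD-based derivation is exactly the standard textbook proof being referenced. The one point worth making explicit is that orthonormality of both families together with the nonvanishing of the coefficients forces the coefficient matrix of any Schmidt decomposition to have rank exactly $r$ (not just at least $\rho$), which is what makes the Schmidt rank well-defined in the first place.
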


\section{Unconditional pseudorandomness from 2-designs}
\label{section: pseudorandomness from designs}
In this section, we focus on state designs which exploit the locality properties of shallow circuits in order to achieve unconditional pseudorandomness. At a high level, this resembles a quantum analog of small bias distributions (e.g.\ \cite{NaorN93}), which can fool low-degree polynomials. We will begin with some facts about Haar random states, which relate the size of the subsystem with entanglement entropy, and allow us to approximate small subsystems with maximally mixed states.

\begin{lemma}[\cite{lubkin1978entropy,liu2018entanglement}]\label{lma:page}
    Let $\ket{\psi_\haar}$ be an $n$-qubit Haar random state, and let $\rho_A$ be the reduced density matrix of $\ket{\psi_\haar}\bra{\psi_\haar}$ on the subsystem $A$ with $|A|=k$ qubits. Then
    \[\ex{\Tr(\rho_A^2)}=\frac{2^k+2^{n-k}}{2^n+1}.\]
\end{lemma}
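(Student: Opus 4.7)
The plan is to evaluate $\ex{\Tr(\rho_A^2)}$ via the swap trick, in exactly the spirit of the proofs of \Cref{lma:statedesign} and \Cref{lma:approxdesign} above. Let $A'$ and $B'$ be identical copies of $A$ and $B$, and let $S_A$ be the swap operator between $A$ and $A'$. Using the identity $\Tr(\rho_A^2)=\Tr\brac{(\rho_A\otimes\rho_A)\cdot S_A}$ together with the factorization $\rho_A\otimes\rho_A = \Tr_{BB'}\brac{\ket{\psi_\haar}\bra{\psi_\haar}^{\otimes 2}}$, I would first rewrite
$$\Tr(\rho_A^2) = \Tr\Brac{\ket{\psi_\haar}\bra{\psi_\haar}^{\otimes 2}\cdot\paren{S_A\otimes\Id_{BB'}}}.$$

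Next, I would take the expectation over the Haar measure and invoke the standard second moment identity $\Ex{\ket{\psi_\haar}\bra{\psi_\haar}^{\otimes 2}} = (\Id+S)/(2^n(2^n+1))$, where $S=S_A\otimes S_B$ is the swap on the full two copies. This reduces the problem to evaluating the two scalars $\Tr\brac{S_A\otimes\Id_{BB'}}$ and $\Tr\brac{S\cdot(S_A\otimes\Id_{BB'})}$. Using the elementary facts that $\Tr[S_X]=2^{|X|}$ for any subsystem $X$ and that $S_A^2=\Id_{AA'}$, the first trace evaluates to $2^k\cdot 2^{2(n-k)}=2^{2n-k}$ and the second to $2^{2k}\cdot 2^{n-k}=2^{n+k}$. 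Summing these and cancelling a common factor of $2^n$ from numerator and denominator then yields the claimed value $(2^k+2^{n-k})/(2^n+1)$.

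There is no genuine obstacle in this derivation: once the swap trick and the Haar second moment identity are in place, the proof is a short bookkeeping calculation on tensor factors, and the only care required is to keep track of which swap acts on which pair of registers. It is worth noting that the argument only uses the $2$-design property of the Haar measure, so the same identity holds verbatim for any exact state $2$-design and up to an additive $\eps$ error for any $\eps$-approximate one---an observation that is essentially already the content of \Cref{lma:statedesign}.
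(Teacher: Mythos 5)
Your derivation is correct: the swap-trick identity, the second-moment formula $\Ex{\ket{\psi_\haar}\bra{\psi_\haar}^{\otimes 2}}=(\Id+S)/(2^n(2^n+1))$, and both trace evaluations ($2^{2n-k}$ and $2^{n+k}$) check out, and their sum $2^n(2^k+2^{n-k})$ gives exactly the claimed value. Note that the paper does not prove this lemma at all---it is imported from \cite{lubkin1978entropy,liu2018entanglement}---so your self-contained computation is a welcome addition rather than a deviation; it is also consistent in spirit with the swap-operator arguments the paper does spell out in \Cref{lma:statedesign} and \Cref{lma:approxdesign}, and your closing observation that the identity transfers to exact and approximate $2$-designs is precisely what \Cref{lma:statedesign} records.
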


\begin{corollary}\label{col:page_haar}
    Let $\ket{\psi_\haar}$ be an $n$-qubit Haar random state. For any $t\geq 1$, let $\rho_A$ be the reduced density matrix of $\ket{\psi_\haar}\bra{\psi_\haar}^{\otimes t}$ over a subsystem $A$. Then for every $\delta>0$, with probability at least $1-n^{O(k)}\cdot2^{-n/2}\cdot\delta^{-1}$ over $\ket{\psi_\haar}$, it holds for all $A$ with $|A|=k$ qubits that
    \[\Norm{\rho_A-\frac{1}{2^k}\Id_k}_1\leq \delta.\]
\end{corollary}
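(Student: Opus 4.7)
The plan is to first apply \Cref{lma:page} to single-copy reduced density matrices on a fixed subset of qubits, convert the second-moment estimate into a Markov bound on the trace distance, and then lift the result to all subsystems of the $t$-copy state via a union bound combined with a tensor decomposition.

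I would start with a single copy and a fixed subset $A' \subseteq [n]$ of size $j \leq k$. The identity $\Norm{\rho_{A'} - \Id_j/2^j}_2^2 = \Tr(\rho_{A'}^2) - 2^{-j}$ combined with \Cref{lma:page} gives
\[
\E\Brac{\Norm{\rho_{A'} - \Id_j/2^j}_2^2} \;=\; \frac{2^{2j}-1}{2^j(2^n+1)} \;\leq\; 2^{j-n}.
\]
Applying $\Norm{X}_1 \leq \sqrt{2^j}\Norm{X}_2$ for operators supported on $2^j$ dimensions together with Cauchy--Schwarz yields $\E[\Norm{\rho_{A'} - \Id_j/2^j}_1] \leq 2^{j-n/2}$, and Markov's inequality then gives $\Norm{\rho_{A'} - \Id_j/2^j}_1 > \eta$ with probability at most $2^{k-n/2}/\eta$. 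A union bound over all $\sum_{j\leq k}\binom{n}{j} = n^{O(k)}$ subsets $A' \subseteq [n]$ with $|A'| \leq k$ shows that, except with probability $n^{O(k)} \cdot 2^{-n/2}/\eta$ (absorbing the $2^k$ into $n^{O(k)}$), every such $A'$ has its single-copy reduced state within trace distance $\eta$ of maximally mixed.

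To handle $t$ copies, I would observe that any subsystem $A$ of size $k$ in the $tn$-qubit product state $\ket{\psi_\haar}\bra{\psi_\haar}^{\otimes t}$ is a disjoint union $A = \bigsqcup_{i=1}^t A_i$, where $A_i$ is a subset of size $k_i$ in the $i$-th copy and $\sum_i k_i = k$. Since the global state is a tensor product, the reduced state factors as $\rho_A = \bigotimes_i \rho_{A_i}^{(i)}$, where each $\rho_{A_i}^{(i)}$ is a reduced density matrix of the same Haar random $n$-qubit state. Subadditivity of trace distance under tensor products (a one-line telescoping argument using $\Norm{\rho}_1 = 1$) then gives
\[
\Norm{\rho_A - \Id_k/2^k}_1 \;\leq\; \sum_{i=1}^t \Norm{\rho_{A_i}^{(i)} - \Id_{k_i}/2^{k_i}}_1 \;\leq\; t\eta.
\]
Setting $\eta = \delta/t$ and absorbing the $t = \poly(n)$ factor into $n^{O(k)}$ yields the claimed failure probability $n^{O(k)}\cdot 2^{-n/2}/\delta$.

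The main subtlety is that the $t$ copies are identical rather than independent, so \Cref{lma:page} cannot be applied to the $tn$-qubit global state directly; the workaround is to exploit the tensor-product structure of $\ket{\psi}^{\otimes t}$ and reduce everything to restrictions of a single Haar random state. A secondary loss is the conversion from Schatten-$2$ to Schatten-$1$ norm, which costs a factor of $\sqrt{2^k}$ and is responsible for the $2^{-n/2}$ rate rather than the $2^{-n}$ one might hope for.
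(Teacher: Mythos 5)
Your proposal is correct and follows essentially the same route as the paper: a second-moment bound from \Cref{lma:page}, norm conversion and Markov, a union bound over all $n^{O(k)}$ small subsets of a single copy, and a telescoping tensor-product argument for the $t$-copy state. The only cosmetic difference is that you split the error as $\delta/t$ and absorb $t=\poly(n)$, whereas the paper splits it as $\delta/k$ (since at most $k$ of the blocks $A_i$ are nonempty), which avoids any dependence on $t$ altogether.
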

\begin{proof}
    First consider the case when $A$ is fully contained in one copy of the Haar random state. In this case from \Cref{lma:page} we have
    \begin{align*}
       \Ex{\Norm{\rho_A-\frac{1}{2^k}\Id_k}_1}
       &\leq 2^{k/2}\cdot\Ex{\Norm{\rho_A-\frac{1}{2^k}\Id_k}_2}\\
       &\leq2^{k/2}\cdot\Ex{\Norm{\rho_A-\frac{1}{2^k}\Id_k}_2^2}^{1/2} \\
       &=2^{k/2}\cdot\Ex{\Tr(\rho_A^2)-2^{-k}}^{1/2}  \\
       &=2^{k/2}\cdot\Paren{\frac{2^k-2^{-k}}{2^n+1}}^{1/2} \\
       &\leq 2^{k-n/2}.
    \end{align*}
    By Markov's inequality, we know that $\Norm{\rho_A-\frac{1}{2^k}\bbI_k}_1\leq\delta/k$ holds with probability at least $1-k\cdot2^{k-n/2}\cdot\delta^{-1}$. By a union bound, this holds for all $A$ with $|A|\leq k$ with probability at least $1-n^{O(k)}\cdot2^{-n/2}\cdot\delta^{-1}$.
    
    When $A$ consists qubits from at most $k$ different copies, we denote the subsystems as $A=A_1\sqcup A_2\sqcup\cdots$ with $|A_i|=k_i$. Since the copies are unentangled with each other, we have $\rho_A=\rho_{A_1}\otimes\rho_{A_2}\otimes\cdots$, and thus
    \begin{align*}
        \Norm{\rho_A-\frac{1}{2^k}\Id_k}_1
        &\leq\sum_i\Norm{\rho_{A_1}\otimes\cdots\otimes\rho_{A_i}-\rho_{A_1}\otimes\cdots\otimes\rho_{A_{i-1}}\otimes\frac{1}{2^{k_i}}\Id_{k_i}}_1 \\
        &=\sum_i\Norm{\rho_{A_i}-\frac{1}{2^{k_i}}\Id_{k_i}}_1 \\
        &\leq \delta
    \end{align*}
    with probability at least $1-n^{O(k)}\cdot2^{-n/2}\cdot\delta^{-1}$.
\end{proof}

Notice that the proof of \Cref{col:page_haar} only uses the second moment properties of $\ket{\psi_\haar}$, and therefore the conclusions immediately hold for approximate 2-designs with negligible error as well.
\begin{corollary}\label{col:page_design}
    Let $\{\ket{\psi_i}\}$ be an $\eps$-approximate $2$-design on $n$ qubits. For any $t\geq 1$, let $\rho_A$ be the reduced density matrix of $\ket{\psi_i}\bra{\psi_i}^{\otimes t}$ over a subsystem $A$. Then for every $\delta>0$, with probability at least $1-n^{O(k)}\cdot(\eps+2^{-n})^{1/2}\cdot\delta^{-1}$ over $i$, it holds for all $A$ with $|A|=k$ qubits that
    \[\Norm{\rho_A-\frac{1}{2^k}\Id_k}_1\leq\delta.\]
\end{corollary}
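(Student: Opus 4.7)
The plan is to mimic the proof of \Cref{col:page_haar} essentially verbatim, substituting the 2-design second-moment bound from \Cref{lma:statedesign} for the exact Haar value given by \Cref{lma:page}. As the remark preceding the statement indicates, the Haar argument only ever used $\Ex{\Tr(\rho_A^2)}$, and \Cref{lma:statedesign}---after observing the identity $\Tr(\rho_A^2)=\Norm{\Tr_B[\ket{\psi}\bra{\psi}]}_2^2$---bounds exactly this quantity for a 2-design up to an additive $\eps$.

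First I would treat the single-copy case $t=1$. For a fixed subsystem $A$ of size $|A|=k$ with complement $B$, write $\rho_A=\Tr_B[\ket{\psi_i}\bra{\psi_i}]$ and repeat the chain of inequalities---trace-to-Frobenius conversion with factor $2^{k/2}$, followed by Jensen's inequality---to obtain
\[\Ex[i]{\Norm{\rho_A-\tfrac{1}{2^k}\Id_k}_1}\leq 2^{k/2}\cdot\Ex[i]{\Tr(\rho_A^2)-2^{-k}}^{1/2}.\]
Combining \Cref{lma:statedesign} with \Cref{lma:page} bounds the inner expectation by $\frac{2^k+2^{n-k}}{2^n+1}-2^{-k}+\eps\leq 2^{k-n}+\eps$, which gives
\[\Ex[i]{\Norm{\rho_A-\tfrac{1}{2^k}\Id_k}_1}\leq 2^{k/2}\Paren{2^{k-n}+\eps}^{1/2}\leq 2^{k}\Paren{2^{-n}+\eps}^{1/2}.\]
Markov's inequality then says $\Norm{\rho_A-2^{-k}\Id_k}_1\leq\delta/k$ fails with probability at most $k\cdot 2^k(\eps+2^{-n})^{1/2}/\delta$, and a union bound over the at most $\binom{n}{k}\leq n^k$ size-$k$ subsystems supplies the $n^{O(k)}$ prefactor.

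For the general $t\geq 1$ case, I would decompose $A=A_1\sqcup A_2\sqcup\cdots$ where $A_j$ collects the qubits of $A$ lying in the $j$-th copy of $\ket{\psi_i}$, with $|A_j|=k_j$ and $\sum_j k_j\leq k$. Since the $t$ copies are unentangled, $\rho_A=\bigotimes_j\rho_{A_j}$, and the telescoping triangle-inequality argument from \Cref{col:page_haar} transfers verbatim, reducing the claim to the single-copy bound on each $\rho_{A_j}-2^{-k_j}\Id_{k_j}$. The union bound then absorbs the additional combinatorial factor counting how $A$ may split across the $t$ copies, still inside the $n^{O(k)}$ prefactor.

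I do not expect a genuine obstacle: the statement is essentially the remark after \Cref{col:page_haar} made rigorous. The only bookkeeping is to verify that the $2^{k/2}$ blow-up from the norm conversion and the $k$ factor from the inner Markov step both get absorbed into $n^{O(k)}$, which is automatic in the interesting regime $k=O(\log n)$ where the bound is non-vacuous.
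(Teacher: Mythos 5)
Your proposal is correct and follows the paper's proof essentially verbatim: \Cref{lma:statedesign} combined with \Cref{lma:page} bounds $\Ex[i]{\Tr(\rho_A^2)-2^{-k}}$ by $\eps+2^{k-n}$, giving $\Ex[i]{\Norm{\rho_A-2^{-k}\Id_k}_1}\leq 2^k(\eps+2^{-n})^{1/2}$, after which the Markov, union-bound, and multi-copy tensor-factorization steps are carried over from \Cref{col:page_haar} exactly as you describe. No gaps.
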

\begin{proof}
    By \Cref{lma:statedesign}, the approximate design property implies that
    \[\Ex{\Tr(\rho_A^2)-2^{-k}}^{1/2}\leq (\eps+2^{k-n})^{1/2},\]
    and thus
    \[\Ex{\Norm{\rho_A-\frac{1}{2^k}\Id_k}_1}\leq 2^k\cdot(\eps+2^{-n})^{1/2}.\]
    The rest of the proof follows the same arguments from \Cref{col:page_haar}.
\end{proof}

\subsection{Pseudorandomness against \texorpdfstring{$\QNC^0$}{QNC0}}
As a warm-up, we will use \Cref{col:page_haar} and \Cref{col:page_design} to show that any $2$-design is indistinguishable to a Haar random state, with respect to any $\QNC^{0}$ distinguisher. 

\begin{theorem}
\label{thm:prs}
Let $\{\ket{\psi_i}\}$ be an $\eps$-approximate $2$-design on $n$ qubits for some $\eps = \negl(n)$. Then $\{\ket{\psi_i}\}$ is a PRS against $\QNC$ circuits with depth
\[d=\min(\log\log(1/\eps),\log n)-\log\log n-\omega(1)\]
and a single-bit output. In particular, when $\eps\leq 2^{-\Omega(n)}$, $\{\ket{\psi_i}\}$ is PRS against $\QNC$ circuits up to depth $d=\log n-\log\log n-\omega(1)$.
\end{theorem}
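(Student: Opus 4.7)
The proof is a light-cone argument combined with the reduced-state concentration from Corollaries \ref{col:page_haar} and \ref{col:page_design}. Since $C$ is a depth-$d$ $\QNC$ circuit with bounded fan-in $2$ producing a single output bit, the backward light cone of that bit contains at most $k:=2^d$ qubits of the input register. Let $A$ denote the subset of at most $k$ qubits within the $tn$-qubit input $\ket{\psi}^{\otimes t}$ that lies in this light cone; ancillae in the light cone are initialized to $\ket{0}$ and hence contribute nothing to the input dependence. A standard light-cone factorization then shows that the circuit-plus-measurement channel $C$ acts only on $A$: up to the fixed ancillary state, it equals $C'\circ\Tr_{A^c}$ for some channel $C'$.

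First, I would invoke the data processing inequality to bound the distinguishing advantage by the trace distance of the reduced states:
\[\bigl|\Pr[C(\ket{\psi_i}\bra{\psi_i}^{\otimes t})=1]-\Pr[C(\ket{\psi_\haar}\bra{\psi_\haar}^{\otimes t})=1]\bigr|\leq\Bignorm{\Ex[i]{\rho_{i,A}}-\Ex{\rho_{\haar,A}}}_1,\]
where $\rho_{i,A}=\Tr_{A^c}[\ket{\psi_i}\bra{\psi_i}^{\otimes t}]$ and analogously for $\rho_{\haar,A}$. Inserting $2^{-k}\Id_k$ via the triangle inequality splits this into two halves, each of which I would control via the appropriate corollary: Corollary \ref{col:page_design} gives $\|\rho_{i,A}-2^{-k}\Id_k\|_1\leq\delta$ except with probability $\gamma_{\text{des}}:=n^{O(k)}(\eps+2^{-n})^{1/2}/\delta$, while Corollary \ref{col:page_haar} gives the analogous statement with failure probability $\gamma_{\haar}:=n^{O(k)}\cdot 2^{-n/2}/\delta$. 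Using the trivial bound $\|\rho-\sigma\|_1\leq 2$ on the failure events, the overall distinguishing advantage is at most $2\delta+2\gamma_{\text{des}}+2\gamma_{\haar}$.

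Setting $\delta=1/\poly(n)$, the two failure-probability constraints become $k\log n\leq n/2-\omega(\log n)$ and $k\log n\leq\tfrac12\log(1/\eps)-\omega(\log n)$ respectively. Taking logarithms and using $k=2^d$ yields $d\leq\log n-\log\log n-\omega(1)$ and $d\leq\log\log(1/\eps)-\log\log n-\omega(1)$. The minimum of these two is exactly the depth bound in the theorem; in particular, when $\eps\leq 2^{-\Omega(n)}$ the Haar-side bound dominates, recovering the stated special case.

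The key conceptual step is the clean light-cone factorization, which relies crucially on the single-bit output: with more output bits, $A$ would be a union of light cones, and both concentration bounds degrade rapidly in $|A|$. The main bookkeeping point is that $A$ may span several of the $t$ copies of $\ket{\psi_i}$, but since the copies are unentangled, Corollary \ref{col:page_design} already handles this via the product structure $\rho_A=\rho_{A_1}\otimes\rho_{A_2}\otimes\cdots$ used in its proof, with the number of candidate subsets absorbed into the $n^{O(k)}$ union-bound factor (using $t=\poly(n)$). Beyond this routine accounting, I do not anticipate any significant obstacle.
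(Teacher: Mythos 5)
Your proposal follows essentially the same route as the paper: a backward light-cone bound of $k=2^d$ qubits on the single output bit, contractivity of the trace distance under the induced channel, and the concentration of the reduced states around $2^{-k}\Id_k$ from \Cref{col:page_haar} and \Cref{col:page_design}, yielding the same depth constraint via the same logarithmic bookkeeping. One small correction: the PRS definition demands \emph{negligible} advantage, so $\delta$ must itself be negligible rather than $1/\poly(n)$ — e.g.\ take $\delta=\bigl(n^{O(k)}(\eps+2^{-n})^{1/2}\bigr)^{1/2}$, which keeps both failure probabilities negligible under exactly the constraints you derive, so nothing else in the argument changes.
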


\begin{proof}
The output bit of the depth-$d$ $\QNC$ circuit depends on at most $k=2^d$ input qubits and ancillae. Let $A$ be the subsystem of the input state on these qubits, and let $m$ be the number of ancillae touched. Denote the reduced density matrix, over the subsystem $A$, of the Haar random state to be $\rho^{\haar}_A$ and that of the state picked from $\{\ket{\psi_i}\}$ to be $\rho_A$. Then for the channel $\Phi_\cC$ that maps the subsystem $A$ to the output qubit, we have
\[\Norm{\Phi_\cC\left(\rho^{\haar}_{A}\right) - \Phi_\cC\left(\rho_A\right)}_1 
\leq \Norm{\rho^{\haar}_A - \rho_{A}}_1 =\negl(n),
\]
with probability at least $1 - \negl(n)$ over the choice of the states. This follows from \Cref{col:page_haar} and \Cref{col:page_design}, and the observation that
\[n^{O(k)}\cdot (\eps+2^{-n})^{1/2}\leq\negl(n)\]
is equivalent to 
\[O(k)\leq\frac{\log(1/(\eps+2^{-n}))}{2\log n}-\omega(1),\]
which is satisfied for every $d=\log k$ such that
\[d\leq \min(\log\log(1/\eps),\log n)-\log\log n-\omega(1). \qedhere\]
\end{proof}
The above theorem can be strengthened to work for the case when multiple output qubits are measured.
\begin{corollary}\label{thm:qnc_multiout}
    Let $\{\ket{\psi_i}\}$ be an $\eps$-approximate $2$-design on $n$ qubits for some $\eps=2^{-\Omega(n)}$. Then $\ket{\psi_i}$ is a PRS against $\QNC$ circuits with depth $d$ and $k\leq 2^{-d}\cdot o(n/\log n)$ bits of output.
\end{corollary}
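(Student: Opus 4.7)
The plan is to adapt the proof of \Cref{thm:prs} by replacing the single-output light cone with the \emph{union} of the light cones of all $k$ output qubits. Since each gate has fan-in at most $2$ and the circuit has depth $d$, each output bit depends on at most $2^d$ input qubits (counting both state qubits and ancillae), so the union $A$ of all the light cones has at most $K := k\cdot 2^d$ qubits. The hypothesis $k\leq 2^{-d}\cdot o(n/\log n)$ is precisely the statement that $K=o(n/\log n)$. The output distribution of the circuit $C$ is determined entirely by the reduced density matrix $\rho_A$ of the input on $A$, i.e.\ $C(\rho)=\Phi_C(\rho_A)$ for an induced channel $\Phi_C$ that maps the subsystem $A$ to the $k$ output bits.

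Next I would apply \Cref{col:page_haar} and \Cref{col:page_design} with this subsystem $A$ of size $K$ and some $\delta=\negl(n)$. The failure probability in each corollary is $n^{O(K)}\cdot(\eps+2^{-n})^{1/2}\cdot\delta^{-1}$. Because $K\log n=o(n)$ we have $n^{O(K)}=2^{o(n)}$, and the hypothesis $\eps=2^{-\Omega(n)}$ gives $(\eps+2^{-n})^{1/2}=2^{-\Omega(n)}$, so the whole failure probability is $\negl(n)$ even for $\delta=\negl(n)$. Thus with probability $1-\negl(n)$ both $\rho_A^{\haar}$ and $\rho_A$ are within trace distance $\delta$ of the maximally mixed state $\tfrac{1}{2^K}\Id_K$. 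Then by the triangle inequality and data-processing,
\[
\Norm{\Phi_C(\rho_A^{\haar})-\Phi_C(\rho_A)}_1\leq \Norm{\rho_A^{\haar}-\rho_A}_1\leq 2\delta=\negl(n),
\]
and taking expectations over the Haar-random and design state (the bad events contribute at most $\negl(n)$ to the expected trace distance) yields the PRS guarantee.

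The one subtlety worth flagging is that in the PRS definition the input is $t=\poly(n)$ copies of the state together with ancillae, so the union of light cones $A$ may straddle several copies. Luckily \Cref{col:page_haar} and \Cref{col:page_design} are stated in exactly this form, using independence of the copies to decompose $A=A_1\sqcup A_2\sqcup\cdots$ and apply the single-copy Page bound to each $A_i$ before summing. The only substantive check is that the failure probability $n^{O(K)}\cdot(\eps+2^{-n})^{1/2}$ remains negligible as $k$ grows up to $2^{-d}\cdot o(n/\log n)$; this is where the threshold $o(n/\log n)$ on $K$ in the hypothesis is tight, since the $n^{O(K)}$ union bound demands $K=o(n/\log n)$ to be overwhelmed by the $2^{-\Omega(n)}$ concentration. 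No genuinely new ideas beyond \Cref{thm:prs} are needed; the corollary is really an accounting exercise with the light-cone union.
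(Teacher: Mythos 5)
Your proposal matches the paper's proof exactly: the paper also bounds the union of the backward light cones of the $k$ output qubits by $k\cdot 2^d = o(n/\log n)$ and then invokes the same argument as \Cref{thm:prs}, using \Cref{col:page_haar} and \Cref{col:page_design} to keep the failure probability $n^{O(K)}\cdot(\eps+2^{-n})^{1/2}\cdot\delta^{-1}$ negligible. Your accounting of the multi-copy decomposition and the tightness of the $o(n/\log n)$ threshold is consistent with the paper's (much terser) treatment.
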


\begin{proof}
The backward lightcone of the $k$ output qubits is of at most 
\[
k \cdot 2^d = o(n/\log n)
\]
in size. The proof then follows from the same argument as \Cref{thm:prs}.
\end{proof}

\subsection{Pseudorandomness against \texorpdfstring{$\AC^0\circ\QNC^0$}{AC0\bullet QNC0}}
\label{subsection: AC0}
The lightcone argument in the previous section renders most parts of a $\QNC^0$ circuit and most input qubits unrelated. Naturally, it is more desirable to prove the statement against $\QNC^0$ circuits where all the output qubits are measured, that is, the output distribution is indistinguishable in total variation distance between $t$-designs and Haar random states. However, our example below shows that this is too much to ask for, even when the $\QNC^0$ circuit does nothing and the input states get measured immediately.
\begin{definition}[Random phased subspace states]\label{def:subspaces}
    A $d$-dimensional \emph{random phased subspace state} on $n$ qubits is the following state:
    \[\ket{\psi_{S,f}}=\frac{1}{2^{d/2}}\sum_{x\in S}(-1)^{f(x)}\ket{x},\]
    where $S\in\mathbb{F}_2^n$ is a random $d$-dimensional linear subspace, and $f:S\to\zo$ is a random function.
\end{definition}
\begin{theorem}\label{thm:qnc0cte}
    For every $n>d>t$, $\{\ket{\psi_{S,f}}\}$ is an $O(2^{t-d})$-approximate $t$-design, but $\ket{\psi_{S,f}}^{\otimes(d+1)}$ and $\ket{\psi_\haar}^{\otimes(d+1)}$ are distinguishable when measured in the computational basis, with total variation distance $1-O(2^d)/2^n$.
\end{theorem}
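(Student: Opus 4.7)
The proof divides naturally into two parts, and I would handle the distinguishing attack first, since it is simpler. The distinguisher measures all $d+1$ copies in the computational basis and accepts iff the resulting $n$-bit strings are $\mathbb{F}_2$-linearly dependent. On $\ket{\psi_{S,f}}^{\otimes(d+1)}$ the outcomes are $d+1$ independent uniform draws from the $d$-dimensional subspace $S$, so they are always linearly dependent and the distinguisher accepts with probability $1$. On $\ket{\psi_\haar}^{\otimes(d+1)}$, the Haar moment identity $\E[\ket{\psi_\haar}\bra{\psi_\haar}^{\otimes(d+1)}]=\binom{2^n+d}{d+1}^{-1}\Pi_{\mathrm{sym}}$ shows that each ordered tuple of $d+1$ distinct outcomes has probability $\approx 1/2^{n(d+1)}$. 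A union bound over the $2^{d+1}-1$ nontrivial $\mathbb{F}_2$-linear relations shows that the number of ordered dependent tuples in $(\mathbb{F}_2^n)^{d+1}$ is at most $(2^{d+1}-1)\cdot 2^{nd}$, so the Haar acceptance probability is $O(2^d/2^n)$ and the total variation distance is $1-O(2^d/2^n)$.

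For the approximate $t$-design bound, my key move is to diagonalize both $\E_{S,f}\bigl[\ket{\psi_{S,f}}\bra{\psi_{S,f}}^{\otimes t}\bigr]$ and $\E\bigl[\ket{\psi_\haar}\bra{\psi_\haar}^{\otimes t}\bigr]$ in a common orthonormal basis $\{\widehat{\ket M}\}$ of the symmetric subspace indexed by size-$t$ multisets $M\subseteq\mathbb{F}_2^n$. Averaging the random sign function $f$ gives $\E_f[(-1)^{\sum_i(f(x_i)+f(y_i))}]$ equal to $1$ exactly when $\{x_i\}=\{y_i\}$ as multisets and $0$ otherwise, which matches the support of $\Pi_{\mathrm{sym}}$. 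After this simplification, the phased-subspace operator is diagonal in $\{\widehat{\ket M}\}$ with eigenvalue
\[c_M \;=\; \frac{t!\,\Pr_S[M\subseteq S]}{2^{dt}\prod_z t_z!},\]
where $t_z$ are the multiplicities in $M$, while the Haar operator has every diagonal entry equal to $h=\binom{2^n+t-1}{t}^{-1}$. Hence the trace distance equals $\sum_M|c_M-h|$.

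The core estimate is that the family $\mathcal L_t$ of multisets whose $t$ elements are distinct and $\mathbb{F}_2$-linearly independent already carries almost all of $\sum_M c_M$. For every such $M$, the probability $\Pr_S[M\subseteq S]$ equals the $M$-independent quantity $\prod_{i=0}^{t-1}(2^d-2^i)/(2^n-2^i)$, while $|\mathcal L_t|=\prod_{i=0}^{t-1}(2^n-2^i)/t!$; multiplying these gives
\[\sum_{M\in\mathcal L_t} c_M \;=\; \prod_{i=0}^{t-1}\Paren{1-2^{i-d}} \;\geq\; 1-2^{t-d}.\]
Since $\sum_M c_M=\sum_M h=1$ and $c_M,h$ agree to leading order on $\mathcal L_t$, the entire trace-distance budget is absorbed by $2\sum_{M\notin\mathcal L_t}(c_M+h)=O(2^{t-d})$, matching the stated bound.

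The part I expect to require the most care is the diagonalization step: verifying that after averaging over the random phase $f$, the phased-subspace operator truly has the same multiset-indexed diagonal structure as $\Pi_{\mathrm{sym}}$, and computing the multinomial factors $t!/\prod_z t_z!$ correctly when converting unordered tuples into the orthonormal basis vectors $\widehat{\ket M}$. Once this bookkeeping is in place, the remaining estimates reduce to standard Gaussian-binomial bounds on random subspace containment.
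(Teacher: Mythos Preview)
Your distinguisher argument is essentially the same as the paper's and is fine, modulo the minor point that your union bound over linear relations only covers tuples of \emph{distinct} outcomes; you should add a one-line collision bound (as the paper does) to cover repeated outcomes, which are automatically dependent over $\mathbb{F}_2$.

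The design argument, however, has a genuine gap. Your key claim that
\[
\E_f\bigl[(-1)^{\sum_i(f(x_i)+f(y_i))}\bigr]=1 \quad\text{exactly when }\{x_i\}=\{y_i\}\text{ as multisets}
\]
is false. The expectation equals $1$ whenever every element of $S$ appears an \emph{even number of times in the combined tuple} $(x_1,\ldots,x_t,y_1,\ldots,y_t)$, which is strictly weaker. Already for $t=2$ the term $\ket{aa}\bra{bb}$ with $a\neq b$ survives (since $2f(a)+2f(b)\equiv 0$), and this is a genuinely off-diagonal matrix element between the distinct symmetric-subspace basis vectors $\widehat{\ket{\{a,a\}}}$ and $\widehat{\ket{\{b,b\}}}$. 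Consequently $\E_{S,f}[\ket{\psi_{S,f}}\bra{\psi_{S,f}}^{\otimes t}]$ is \emph{not} diagonal in your basis $\{\widehat{\ket{M}}\}$, and the trace distance is not simply $\sum_M|c_M-h|$. You correctly flagged this step as the one needing care, and indeed it fails as written.

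The paper avoids this issue by not attempting a full diagonalization. Instead it observes that the restriction of $\E_f[\ket{\psi_{S,f}}\bra{\psi_{S,f}}^{\otimes t}]$ to the subspace spanned by $\{\ket{\mathsf{Sym}_X}:X\subset S,\ |X|=t\}$ (i.e.\ multisets with \emph{distinct} elements) is exactly a projection, so the complementary piece---which is where all the off-diagonal mess lives---is automatically positive semidefinite and hence has trace norm equal to its trace, bounded by $t^2/2^d$. Your diagonal computation on $\mathcal{L}_t$ is correct and essentially reproduces the paper's main estimate; what you are missing is precisely this positivity argument to control the non-distinct part in trace norm rather than just in trace.
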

\begin{proof}
    The proof that $\{\ket{\psi_{S,f}}\}$ is an approximate design is deferred to \Cref{sec:appendixA}. To distinguish $d+1$ copies of $\{\ket{\psi_{S,f}}\}$ from Haar random, notice that the measurement outcome in each copy is a random $x\in S$, and thus the $d+1$ outcomes must be linearly dependent. On the other hand, the measurement outcomes from $d+1$ copies of a Haar random state, when all distinct, form a random $(d+1)$-element subset of $\zo^n$ because of symmetry, and therefore are linearly dependent with probability at most 
    \[\frac{1}{2^n}+\frac{1}{2^{n-1}}+\cdots+\frac{1}{2^{n-d}}\leq \frac{1}{2^{n-d-1}}.\]
    We also know that the collision probability for a Haar random state is $2/(2^n+1)$ \cite{dalzell2022random}, and thus the probability for the outcomes not being all distinct is at most $d(d+1)/2^n$ by the union bound.
    As a result, the total variation distance between the measurement outcomes of $\ket{\psi_{S,f}}^{\otimes(d+1)}$ and $\ket{\psi_\haar}^{\otimes(d+1)}$ is at least $1-1/2^{n-d-1}-d(d+1)/2^n=1-O(2^d)/2^n$.
\end{proof}

Notice that not only the distinguisher in \Cref{thm:qnc0cte} applies no quantum gates, the classical post-processing on the measurement outcomes are also quite simple. It checks linear dependence whose complexity is captured by $\mathsf{DET}$, a complexity class between $\NC^1$ and $\NC^2$ containing all problems reducible to determinant. This motivates us to examine the case when the classical post-processing is restricted to some provably weaker class than $\mathsf{DET}$. It turns out that for $\AC^0$, $2$-designs are indeed still pseudorandom in this case. We crucially make use of the follow result, first proved by Braverman \cite{Braverman08IndependenceFoolsAc0}, subsequentially improved by \cite{Tal17tight} and Harsha and Srinivasan \cite{HarshaS19}, that almost $k$-wise uniform distributions fools $\AC^0$:
\begin{lemma}\label{lma:fool}
    For every $\delta$-almost $k$-wise independent distribution on $m$ bits, any $\AC$ circuits with size $s$ and depth $d$ cannot distinguish it from the uniform distribution with advantage $\eps+m^k\delta$, for certain $k=(\log s)^{O(d)}\cdot\log(1/\eps)$.
\end{lemma}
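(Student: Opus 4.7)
My plan is to follow the sandwiching-polynomial paradigm introduced by Bazzi and developed by Braverman. The central observation is that if every $\AC$ function $C$ can be pointwise approximated from below and above by low-degree polynomials that agree in expectation under the uniform distribution, then any distribution preserving the expectations of low-degree polynomials must also fool $C$.

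First I would establish the central technical ingredient: for every $\AC$ circuit $C$ of size $s$ and depth $d$ on $m$ bits, there exist two multilinear polynomials $p_\ell,p_u:\bits^m\to\R$ with bounded coefficients and degree at most $k=(\log s)^{O(d)}\cdot\log(1/\eps)$ such that $p_\ell(x)\le C(x)\le p_u(x)$ for every $x\in\bits^m$ and $\Ex[x\sim U_m]{p_u(x)-p_\ell(x)}\le\eps$. This is the hard part and is essentially Braverman's theorem, with the quantitative tightening of the degree due to Tal and Harsha--Srinivasan. The construction combines H\aa stad-style random restriction arguments (which collapse CNF/DNF subformulas under random restrictions into short decision trees, and hence into low-degree polynomials) with Linial--Mansour--Nisan style truncated Fourier approximators, iterated layer by layer through the depth of the circuit; the depth-$d$ recursion is what produces the $(\log s)^{O(d)}$ blow-up in the degree.

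Once sandwiching polynomials are in hand, the rest is routine. For a distribution $\mu$ on $\bits^m$ that is exactly $k$-wise independent, every monomial of degree at most $k$ has the same expectation under $\mu$ as under uniform, so $\Ex[\mu]{p_\ell}=\Ex[U]{p_\ell}$ and $\Ex[\mu]{p_u}=\Ex[U]{p_u}$. Chaining the sandwiching inequalities yields
\[\Ex[U]{C}-\eps\leq\Ex[U]{p_\ell}=\Ex[\mu]{p_\ell}\leq\Ex[\mu]{C}\leq\Ex[\mu]{p_u}=\Ex[U]{p_u}\leq\Ex[U]{C}+\eps,\]
bounding the distinguishing advantage by $\eps$.

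To handle the $\delta$-almost $k$-wise case, I would note that any multilinear polynomial of degree at most $k$ is a sum of at most $\binom{m}{\leq k}\leq m^k$ monomials, each a product of at most $k$ variables. By the $\delta$-almost $k$-wise hypothesis, the expectation of each such monomial under $\mu$ differs from its uniform expectation by at most $\delta$ (with the bounded coefficient absorbed into the constant). Summing this error across all monomials adds at most $m^k\delta$ to each side of the sandwich, giving the claimed total distinguishing advantage of $\eps+m^k\delta$. The main obstacle throughout is the first step---constructing sandwiching polynomials of the stated degree---which is exactly where Braverman's theorem and its subsequent sharpenings do all of the real work.
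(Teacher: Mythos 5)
The paper offers no proof of this lemma---it is imported wholesale from Braverman (as sharpened by Tal and by Harsha--Srinivasan)---and your sketch is the standard sandwiching-polynomial argument behind that citation, with the genuinely hard step (the existence of low-degree sandwiching polynomials for $\AC^0$) likewise deferred to those works, so the approaches agree. The one place needing more care than you give it is the $\delta$-almost case: concluding an extra error of $m^k\delta$ requires a bound on the sum of absolute values of the coefficients of the sandwiching polynomials (pointwise sandwiching plus a small expectation gap does not by itself bound their Fourier $L_1$ or $L_2$ norm; the cited constructions do supply such a bound, at the cost of $m^{O(k)}$ in place of $m^k$), or alternatively one can round a $\delta$-almost $k$-wise independent distribution to an exactly $k$-wise independent one at total-variation cost $O(m^k\delta)$, which is the route the paper takes when it generalizes this lemma in \Cref{lma:fool2}.
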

We start out with the simple case, when no ancilla is allowed for the $\QNC^0$ circuit.
\begin{theorem}\label{thm:acqnc}
    Let $\{\ket{\psi_i}\}$ be an $\eps$-approximate $2$-design on $n$ qubits for some $\eps=2^{-\log^{\omega(1)}n}$. Then $\ket{\psi_i}$ is a PRS against $\AC^0\circ\QNC^0$ circuits with no ancilla.
\end{theorem}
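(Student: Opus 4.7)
The plan is to reduce to Braverman's theorem (\Cref{lma:fool}) by showing that the classical output distribution produced by the $\QNC^0$ circuit $C$ on either the Haar or the design input ensemble, viewed as a distribution on $m=tn$ bits, is $\negl(n)$-close to being almost $k$-wise independent for a suitable $k=\polylog(n)$. Once this is established on both sides, \Cref{lma:fool} implies that any $\AC^0$ post-processing $F$ produces an output distribution that is negligibly close in total variation to $F_*(\unif_m)$ on both inputs, and the triangle inequality finishes the argument.

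The core observation is a lightcone calculation. Fix any subset $S\subseteq[m]$ of $|S|=k$ output qubits. Because $C$ has constant depth $d$ and bounded fan-in, the Heisenberg-evolved projector $C^\dagger\Pi_y^{(S)}C$, where $\Pi_y^{(S)}=\ketbra{y}{y}_S\otimes\Id_{[m]\setminus S}$, acts as the identity outside a backward lightcone $\Lambda(S)\subseteq[m]$ of size at most $k\cdot 2^d=O(k)$; writing the nontrivial restriction as $M_{y,S}$, the marginal probability of outcome $y\in\zo^k$ on $S$ equals $\Tr_{\Lambda(S)}[M_{y,S}\cdot\rho_{\Lambda(S)}]$, where $\rho_{\Lambda(S)}$ is the reduced density matrix of the input on $\Lambda(S)$. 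When $\rho_{\Lambda(S)}=\Id/2^{|\Lambda(S)|}$, a short computation using $\Tr[C^\dagger\Pi_y^{(S)}C]=2^{m-k}$ shows that this marginal equals $2^{-k}$ for every $y$, i.e., is exactly uniform. By monotonicity of trace distance under the measure-$S$ channel acting on $\rho_{\Lambda(S)}$, the TV distance of the true $S$-marginal from uniform is bounded by
\[\Norm{\rho_{\Lambda(S)}-\Id/2^{|\Lambda(S)|}}_1.\]

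I would then combine this with \Cref{col:page_haar} (for Haar inputs) and \Cref{col:page_design} (for design inputs), both of which already union-bound over all subsystems up to a prescribed size. Choosing $\delta=2^{-\polylog(n)}$ in those corollaries, their failure probability $n^{O(k)}(\eps+2^{-n})^{1/2}\delta^{-1}$ becomes $\negl(n)$ thanks to the hypothesis $\log(1/\eps)=\log^{\omega(1)}n$, which asymptotically dominates every $\polylog(n)\cdot\log n$ term. Hence with probability $1-\negl(n)$ over the input ensemble, every $k$-marginal of the output is $\negl(n)$-close in TV to uniform; a routine expectation-vs.-high-probability step shows that the expected output is $\delta'$-almost $k$-wise independent for some $\delta'=\negl(n)$.

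Plugging this into \Cref{lma:fool}, any $\AC$ circuit of size $s=\poly(n)$ and constant depth distinguishes the output from $\unif_m$ with advantage at most $\negl(n)+m^k\delta'=\negl(n)$, on both ensembles; the triangle inequality then yields the claimed indistinguishability. The only nontrivial point is the parameter tradeoff: the approximate-design error $\eps$ controls how large $k$ can be taken in the lightcone step, while \Cref{lma:fool} dictates how large $k$ must be to fool $\AC^0$. The hypothesis $\eps=2^{-\log^{\omega(1)}n}$ is calibrated precisely so that the former comfortably accommodates the latter, and no further ideas seem necessary beyond the lightcone calculation and this parameter accounting.
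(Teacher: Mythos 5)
Your proposal is correct and follows essentially the same route as the paper's proof: a lightcone argument shows every $k$-bit output marginal is governed by a size-$O(k\cdot 2^d)$ reduced density matrix, which by \Cref{col:page_haar} and \Cref{col:page_design} is close to maximally mixed, and since the circuit has no ancillae the resulting output distribution is $\delta$-almost $k$-wise independent, so \Cref{lma:fool} plus the triangle inequality and the stated parameter accounting finish the argument. Your explicit computation that the maximally mixed input yields exactly uniform marginals is precisely the step the paper compresses into one sentence, so there is no substantive difference.
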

\begin{proof}
    Suppose the circuit has size $s$ and depth $d$. Fix some $k=(\log s)^{O(d)}\cdot\log^2 n=\log^{O(1)} n$ according to \Cref{lma:fool}.

    The output of the $\QNC^0$ circuit, when all qubits are measured, is a distribution $\mathcal{D}_{\ket{\psi}}$ over $m=\poly(n)$ bits that depends on the input state $\ket{\psi}$. By \Cref{col:page_design}, for both $\mathcal{D}_{\ket{\psi_i}}$ and $\mathcal{D}_{\ket{\psi_\haar}}$, with probability $1-\negl(n)$, the marginal distribution on every $k$ bits are $\delta$-indistinguishable from the case when the input states are maximally mixed, for every $\delta>0$ such that $n^{O(k)}\cdot(\eps+2^{-n})^{1/2}\cdot\delta^{-1}=\negl(n)$.
    
    Since there is no ancilla, the output distribution when the inputs are maximally mixed is the uniform distribution. Therefore both $\mathcal{D}_{\ket{\psi_i}}$ and $\mathcal{D}_{\ket{\psi_\haar}}$ are $\delta$-almost $k$-wise independent. By \Cref{lma:fool} both distributions are indistinguishable from the uniform distribution against the $\AC^0$ post-processing, as long as $m^k\delta=n^{O(k)}\delta$ is negligible. This is satisfied by our choice of $\eps$, as
    \[n^{O(k)}\cdot (\eps+2^{-n})^{1/2}=2^{\log^{O(1)}n-\log^{\omega(1)}n}=\negl(n).\qedhere\]
\end{proof}

Notice that for exact $2$-designs, that is when $\eps=0$, we only need $k=o(n/\log n)$, and thus \Cref{thm:acqnc} can be strengthened to work against $\AC\circ\QNC$ circuits of polynomial size, $\QNC$ depth up to $o(\log n)$ and $\AC$ depth up to $o(\log n/\log\log n)$.

The situation becomes more complicated when ancillae are allowed. In this case, although $\mathcal{D}_{\ket{\psi_i}}$ and $\mathcal{D}_{\ket{\psi_\haar}}$ are still almost $k$-wise indistinguishable (that the marginal distribution on every $k$ bits are close in total variation distance), they are no longer $k$-wise independent and are not guaranteed to fool $\AC^0$ circuits when $k$ is small \cite{BogdanovIVW16}. This happens because we have no knowledge of the output distribution of the $\QNC^0$ circuit with ancillae even when the inputs are maximally mixed. The saving grace is that, when the number of ancillae is small, the output distribution has high entropy and we can modify Braverman's proof in \cite{Braverman08IndependenceFoolsAc0} to suit such distributions:
\begin{lemma}\label{lma:fool2}
    For every two $\delta$-almost $k$-wise indistinguishable distributions $\cD_1,\cD_2$ on $m$ bits, such that $\cD_1$ has min-entropy at least $m-r$, any $\AC$ circuits with size $s$ and depth $d$ cannot distinguish the two distributions with advantage $\eps+4m^k\delta$, for certain $k=(\log s)^{O(d)}\cdot(r+\log(1/\eps))$.
\end{lemma}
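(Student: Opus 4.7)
The plan is to lift Braverman's theorem that polylogarithmic independence fools $\AC^0$ to our weaker hypothesis via a sandwiching-polynomial formulation underlying \Cref{lma:fool}, now tracking two new sources of error: a factor $2^r$ incurred when transferring expectations from the uniform distribution to $\cD_1$ (paying for the min-entropy gap), and a factor $m^{k_0}$ incurred when transferring from $\cD_1$ to $\cD_2$ (paying for the replacement of $k$-wise independence by $k$-wise indistinguishability).

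First I would invoke a sandwich-polynomial form of \Cref{lma:fool}: for every $\AC$ circuit $f$ of size $s$ and depth $d$ and every $\eps_0 > 0$, there exist polynomials $p_\ell, p_u : \zo^m \to \R$ of degree at most $k_0 = (\log s)^{O(d)} \log(1/\eps_0)$, bounded uniformly by $1$, with $p_\ell(x) \leq f(x) \leq p_u(x)$ for every $x$ and $\Ex[U]{p_u - p_\ell} \leq \eps_0$, where $U$ denotes the uniform distribution on $\zo^m$. Setting $\eps_0 = \eps \cdot 2^{-r-1}$ produces $k_0 = (\log s)^{O(d)}(r + \log(1/\eps))$, matching the claimed degree.

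Next I would combine the two hypotheses. Because $\cD_1$ has min-entropy at least $m - r$, we have $\cD_1(x) \leq 2^r U(x)$ pointwise, and since $p_u - p_\ell \geq 0$ this gives
\[
\Ex[\cD_1]{p_u - p_\ell} \leq 2^r \cdot \Ex[U]{p_u - p_\ell} \leq 2^r \eps_0 = \eps/2.
\]
Separately, for $p \in \{p_\ell, p_u\}$ I would expand $p$ in the $\pm 1$ Fourier basis; since $p$ has degree $k_0$ and $\Norm{p}_\infty \leq 1$, each Fourier coefficient satisfies $|\hat p(S)| \leq 1$, is supported on sets $S$ of size at most $k_0$, and there are at most $m^{k_0}$ such sets. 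Each character $\chi_S$ with $|S| \leq k_0$ depends on only $k_0$ bits, so $\delta$-almost $k_0$-wise indistinguishability gives $|\Ex[\cD_1]{\chi_S} - \Ex[\cD_2]{\chi_S}| \leq 2\delta$, and summing,
\[
\Abs{\Ex[\cD_1]{p} - \Ex[\cD_2]{p}} \leq 2\delta \sum_{|S| \leq k_0} |\hat p(S)| \leq 2 m^{k_0}\delta.
\]
Combining via the sandwich $p_\ell \leq f \leq p_u$,
\[
\Ex[\cD_2]{f} - \Ex[\cD_1]{f} \leq \Ex[\cD_2]{p_u} - \Ex[\cD_1]{p_\ell} \leq \Ex[\cD_1]{p_u - p_\ell} + \bigl(\Ex[\cD_2]{p_u} - \Ex[\cD_1]{p_u}\bigr) \leq \eps/2 + 2 m^{k_0} \delta,
\]
and symmetrically for the reverse direction, giving $\Abs{\Ex[\cD_1]{f} - \Ex[\cD_2]{f}} \leq \eps + 4 m^{k_0}\delta$.

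The main obstacle I anticipate is ensuring that the sandwich polynomials $p_\ell, p_u$ can be chosen uniformly bounded, since the bound $|\hat p(S)| \leq \Norm{p}_\infty$ is what converts $\delta$-closeness of $k_0$-bit marginals into the desired $O(m^{k_0}\delta)$-closeness of polynomial expectations. This bounded-sandwich property is available from the Razborov/Harsha--Srinivasan approach behind \Cref{lma:fool}; if only an $L_2$ approximation by unbounded polynomials were available, then controlling $\Ex[\cD_2]{|f - p|}$ without any min-entropy bound on $\cD_2$ would require a more delicate auxiliary-polynomial argument, for instance approximating $(f-p)^2$ by a further low-degree polynomial and then applying $k$-wise indistinguishability to it.
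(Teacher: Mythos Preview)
Your argument hinges on the claim that the sandwiching polynomials $p_\ell,p_u$ can be taken with $\norm{p_\ell}_\infty,\norm{p_u}_\infty\leq 1$, and the entire treatment of the $\delta>0$ case rests on this (you need $|\hat p(S)|\leq 1$ to sum to $m^{k_0}$). But this bounded-sandwich property is \emph{not} provided by Braverman, Tal, or Harsha--Srinivasan: their constructions produce a polynomial $f'$ that one-sidedly bounds a boolean function, with $\Ex[\cU]{F'-f'}$ small, but $f'$ can be enormous at isolated points of $\zo^m$. Requiring $p_u\leq 1$ pointwise while $p_u\geq f$ forces $p_u=1$ on $f^{-1}(1)$, which is a genuinely stronger approximation notion than anything established for $\AC^0$ at polylogarithmic degree. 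So as written, the proposal has a real gap, and your own final paragraph correctly identifies it but then asserts it away.

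The paper sidesteps both obstacles differently. For the $\delta=0$ case it invokes Braverman's Lemma~11 in its full form: for any target distribution $\mu$ there is a \emph{boolean} $F'$ (hence automatically bounded) with $\Pr_\mu[F\neq F']<\eps'$ and $\Pr_\cU[F\neq F']<\eps'$, together with a low-degree $f'\leq F'$ and $\Ex[\cU]{F'-f'}<\eps'$. Taking $\mu=\cD_2$ means the step ``$\Ex[\cD_2]{F}$ close to $\Ex[\cD_2]{F'}$'' needs no min-entropy on $\cD_2$; the min-entropy of $\cD_1$ is used only to transfer the \emph{uniform} bounds $\Pr_\cU[F\neq F']<\eps'$ and $\Ex[\cU]{F'-f'}<\eps'$ to $\cD_1$ at cost $2^r$. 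Exact $k$-wise indistinguishability then equates $\Ex[\cD_1]{f'}=\Ex[\cD_2]{f'}$ with no Fourier-norm bookkeeping. For $\delta>0$ the paper does not argue Fourier-analytically at all; it invokes \cite{BogdanovIVW16} to replace $\cD_1,\cD_2$ by exactly $k$-wise indistinguishable $\cD_1',\cD_2'$ within total variation $2m^k\delta$ (checking that min-entropy survives), and then applies the $\delta=0$ case. If you want to salvage your approach, the cleanest fix is to adopt this reduction for the $\delta>0$ step rather than trying to bound $\sum_S|\hat p(S)|$.
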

\begin{proof}
    We first consider the case when $\delta=0$. Suppose the $\AC^0$ circuit with size $s$ and depth $d$ computes a boolean function $F$, Braverman showed that \cite[Lemma 11]{Braverman08IndependenceFoolsAc0} there exists a boolean function $F'$ and polynomial $f'$ of degree $k=(\log s)^{O(d)}\cdot\log(1/\eps')$, such that:
    \begin{itemize}
        \item $\Pr_{\cD_2'}[F\neq F']<\eps'$,
        \item $\Pr_{\cU}[F\neq F']<\eps'$,
        \item $F'\geq f'$ on $\zo^m$ and $\Ex[\cU]{F'-f'}<\eps'$.
    \end{itemize}
    Here $\cU$ stands for the uniform distribution over $\zo^m$. Since $\cD_1$ has min-entropy at least $m-r$, we have $\Pr_{\cD_1}[F\neq F']<2^r\eps$ and $\Ex[\cD_1]{F'-f'}<2^r\eps$. As a result,
    \begin{align*}
        \Ex[\cD_2]{F} 
        &> \Ex[\cD_2]{F'}-\eps'\geq\Ex[\cD_2]{f'}-\eps' \\
        &=\Ex[\cD_1]{f'}-\eps'>\Ex[\cD_1]{F'}-(2^r+1)\eps' \\
        &>\Ex[\cD_1]{F}-(2^{r+1}+1)\eps'.
    \end{align*}
    The bound in the reverse direction also holds by considering $1-F$. Taking $\eps'=\eps/(2^{r+1}+1)$ gives us $\Abs{\Ex[\cD_1]{F}-\Ex[\cD_2]{F}}<\eps$.

    For $\delta>0$, we can assume that $m^k\delta<1$, as otherwise the lemma is trivial. By \cite{BogdanovIVW16}, there exists two $k$-wise indistinguishable distributions $\cD_1',\cD_2'$ on $m$ bits such that $\Abs{\cD_1-\cD_1'}_1\leq 2m^k\delta$ and $\Abs{\cD_2-\cD_2'}_1\leq 2m^k\delta$. Furthermore, the construction ensures that $\Norm{\cD_1'}_\infty\leq\Norm{\cD_1}_\infty+2m^k\delta\cdot 2^{-m}$, and thus $\cD_1'$ still has min-entropy at least $m-O(r)$.  Applying the previous claim on $\cD_1',\cD_2'$ we have $\Abs{\Ex[\cD_1']{F}-\Ex[\cD_2']{F}}<\eps$, and thus $\Abs{\Ex[\cD_1]{F}-\Ex[\cD_2]{F}}<\eps+4m^k\delta$.
\end{proof}

\begin{theorem}\label{thm:acqnc2}
    Let $\{\ket{\psi_i}\}$ be an $\eps$-approximate $2$-design on $n$ qubits for some $\eps=2^{-\log^{\omega(1)}n}$. Then $\ket{\psi_i}$ is a PRS against $\AC^0\circ\QNC^0$ circuits with $a=\polylog(n)$ ancillae.
\end{theorem}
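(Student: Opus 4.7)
The plan is to extend the proof of \Cref{thm:acqnc} by applying \Cref{lma:fool2} in place of \Cref{lma:fool}. With ancillae, the output distribution of the $\QNC^0$ circuit on maximally mixed input is no longer uniform, but when $a$ is small it still has high min-entropy, which is exactly the regime in which \Cref{lma:fool2} applies.

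First, I would fix an arbitrary $\AC^0\circ\QNC^0$ circuit $F\circ Q$, where $Q$ is a $\QNC^0$ circuit of constant depth $d$ acting on $t=\poly(n)$ copies of the input together with $a=\polylog(n)$ ancillae prepared in $\ket{0^a}$, and $F$ is an $\AC^0$ circuit of size $s=\poly(n)$ and depth $O(1)$. Let $\cD_\rho$ denote the distribution on $m = tn+a$ bits obtained by running $Q$ on input state $\rho$ and measuring all output qubits. Write $\cD^\star$ for the distribution $\cD_\rho$ when the $tn$ input qubits are replaced by the maximally mixed state. Since $Q$ is unitary, a direct calculation gives $\cD^\star(x)=\tfrac{1}{2^{tn}}\sum_{y\in\zo^{tn}}|\bra{x}Q\ket{y,0^a}|^2\leq 2^{-tn}$ for every $x$, so $\cD^\star$ has min-entropy at least $tn = m-a$.

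Second, I would invoke the lightcone argument: each set of $k$ output bits of $Q$ depends on at most $k\cdot 2^d = O(k)$ of the input qubits and ancillae. Applying \Cref{col:page_design} to $\ket{\psi_i}^{\otimes t}$ and the analogous statement \Cref{col:page_haar} to $\ket{\psi_\haar}^{\otimes t}$, with probability at least $1-n^{O(k)}(\eps+2^{-n})^{1/2}\delta^{-1}$ the reduced state on any $O(k)$ input qubits is $\delta$-close in trace distance to maximally mixed. Since the ancillae in the lightcone are in the fixed state $\ket{0^a}$, this implies that both $\cD_{\ket{\psi_i}^{\otimes t}}$ and $\cD_{\ket{\psi_\haar}^{\otimes t}}$ are $\delta$-almost $k$-wise indistinguishable from $\cD^\star$. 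I would then apply \Cref{lma:fool2} twice with $\cD_1=\cD^\star$ (whose min-entropy is $\geq m-a$) and $\cD_2\in\{\cD_{\ket{\psi_i}^{\otimes t}},\cD_{\ket{\psi_\haar}^{\otimes t}}\}$, taking $r = a = \polylog(n)$ and an auxiliary parameter $\eps' = 1/\poly(n)$. This yields $k = (\log s)^{O(d)} \cdot (r+\log(1/\eps')) = \polylog(n)$, and the triangle inequality then bounds the distinguishing advantage of $F$ between $\cD_{\ket{\psi_i}^{\otimes t}}$ and $\cD_{\ket{\psi_\haar}^{\otimes t}}$ by $2(\eps'+4m^k\delta)$ plus the negligible failure probability of $k$-wise closeness.

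The main obstacle is balancing the parameters. Since $m=\poly(n)$ and $k=\polylog(n)$, making $m^k\delta$ negligible forces $\delta = 2^{-\polylog(n)-\omega(\log n)}$, while making the failure probability $n^{O(k)}(\eps+2^{-n})^{1/2}\delta^{-1}$ negligible forces $(\eps+2^{-n})^{1/2}$ to beat a polylogarithmic overhead. This is exactly what the hypothesis $\eps = 2^{-\log^{\omega(1)} n}$ provides: $\log(1/\eps)$ exceeds every fixed power of $\log n$, so any polylogarithmic overhead accumulated from $n^{O(k)}$ and $\delta^{-1}$ can be absorbed, and the total advantage becomes $\negl(n)$ as required.
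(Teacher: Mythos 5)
Your proposal follows essentially the same route as the paper's proof: bound the min-entropy of the reference distribution (the output on maximally mixed input), show $\delta$-almost $k$-wise indistinguishability of the real output distributions from that reference via backward lightcones and \Cref{col:page_haar}/\Cref{col:page_design}, and then apply \Cref{lma:fool2} twice with a triangle inequality. One pleasant simplification on your side: you bound the min-entropy of $\cD^\star$ globally by $m-a$ using unitarity of $Q$ (indeed $\cD^\star(x)=2^{-tn}\sum_y|\bra{x}Q\ket{y,0^a}|^2\le 2^{-tn}$), whereas the paper identifies $\cD^\star$ as $\cU\otimes\cR$ via the forward lightcone of the ancillae and only gets min-entropy $m-r$ with $r=2^d a$; your bound is cleaner and slightly stronger, though both are $m-\polylog(n)$ so the difference is immaterial here.

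The one genuine flaw is your choice of the auxiliary parameter $\eps'=1/\poly(n)$ in \Cref{lma:fool2}: this makes the final distinguishing advantage $2(\eps'+4m^k\delta)+\negl(n)=\Theta(1/\poly(n))$, which is \emph{not} negligible, so the PRS definition is not met as written. You must take $\eps'$ superpolynomially small while keeping $\log(1/\eps')=\polylog(n)$ --- e.g.\ $\eps'=2^{-\log^2 n}$ as the paper does by setting $k=(\log s)^{O(d)}\cdot(r+\log^2 n)$. With that substitution $k$ remains $\polylog(n)$, the hypothesis $\eps=2^{-\log^{\omega(1)}n}$ still absorbs the $n^{O(k)}$ and $m^k$ factors exactly as in your final paragraph, and the argument goes through.
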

\begin{proof}
    The forward lightcone of the ancillae touches at most $r=2^da=\log^{O(1)} n$ output qubits. We called the these qubits \emph{corrupted}, denoted as the subsystem $R$. We define $\cR$ as the distribution over $\zo^r$ following the measurement outcome on the corrupted qubits when the input states are maximally mixed.
    
    Similar to the proof of \Cref{thm:acqnc}, suppose the circuit has size $s$ and depth $d$, and we fix some $k=(\log s)^{O(d)}\cdot(r+\log^2 n)=\log^{O(1)} n$ according to \Cref{lma:fool2}. Denote the output distribution of the $\QNC^0$ circuit as $\mathcal{D}_{\ket{\psi}}$ with the input state $\ket{\psi}$. Let $\delta>0$ satisfies $n^{O(k)}\cdot(\eps+2^{-n})^{1/2}\cdot\delta^{-1}=\negl(n)$. We claim that with probability $1-\negl(n)$, both $\mathcal{D}_{\ket{\psi_i}}$ and $\mathcal{D}_{\ket{\psi_\haar}}$ are $\delta$-almost $k$-wise indistinguishable from $\cU\otimes\cR$, where $\cU$ is the uniform distribution over the uncorrupted output bits.

    \begin{figure}[h]
        \centering
        \includegraphics[width = .45\textwidth]{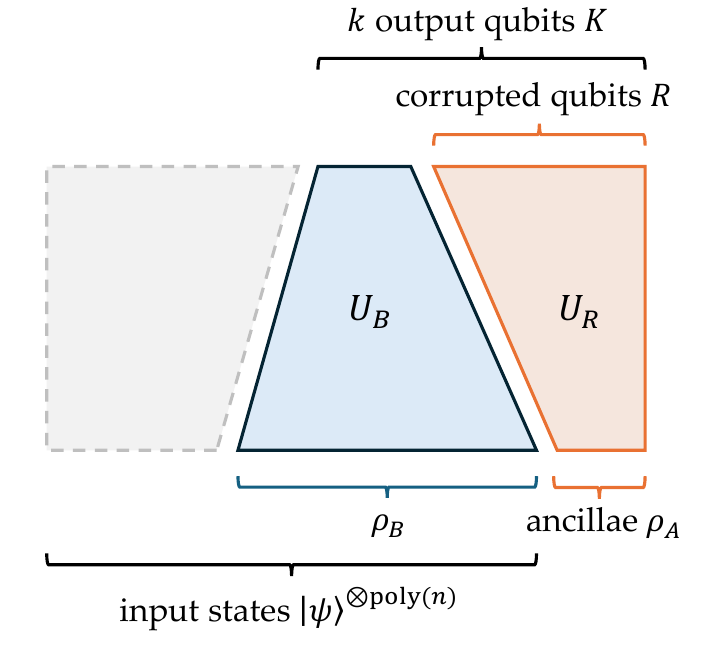}
        \caption{An illustration of the proof of \Cref{thm:acqnc2}, with partial systems and unitary lightcones in an $\QNC^0$ circuit.}
    \end{figure}
    
    To prove the claim, let $A$ be the ancilla qubits, and $\rho_A$ be the initial state of the ancillae. Let the unitary operator $U_R$ consist of all gates in the $\QNC^0$ circuit that belongs to the lightcone of the ancillae. Fixing any $k$ output qubits $K$, we extending $K$ to $K\cup R$ so that it contains all the corrupted qubits, and denote $U_B$ as the unitary operator that consists of all gates in the backward lightcone of $K$ but outside $U_R$. Notice that we can think of $U_R$ as being applied after $U_B$. At the input, the backward lightcone of $K$ touches the set of non-ancillae qubits $B$ with $|B|=b$, and we let $\rho_B$ be the state of $B$ when the input states are copies of $\ket{\psi}$.
    
    Now the output state on $K$ is a partial trace of
    \begin{equation}\label{eq:ac0}
        (\Id_{A\cup B\setminus R}\otimes U_R)(U_B\rho_BU_B^\dagger\otimes\rho_A)(\Id_{A\cup B\setminus R}\otimes U_R^\dagger),
    \end{equation}
    and it suffices to show that no matter if $\ket{\phi}$ is drawn from $\{\ket{\psi_i}\}$ or $\ket{\psi_\haar}$, with high probability the above state is close to the case when $\rho_B$ is maximally mixed. Indeed, by \Cref{col:page_haar} and \Cref{col:page_design}, in both former cases with probability $1-\negl(n)$ we have $\Norm{\rho_B-2^{-(k-a)}\Id_{B}}_1\leq\delta$. This means that the state in \eqref{eq:ac0} if $\delta$-close in trace distance to
    \[
        (\Id_{A\cup B\setminus R}\otimes U_R)(2^{-b}\Id_{B}\otimes\rho_A)(\Id_{A\cup B\setminus R}\otimes U_R^\dagger) 
        =\ 2^{-(a+b-r)}\Id_{A\cup B\setminus R}\otimes2^{-(r-a)}U_R(\Id_{R\setminus A}\otimes\rho_A)U_R^\dagger.
    \]
    Notice that $2^{-(r-a)}U_R(\Id_{R\setminus A}\otimes\rho_A)U_R^\dagger$ is exactly the output state on $R$ when the input states are maximally mixed, and therefore the measurement outcome has the distribution $\cR$. Meanwhile $2^{-(a+b-r)}\Id_{A\cup B\setminus R}$ is maximally mixed and will be measured to the uniform distribution. Thus we conclude that the output distributions on the $k$ bits are $\delta$-close to $\cU\otimes\cR$, for both $\mathcal{D}_{\ket{\psi_i}}$ and $\mathcal{D}_{\ket{\psi_\haar}}$.

    Now we use the fact that $\cU\otimes\cR$, as a distribution over $\zo^m$, has min-entropy at least $m-r$. By \Cref{lma:fool2}, both $\mathcal{D}_{\ket{\psi_i}}$ and $\mathcal{D}_{\ket{\psi_\haar}}$ are indistinguishable from $\cU\otimes\cR$ against the $\AC^0$ post-processing, as long as $m^k\delta=n^{O(k)}\delta$ is negligible. Similar to the proof of \Cref{thm:acqnc}, this is satisfied by our choice of $\eps$.
\end{proof}
Similar to the case of \Cref{thm:acqnc}, for exact $2$-designs, \Cref{thm:acqnc2} can be strengthened to work against $\AC\circ\QNC$ circuits of polynomial size, $\QNC$ depth up to $o(\log n)$ and $\AC$ depth up to $o(\log n/\log\log n)$. On the other hand, for constant depth circuits the number of ancillae can be strengthened close to linear.
\begin{corollary}\label{col:acqnc}
    Let $\{\ket{\psi_i}\}$ be an exact $2$-design on $n$ qubits. Then $\ket{\psi_i}$ is a PRS against $\AC^0\circ\QNC^0$ circuits with $a=n/\log^{\omega(1)} n$ ancillae.
\end{corollary}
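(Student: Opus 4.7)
The plan is to revisit the proof of \Cref{thm:acqnc2} and track where the bound on the number of ancillae enters, observing that setting $\eps=0$ relaxes the chain of inequalities enough to push $a$ from $\polylog(n)$ up to $n/\log^{\omega(1)}n$.

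First I would recall the two parameter constraints from the proof of \Cref{thm:acqnc2}. With the $\QNC^0$ circuit of depth $d=O(1)$ and size $s=\poly(n)$, the forward lightcone of the $a$ ancillae touches $r=2^d a=O(a)$ output qubits, and \Cref{lma:fool2} requires us to take $k=(\log s)^{O(d)}\cdot(r+\log^2 n)=\log^{O(1)}n\cdot(r+\log^2 n)$. The choice of $\delta$ must simultaneously satisfy the ``distance to maximally mixed'' bound from \Cref{col:page_design}, namely $n^{O(k)}\cdot(\eps+2^{-n})^{1/2}\cdot\delta^{-1}=\negl(n)$, and the $\AC^0$ distinguishing bound $m^k\delta=n^{O(k)}\delta=\negl(n)$.

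Next I would plug in $\eps=0$, so $(\eps+2^{-n})^{1/2}=2^{-n/2}$, and choose $\delta=2^{-n/4}$. Both constraints then reduce to a single requirement $n^{O(k)}\cdot 2^{-n/4}=\negl(n)$, i.e., $k\cdot\log n=o(n)$, equivalently $k=o(n/\log n)$. Substituting into $k=\log^{O(1)}n\cdot(r+\log^2 n)$, this is satisfied provided $r=n/\log^{\omega(1)}n$, and since $r=O(a)$ for constant-depth $\QNC^0$, this translates to the ancilla budget $a=n/\log^{\omega(1)}n$ claimed in the statement.

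With these parameters fixed, the rest of the argument is identical to \Cref{thm:acqnc2}: for any fixed set $K$ of $k$ output qubits, extend $K$ to $K\cup R$ to absorb the corrupted qubits, apply \Cref{col:page_design} (now with $\eps=0$, so we are using only the exact-$2$-design property and the Haar bound) to conclude that the reduced input state on the $b$ non-ancilla qubits in the backward lightcone is $\delta$-close to maximally mixed with probability $1-\negl(n)$, and therefore the $\QNC^0$ output marginal on $K$ is $\delta$-close to $\cU\otimes\cR$ for both ensembles. Since $\cU\otimes\cR$ has min-entropy at least $m-r$ and $r=n/\log^{\omega(1)}n$ still satisfies $r=o(n)$, \Cref{lma:fool2} applies and shows that the $\AC^0$ post-processing cannot distinguish the two ensembles. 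The only step that requires any care is verifying that the slack created by $\eps=0$ is exactly enough to accommodate $r$ as large as $n/\log^{\omega(1)}n$ in \Cref{lma:fool2}; I expect that to be the main (but routine) bookkeeping obstacle, and no new ideas beyond those in \Cref{thm:acqnc2} are needed.
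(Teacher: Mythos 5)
Your proposal is correct and follows essentially the same route as the paper's own (much terser) proof: set $\eps=0$, take $\delta=2^{-n/4}$, observe both the closeness-to-maximally-mixed and the $m^k\delta$ constraints reduce to $k=o(n/\log n)$, and back-solve through $k=\log^{O(1)}n\cdot(r+\log^2 n)$ and $r=2^d a$ to get $a=n/\log^{\omega(1)}n$. The bookkeeping you flag as the only point of care is exactly the content of the paper's two-line argument, so nothing is missing.
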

\begin{proof}
    In the proof of \Cref{thm:acqnc2}, when $\eps=0$ we can take $\delta=2^{-n/4}$, and thus to have $m^k\delta=\negl(n)$ for any $m=\poly(n)$ we only need $k=o(n/\log n)$. When $d=O(1)$, this is satisfies by any $a=2^{-d}r=n/\log^{\omega(1)} n$.
\end{proof}

\subsection{Pseudoentanglement against \texorpdfstring{$\AC^0\circ\QNC^0$}{AC0\bullet QNC0}}
In contrast to prior works that constructed pseudoentanglement from quantum secure one-way functions, here we prove that unconditional pseudoentanglement is possible against shallow circuits, using random phased subspace states (see \Cref{def:subspaces}). We will show that such states form good enough approximate $t$-designs, even when the phases are picked using a $2t$-wise independent function, and thus yield pseudorandomness by our previous results.

\begin{corollary}\label{col:pre}
    Let $\{\ket{\psi_{S,f}}\}$ be the ensemble of $d$-dimensional random phased subspace states, instantiated with a $4$-wise independent function $f:\zo^n\to\zo$. Then the following properties hold:
    \begin{itemize}
    \item The ensemble is indistinguishable from a Haar random ensemble against:
    \begin{itemize}
        \item $\QNC^0$ circuits, when $d=\omega(\log n)$;
        \item $\AC^0\circ\QNC^0$ circuits with $\polylog(n)$ ancillae, when $d=\log^{\omega(1)} n$;
    \end{itemize}
    \item  The von Neumann entanglement entropy across any cut is at most $d$.
    \end{itemize}
\end{corollary}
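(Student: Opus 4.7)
The strategy is to reduce this corollary to the PRS theorems already established in this section by showing that $\{\ket{\psi_{S,f}}\}$ with a $4$-wise independent $f$ still forms an $O(2^{-d})$-approximate state $2$-design, and then to derive the entanglement bound by a direct Schmidt-rank count.

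First I claim that the $2$-design analysis underlying \Cref{thm:qnc0cte}, whose full proof is deferred to \Cref{sec:appendixA}, goes through unchanged under the weaker $4$-wise independence hypothesis. The key observation is that the second-moment operator
\[\Ex[S,f]{\ket{\psi_{S,f}}\bra{\psi_{S,f}}^{\otimes 2}}
=\frac{1}{2^{2d}}\Ex[S]{\sum_{x_1,x_2,x_3,x_4\in S}\Ex[f]{(-1)^{f(x_1)+f(x_2)+f(x_3)+f(x_4)}}\ket{x_1,x_2}\bra{x_3,x_4}}\]
depends on $f$ only through its values at four (not necessarily distinct) inputs. Hence the inner expectation agrees exactly with its value under a uniformly random $f$, and the same $O(2^{t-d})=O(2^{-d})$ bound from \Cref{thm:qnc0cte} applies at $t=2$.

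With this design error in hand, I invoke the preceding PRS theorems in each regime. For $d=\omega(\log n)$, the resulting $\eps=2^{-\omega(\log n)}$ satisfies $\log\log(1/\eps)\geq\log\log n+\omega(1)$, which is precisely the hypothesis needed to apply \Cref{thm:prs} and its multi-output lightcone strengthening \Cref{thm:qnc_multiout} against $\QNC^0$ distinguishers. For $d=\log^{\omega(1)}n$, the error $\eps=2^{-\log^{\omega(1)}n}$ matches the hypothesis of \Cref{thm:acqnc2}, yielding security against $\AC^0\circ\QNC^0$ circuits with $\polylog(n)$ ancillae.

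For the entanglement entropy bound, fix any bipartition of the $n$ qubits into subsystems $A$ and $\bar A$, and write each $x\in S\subseteq\zo^n$ as $(x_A,x_{\bar A})$. Then
\[\ket{\psi_{S,f}}=\frac{1}{2^{d/2}}\sum_{x\in S}(-1)^{f(x)}\ket{x_A}\otimes\ket{x_{\bar A}}\]
is a tensor-product decomposition with at most $|S|=2^d$ terms, so by \Cref{schmidt rank} the Schmidt rank across this cut is at most $2^d$ and the von Neumann entanglement entropy is at most $d$. The only delicate point in the whole argument is confirming that the deferred second-moment calculation of \Cref{thm:qnc0cte} truly exposes $f$ at no more than four inputs so that $4$-wise independence suffices; this is bookkeeping rather than a new idea, and everything else is a direct invocation of results already in hand.
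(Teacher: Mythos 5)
Your proposal is correct and follows essentially the same route as the paper: observe that the $t$-th moment (here $t=2$) exposes $f$ at only $2t=4$ inputs so the Appendix~\ref{sec:appendixA} calculation and its $O(2^{t-d})$ design bound survive under $4$-wise independence, then invoke \Cref{thm:prs} and \Cref{thm:acqnc2} in the two parameter regimes, and finally bound the entanglement entropy by noting that \eqref{equation: Schmidt decomposition} is a tensor-product decomposition with at most $2^d$ terms so \Cref{schmidt rank} gives Schmidt rank at most $2^d$. Your explicit verification that $d=\omega(\log n)$ makes $\log\log(1/\eps)-\log\log n=\omega(1)$ is a detail the paper leaves implicit, but the argument is the same.
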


\begin{proof}
First notice that even when $f$ is $4$-wise independent (or in general, $2t$-wise independent) instead of truly random, the proof of the design property of $\{\ket{\psi_{S,f}}\}$ in \Cref{sec:appendixA} still holds. Specifically, in \Cref{eq:append}:
\[
    \Ex[f]{\ket{\psi_{S,f}}\bra{\psi_{S,f}}^{\otimes t}}
    =\frac{1}{2^{dt}}\sum_{\substack{x_1,\ldots,x_t\in S\\y_1,\ldots,y_t\in S}}\Ex[f]{(-1)^{f(x_1)+\ldots+f(x_t)+f(y_1)+\ldots+f(y_t)}}\ket{x_1\cdots x_t}\bra{y_1\cdots y_t},
\]
When $f$ is $2t$-wise independent, the expectation of $(-1)^{f(x_1)+\ldots+f(x_t)+f(y_1)+\ldots+f(y_t)}$ is the same as if $f$ is truly random. As a result, $\ket{x_1\cdots x_t}\bra{y_1\cdots y_t}$ still has non-zero coefficient out only when each element of $S$ appears even number of times in $(x_1,\ldots,x_t,y_1,\ldots,y_t)$. The rest of the proof is exactly the same as in \Cref{sec:appendixA}. Specifically for $t=2$, since $\{\ket{\psi_{S,f}}\}$ is an $O(2^{-d})$-approximate $2$-design, the indistinguishability from Haar random states follows from \Cref{thm:prs} and \Cref{thm:acqnc2}.

On the other hand, note that the states are given by
\begin{equation}
\label{equation: Schmidt decomposition}
    \frac{1}{2^{d/2}}\sum_{x\in S}(-1)^{f(x)}\ket{x},
\end{equation}
where $d$ is the dimension of the subspace. For any cut of the qubits, \eqref{equation: Schmidt decomposition} gives a tensor product decomposition of the state with at most $2^d$ terms. Hence, by \Cref{schmidt rank}, the Schmidt rank of the state is at most $2^d$ and the corresponding von Neumann entanglement entropy is at most $d$.
\end{proof}

\paragraph{Preparing phased subspace states.}
Subspace states are known to be efficiently preparable  with $\mathcal{O}(nd)$ gates \cite{kerenidis2022quantummachinelearningsubspace}. The $4$-wise independent function $f:\zo^n\to\zo$ can be constructed from seeds of length $O(n)$ in $\poly(n)$ time (see e.g. \cite[Section 3.5]{vadhan2012pseudorandomness}). The phases are put into the state using an efficient controlled operation.

\section{Parallel-query secure PRU against local \texorpdfstring{$\QNC^0$}{QNC0}}\label{subsection:pru}

In previous sections we examined the pseudorandom properties of state designs against shallow quantum circuits. In this section we turn to unitary designs and show that they are also pseudorandom when queried in parallel, but against the more restricted class of circuits that are \emph{geometrically} local. Specifically, we consider the distinguisher to be a circuit $C'\cdot (U^{\otimes t}\otimes \Id)\cdot C$, where $C,C'$ are both 1-dimensional geometrically local $\QNC^0$ circuits, and $U$ is either a unitary 2-design or a Haar random unitary applied on $n$ consecutive qubits. The following folklore fact about geometrically local $\QNC^0$ circuits is crucial for us:
\begin{lemma}\label{lma:SD}
    Let $C$ be an 1-dimensional local $\QNC$ circuit of depth $d$ on $n$ qubits. Then for every $k\in[n-1]$, the Schmidt rank of the state $C\ket{0^n}$ between the first $k$ qubits and the remaining $(n-k)$ qubits is at most $4^d$.
\end{lemma}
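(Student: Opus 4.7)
The plan is to track the Schmidt rank across the cut between qubits $k$ and $k+1$ as each of the $d$ layers of $C$ is applied to $\ket{0^n}$. The initial state is a product state, so its Schmidt rank across any cut equals $1$; the goal is to show that each layer multiplies the Schmidt rank by at most $4$, which immediately yields the claimed bound $4^d$.

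The first step exploits the geometric locality and the bounded fan-in (at most $2$). In a given layer, the only gate that can ``cross'' the cut is a $2$-qubit gate acting on qubits $k$ and $k+1$, and because each qubit is touched by at most one gate per layer, at most one such crossing gate can occur. Every other gate in the layer lies entirely within the first $k$ qubits or entirely within the last $n-k$ qubits. Such local unitaries do not change the Schmidt rank: they only apply invertible change-of-basis to the local Schmidt vectors $\ket{u_i}_A$ or $\ket{v_i}_B$.

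The main step is bounding the Schmidt rank increase from one crossing gate $U$. Using the operator Schmidt decomposition across the cut between its two qubits, we write $U=\sum_{j=1}^{r_U}P_j\otimes Q_j$ where $r_U\le 4$, since $U$ is a $4\times 4$ unitary viewed as an element of $L(\mathbb{C}^2)\otimes L(\mathbb{C}^2)$. Applied to a state of Schmidt rank $r$ expressed as $\sum_{i=1}^{r}\lambda_i\ket{u_i}_A\ket{v_i}_B$, the output
\[U\Paren{\sum_{i=1}^{r}\lambda_i\ket{u_i}_A\ket{v_i}_B}=\sum_{j=1}^{r_U}\sum_{i=1}^{r}\lambda_i\,(P_j\ket{u_i}_A)\otimes(Q_j\ket{v_i}_B)\]
is a tensor product decomposition with at most $4r$ terms, so by \Cref{schmidt rank} the Schmidt rank after the gate is at most $4r$. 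Iterating this across the $d$ layers gives the bound $4^d$.

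I do not expect any serious obstacle here; the only point to double-check is that local gates on either side of the cut can be absorbed into the $P_j$'s and $Q_j$'s without altering the factor of $4$, which is immediate from the identity $(A\otimes B)(P_j\otimes Q_j)(A'\otimes B')=(AP_jA')\otimes(BQ_jB')$. The bound $4$ on the operator Schmidt rank of an arbitrary two-qubit unitary (e.g.\ witnessed by $\mathrm{SWAP}=\tfrac12\sum_{P\in\{I,X,Y,Z\}}P\otimes P$) shows that the factor $4$ per layer cannot be improved in general, so $4^d$ is the natural bound this argument yields.
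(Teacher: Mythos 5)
Your proof is correct and follows essentially the same route as the paper's: an induction over layers in which the single gate crossing the cut is expanded as a sum of at most four tensor-product terms (the operator Schmidt decomposition of a two-qubit unitary), so each layer multiplies the Schmidt rank by at most $4$. The extra observations---that purely local gates leave the Schmidt rank unchanged and that $\mathrm{SWAP}$ witnesses tightness of the factor $4$---are accurate but not needed beyond what the paper already argues.
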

\begin{proof}
    We prove this by an induction over $d$, and the base case when $d=0$ is trivial. Suppose that after the first $d$ layer of gates we have the Schmidt decomposition $\sum_{i=1}^{4^d}\alpha_i\ket{v_i}\ket{w_i}$, where $\ket{v_i}$ is on $k$ qubits and $\ket{w_i}$ is on $(n-k)$ qubits. In layer $d+1$, only the gate $U$ (if it exists) that acts on the $k$-th and the $(k+1)$-th qubits would affect the Schmidt rank. We can write $U$ with an arbitrary tensor product decomposition
    $U=\sum_{j=1}^4 A_j\otimes B_j$ where $A_j,B_j\in\mathbb{C}^{2\times 2}$, so that the state after applying $U$ becomes
    \[\sum_{i=1}^{4^d}\sum_{j=1}^4\alpha_i (A_j\ket{v_i})\otimes(B_j\ket{w_i}).\]
    By \Cref{schmidt rank} we know that the above state has Schmidt rank at most $4^{d+1}$, and it is not affected by the remaining gates in the same layer. 
\end{proof}
We will make use of a stronger statement that allows us to perform the Schmidt decomposition recursively on the $t$ blocks of $n$ qubits (a notion that we borrow from \cite{GharibianK12}):
\begin{lemma}[Recursive Schmidt Decomposition]\label{lma:RSD}
    Let $C$ be a 1-dimensional local $\QNC$ circuit of depth $d$ on $tn$ qubits. Then we can write the state $C\ket{0^{tn}}$ as
    \begin{equation}\label{eq:RSD}
    C\ket{0^{tn}}=\sum_{i_1,\ldots,i_t=1}^{r}\alpha_{i_1,\ldots,i_t}\cdot \ket{v_{1,i_1}}\otimes\ket{v_{2,i_1,i_2}}\otimes\cdots\otimes\ket{v_{t,i_1,\ldots,i_t}}
    \end{equation}
    where the Schmidt rank $r\leq 4^d$, and $\ket{v_{\tau,i_1,\ldots,i_\tau}}$ is an $n$-qubit state. For every $\tau\in[t]$, $i_1,\ldots,i_\tau\in[r]$ and $i_\tau'\neq i_\tau$, we have the orthogonality condition 
    \[\braket{v_{\tau,i_1,\ldots,i_{\tau-1},i_\tau}|v_{\tau,i_1,\ldots,i_{\tau-1},i_\tau'}}=0,\]
    and the complex numbers $\alpha_{i_1,\ldots,i_t}$ satisfy $\sum|\alpha_{i_1,\ldots,i_t}|^2=1$. 
\end{lemma}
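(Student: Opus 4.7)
The plan is to iteratively apply the standard Schmidt decomposition between the first block and the rest, and recurse on the resulting residual states in the remaining blocks. To make this induction go through cleanly I would first observe that \Cref{lma:SD} actually bounds the Schmidt rank by $4^d$ across \emph{every} cut of $C\ket{0^{tn}}$ simultaneously (the induction on depth there makes no reference to the cut's location), and then prove the following strengthening: every $tn$-qubit state $\ket{\psi}$ whose Schmidt rank across each of the $t-1$ inter-block cuts $\{kn : k\in[t-1]\}$ is at most $r_0$ admits a recursive decomposition of the form \eqref{eq:RSD} with $r=r_0$. The lemma then follows by taking $\ket{\psi}=C\ket{0^{tn}}$ and $r_0=4^d$.

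For the inductive step I would apply the standard Schmidt decomposition at the first inter-block cut,
\[
\ket{\psi}=\sum_{i_1=1}^{r_0}\alpha^{(1)}_{i_1}\ket{v_{1,i_1}}\otimes\ket{\phi_{i_1}},
\]
with orthonormal families $\{\ket{v_{1,i_1}}\}$ on block $1$ and $\{\ket{\phi_{i_1}}\}$ on blocks $2$ through $t$, padding with zero-amplitude terms and orthonormal extensions if necessary to reach exactly $r_0$ summands. The key step is then to show that each residual $\ket{\phi_{i_1}}=(\alpha^{(1)}_{i_1})^{-1}(\bra{v_{1,i_1}}\otimes I)\ket{\psi}$ still has Schmidt rank at most $r_0$ across every remaining inter-block cut. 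For a cut at position $\tau n$ with $\tau\geq 2$, I would expose the Schmidt decomposition of $\ket{\psi}$ at that \emph{far} cut, writing $\ket{\psi}=\sum_j\gamma_j\ket{x_j}_{1..\tau}\otimes\ket{y_j}_{\tau+1..t}$ with at most $r_0$ terms, and contract $\bra{v_{1,i_1}}$ into each $\ket{x_j}$ to obtain an expression for $\ket{\phi_{i_1}}$ as a sum of at most $r_0$ product terms across that cut; \Cref{schmidt rank} then bounds its Schmidt rank there by $r_0$.

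With the Schmidt-rank bounds preserved, I would apply the inductive hypothesis to each $\ket{\phi_{i_1}}$ as a state on $t-1$ blocks, yielding coefficients $\beta^{(i_1)}_{i_2,\ldots,i_t}$ and vectors $\ket{v_{\tau,i_1,\ldots,i_\tau}}$ for $\tau\geq 2$ satisfying the required nested orthogonality. Setting $\alpha_{i_1,\ldots,i_t}:=\alpha^{(1)}_{i_1}\cdot\beta^{(i_1)}_{i_2,\ldots,i_t}$ then assembles the full recursive decomposition; the orthogonality condition at each level is inherited directly from the corresponding Schmidt step, and the global normalization $\sum|\alpha_{i_1,\ldots,i_t}|^2=1$ follows from $\|\ket{\psi}\|=1$. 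The base case $t=1$ is handled trivially by taking $\ket{v_{1,1}}=\ket{\psi}$, $\alpha_1=1$, and padding.

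The main obstacle is the Schmidt-rank preservation step: a naive attempt to bound the Schmidt rank of $\ket{\phi_{i_1}}$ by first decomposing at the near cut and then at later cuts would compound the rank blow-up at each level, so one really must expose the far cut in $\ket{\psi}$ \emph{before} projecting onto $\bra{v_{1,i_1}}$. This is the only place where the inductive structure could break, and it exploits precisely the fact that the Schmidt rank bound from \Cref{lma:SD} holds uniformly at every cut, not just one.
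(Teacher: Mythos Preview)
Your proposal is correct and follows essentially the same approach as the paper: iterate the Schmidt decomposition block by block, with the crucial step being to show that after peeling off the first block each residual state still has Schmidt rank at most $r$ across the later inter-block cuts. The paper proves this preservation via a submatrix-rank argument in the computational basis (the Schmidt rank of $\ket{w_{1,i_1}}$ is the rank of a row-block of the coefficient matrix of $C\ket{0^{tn}}$), whereas you contract $\bra{v_{1,i_1}}$ directly into the far-cut Schmidt decomposition of $\ket{\psi}$; these are two phrasings of the same observation.
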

\begin{proof}
    Let $r$ be the maximum Schmidt rank between the first $\tau n$ and the remaining $(t-\tau)n$ qubits, for any $\tau\in[t]$. By \Cref{lma:SD} we have $r\leq 4^d$. The recursive application of the Schmidt decomposition starts with the cut between the first $n$ qubits and the remaining $(t-1)n$ qubits:
    \[C\ket{0^{tn}}=\sum_{i_1=1}^{r}\alpha_{i_1}\cdot\ket{v_{1,i_1}}\otimes\ket{w_{1,i_1}}.\]
    
    The next step is to perform a Schmidt decomposition over each $\ket{w_{1,i_1}}$, for which we need an upper bound on the Schmidt rank. We can also write $\ket{w_{1,i_1}}$ in the computational basis to get
    \[C\ket{0^{tn}}=\sum_{i_1=1}^{r}\sum_{x\in\zo^n}\sum_{y\in\zo^{(t-2)n}}\alpha_{i_1}\beta_{i_1,x,y}\cdot\ket{v_{1,i_1}}\ket{x}\ket{y}.\]
    Since $\{\ket{v_{1,i_1}}\}$ can be expanded into an orthonormal basis on the first $n$ qubits, so can $\{\ket{v_{1,i_1}}\ket{x}\}$ on the first $2n$ qubits. As a result, the Schmidt rank of $C\ket{0^{tn}}$ between the first $2n$ and the remaining $(t-2)n$ qubits is exactly the rank of the $2^nr\times2^{(t-2)n}$ matrix $M$, where
    \[M((i_1,x),y)=\alpha_{i_1}\beta_{i_1,x,y}.\]
    On the other hand, whenever $\alpha_{i_1}\neq 0$, the Schmidt rank of $\ket{w_{1,i_1}}$ on the same cut is the rank of the submatrix of $\alpha_{i_1}^{-1}M$, with the row index $i_1$ fixed, and thus is at most $r$. Therefore we get
    \[C\ket{0^{tn}}=\sum_{i_1,i_2=1}^{r}\alpha_{i_1,i_2}\cdot \ket{v_{1,i_1}}\otimes\ket{v_{2,i_1,i_2}}\otimes\ket{w_{2,i_1,i_2}}.\]
    Continuing the process for every $\tau\in[t]$ results in a tree-like structure as in \eqref{eq:RSD}, and the sum of squares of the coefficients is guaranteed to be $1$ by the orthogonality of the states.
\end{proof}

We also need the following lemma in analogy to \Cref{lma:page}, but on cross (off-diagonal) terms conjugated with Haar random unitaries.
\begin{lemma}\label{lma:page_orth}
    Let $U$ be an $n$-qubit Haar random unitary, and let $A$ be a subsystem with $|A|=k$ qubits and $B=[n]\setminus A$. For every two $n$-qubit states $\ket{v}$ and $\ket{w}$ such that $\braket{v|w}=0$, we have
    \[\Ex{\Norm{\Tr_B[U\ket{v}\bra{w}U^\dagger]}_2^2}\leq 2^{k-n}.\]
\end{lemma}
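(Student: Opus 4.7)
The plan is to mirror the swap-operator strategy used in the proofs of \Cref{lma:statedesign} and \Cref{lma:approxdesign}, but with one essential twist: since $\ket{v}\bra{w}$ is not a density operator, I cannot close the argument by combining H\"older's inequality with a design bound, and must instead carry out the exact two-copy Haar twirl. The bilinear identity opening those earlier proofs carries over unchanged---it holds for arbitrary operators in both arguments of $\Tr_B$---so setting $X = U\ket{v}\bra{w}U^\dagger$ and letting $R$ be the same partial-swap operator as in \Cref{lma:statedesign},
\[\Ex{\Norm{\Tr_B[X]}_2^2} = \Tr\Bigl[\Ex{U^{\otimes 2} M (U^\dagger)^{\otimes 2}} \cdot R\Bigr],\]
where $M = \ket{v}\bra{w}\otimes\ket{w}\bra{v}$.

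Next I would invoke Schur--Weyl duality for the two-copy twirl $\Pi(M) := \Ex{U^{\otimes 2} M (U^\dagger)^{\otimes 2}}$, which lies in the commutant of $U^{\otimes 2}$ and must therefore equal $a\Id + bF$ for the full swap $F$ on the two $n$-qubit copies. Taking traces of both sides against $\Id$ and $F$ (using that $F$ commutes with $U^{\otimes 2}$) produces the linear system
\begin{align*}
2^{2n}\,a + 2^n\,b &= \Tr(M),\\
2^n\,a + 2^{2n}\,b &= \Tr(MF),
\end{align*}
and this is precisely where the orthogonality assumption pays off: $\Tr(M) = \abs{\braket{v|w}}^2 = 0$, while $\Tr(MF) = \Tr(\ket{v}\bra{w}\cdot\ket{w}\bra{v}) = 1$, yielding $a = -1/(2^n(2^{2n}-1))$ and $b = 1/(2^{2n}-1)$.

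To finish, I would compute the two partial traces $\Tr(R) = 2^{2n-k}$ directly and, using the decomposition $F = \mathrm{SWAP}_{AA'} \otimes \mathrm{SWAP}_{BB'}$ (so that $FR = \Id_{AA'} \otimes \mathrm{SWAP}_{BB'}$), $\Tr(FR) = 2^{n+k}$. Assembling the pieces yields the closed form
\[\Tr[\Pi(M) R] = \frac{2^{n+k} - 2^{n-k}}{2^{2n}-1},\]
which a short rearrangement shows is bounded by $2^{k-n}$ provided $k \leq n$. The main thing to be careful about is verifying that the orthogonality hypothesis forces the $a\Id$ contribution to be \emph{negative} rather than zero or positive---it is this negative contribution that shrinks the bound from roughly $2^{n+k}/(2^{2n}-1)$ down to the claimed $2^{k-n}$. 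I do not anticipate a deeper obstacle, since once the twirl has been reduced to two scalar unknowns the remainder is a mechanical Weingarten-style computation.
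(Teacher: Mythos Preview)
Your argument is correct and takes a genuinely different route from the paper. The paper attempts an elementary reduction: it asserts that ``without loss of generality'' one may take $U\ket{w}=\ket{0^n}$ and treat $U\ket{v}=\ket{u}$ as uniformly random in $\ket{0^n}^\perp$, then computes $\sum_{x\in\zo^k}\abs{\braket{x\,0^{n-k}\mid u}}^2$ directly to get the exact value $(2^k-1)/(2^n-1)$. You instead carry out the full two-copy Haar twirl via Schur--Weyl, solve for the coefficients of $\Id$ and $F$, and pair the result against the partial swap $R$, obtaining the closed form $(2^{n+k}-2^{n-k})/(2^{2n}-1)$.

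The two approaches actually give \emph{different} exact values, and yours is the correct one. The quantity $\Norm{\Tr_B[\ket{b}\bra{a}]}_2^2$ is not invariant under $(\ket{a},\ket{b})\mapsto(V\ket{a},V\ket{b})$, so fixing $U\ket{w}$ to a specific vector is not a legitimate symmetry reduction: for $n=2,k=1$ the conditional expectation given $U\ket{w}=\ket{00}$ is $1/3$, whereas given $U\ket{w}=(\ket{00}+\ket{11})/\sqrt{2}$ it is $1/2$; the true Haar average is $2/5$, matching your formula and a direct Weingarten check. Both values happen to lie below $2^{k-n}$, so the lemma survives, but your Schur--Weyl computation supplies a rigorous proof where the paper's shortcut does not. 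Your observation that the orthogonality hypothesis is exactly what makes the $a\Id$ contribution negative (and hence pushes $b\Tr(FR)=2^{n+k}/(2^{2n}-1)$ just below $2^{k-n}$) is also on point.
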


\begin{proof}
Without loss of generality, we can assume that $U\ket{w}=\ket{0^n}$, and $U\ket{v}=\ket{u}$ is a uniformly random state orthogonal to $\ket{0^n}$. In this case, we can write
\begin{align*}
    \Tr_B[U\ket{v}\bra{w}U^\dagger] &= \sum_{x\in\zo^{n-k}}(\Id_A\otimes \ket{x}\bra{x}_B)\ket{u}\bra{0^n}(\Id_A\otimes \ket{x}\bra{x}_B)\\
    &= (\Id_A\otimes \ket{0^{n-k}}\bra{0^{n-k}}_B)\ket{u}\bra{0^n}.
\end{align*}
And thus,
\begin{align*}
    \Norm{\Tr_B[U\ket{v}\bra{w}U^\dagger]}_2^2 &= \Tr\Brac{(\Id_A\otimes \ket{0^{n-k}}\bra{0^{n-k}}_B)\ket{u}\braket{0^n|0^n}\bra{u}(\Id_A\otimes \ket{0^{n-k}}\bra{0^{n-k}}_B)} \\
    &= \Tr\Brac{(\Id_A\otimes \ket{0^{n-k}}\bra{0^{n-k}}_B)\ket{u}\bra{u}} \\
    &= \sum_{x\in\zo^k}\Abs{\braket{x_A0^{n-k}_B|u}}^2.
\end{align*}
As a side note, this directly implies that
\begin{equation}\label{eq:tr_frob}
    \Norm{\Tr_B[U\ket{v}\bra{w}U^\dagger]}_2\leq 1
\end{equation}
regardless of the choice of $\ket{v}$, $\ket{w}$ or $U$, which will be useful later on.

Since $\ket{u}$ is a uniformly random state orthogonal to $\ket{0^n}$, for every $x\in\zo^n\setminus\{0^n\}$, $\braket{x|u}$ has the same distribution. Therefore every term in the summation above, except $\Abs{\braket{0^n|u}}^2$, has the same expectation which is $1/(2^n-1)$. Therefore we conclude that
\[\Ex{\Norm{\Tr_B[U\ket{v}\bra{w}U^\dagger]}_2^2} = \frac{2^k-1}{2^n-1}< 2^{k-n}. \qedhere\]
\end{proof}

The above result allows us to show that the output of non-adaptive queries of a Haar random unitary has the property that every subsystem of $k$ qubits are almost maximally mixed, similar to \Cref{col:page_haar}, when the input state admits the recursive Schmidt decomposition.

\begin{corollary}\label{col:pru_haar}
    Let $U$ be an $n$-qubit Haar random unitary. For $t\geq 1$, let $\ket{\psi}$ be a $tn$-qubit state that admits the recursive Schmidt decomposition with rank $r$ as in \Cref{lma:RSD}. Let $\rho_A$ be the reduced density matrix of $U^{\otimes t}\ket{\psi}\bra{\psi}U^{\dagger\otimes t}$ over a subsystem $A$ with $|A|=k$ qubits. Then for every $\delta>0$, with probability at least $1- O(r)\cdot2^{k-n/2}\cdot\delta^{-1}$ over $U$, it holds that
    \[\Norm{\rho_A-\frac{1}{2^k}\Id_k}_1\leq \delta.\]
\end{corollary}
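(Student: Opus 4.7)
The plan is to mirror the argument of \Cref{col:page_haar}: bound $\E\Norm{\rho_A - \Id_k/2^k}_2^2 = \E[\Tr(\rho_A^2)]-2^{-k}$, convert to trace norm via $\Norm{X}_1 \leq 2^{k/2}\Norm{X}_2$ together with Jensen's inequality, and then apply Markov's inequality. The target Frobenius bound is $\E[\Tr(\rho_A^2)] - 2^{-k} \leq O(r^2)\cdot 2^{k-n}$, which gives $\E\Norm{\rho_A - \Id_k/2^k}_1 \leq O(r)\cdot 2^{k-n/2}$ and hence the stated tail estimate.

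To do this, first apply \Cref{lma:RSD} to write $\ket{\psi} = \sum_{\vec{i}}\alpha_{\vec{i}}\bigotimes_\tau\ket{v_{\tau,\vec{i}_\tau}}$, so that $\rho_A = \sum_{\vec{i},\vec{i}'}\alpha_{\vec{i}}\overline{\alpha_{\vec{i}'}}\bigotimes_\tau T_\tau(\vec{i}_\tau,\vec{i}'_\tau)$, where $T_\tau(\vec{i}_\tau,\vec{i}'_\tau) = \Tr_{B_\tau}[U\ket{v_{\tau,\vec{i}_\tau}}\bra{v_{\tau,\vec{i}'_\tau}}U^\dagger]$ and $B_\tau$ is the complement of the $\tau$-th block of $A$. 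Separate this sum into the diagonal ($\vec{i} = \vec{i}'$) and off-diagonal ($\vec{i}\neq\vec{i}'$) parts, and use $\sum|\alpha_{\vec{i}}|^2 = 1$ to subtract $\Id_k/2^k$ on the diagonal. For each diagonal $\vec{i}$, the operator $\bigotimes_\tau T_\tau(\vec{i}_\tau,\vec{i}_\tau)$ is a tensor product whose block-wise factors are each marginally the reduced density matrix of the Haar random state $U\ket{v_{\tau,\vec{i}_\tau}}$, so \Cref{col:page_haar} applied block-wise together with a trace-norm telescope controls the diagonal contribution. For each off-diagonal pair, let $\tau^*$ be the first differing coordinate; by the Schmidt orthogonality $\braket{v_{\tau^*,\vec{i}_{\tau^*}}|v_{\tau^*,\vec{i}'_{\tau^*}}} = 0$, \Cref{lma:page_orth} yields $\E\Norm{T_{\tau^*}(\vec{i}_{\tau^*},\vec{i}'_{\tau^*})}_2^2 \leq 2^{k_{\tau^*}-n}$, while the universal bound $\Norm{T_\tau}_2 \leq 1$ from \eqref{eq:tr_frob} handles the remaining blocks by submultiplicativity of the Frobenius norm on tensor products.

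The main difficulty is summing these per-pair bounds without a factor worse than $O(r)$: a naive triangle-plus-Cauchy--Schwarz on $\sum_{\vec{i}\neq\vec{i}'}|\alpha_{\vec{i}}\alpha_{\vec{i}'}|$ only yields $r^t$, which is far too loose. To recover a linear-in-$r$ bound, I would induct on $t$, peeling off the outermost Schmidt layer as $\ket{\psi} = \sum_{i_1}\alpha_{i_1}\ket{v_{1,i_1}}\otimes\ket{\psi_{i_1}}$ with $\{\ket{v_{1,i_1}}\}$ and $\{\ket{\psi_{i_1}}\}$ orthonormal and each conditional $\ket{\psi_{i_1}}$ admitting a rank-$r$ recursive Schmidt decomposition on $(t-1)n$ qubits. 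The diagonal-in-$i_1$ terms combine \Cref{col:page_haar} on the first block with the inductive hypothesis on the tail, while the off-diagonal-in-$i_1$ terms require a Frobenius bound on the cross operator $\rho_{A_{\geq 2}}(i_1,i_1') = \Tr_{B_{\geq 2}}[U^{\otimes(t-1)}\ket{\psi_{i_1}}\bra{\psi_{i_1'}}U^{\dagger\otimes(t-1)}]$ that exploits the global orthogonality $\braket{\psi_{i_1'}|\psi_{i_1}}=0$. Establishing this multi-copy analogue of \Cref{lma:page_orth}, essentially a higher-order Haar moment estimate for $U^{\otimes(t-1)}$ rather than for a single Haar $U$, appears to be the technical crux; once this $\ell^2$ bound is in place, the conversion to $\ell^1$ via $\Norm{\cdot}_1 \leq 2^{k/2}\Norm{\cdot}_2$ and the concluding Markov inequality mirror the end of the proof of \Cref{col:page_haar}.
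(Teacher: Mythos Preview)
Your overall strategy---decompose via the recursive Schmidt decomposition, treat the diagonal part by a block-wise telescope using \Cref{lma:page}, treat the off-diagonal part using \Cref{lma:page_orth} at the first index where $\vec{i}$ and $\vec{i}'$ differ, then convert Frobenius to trace norm and apply Markov---is exactly the paper's approach. The paper also groups the off-diagonal terms by the smallest $\tau$ with $i_\tau\neq j_\tau$.

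Where you go astray is the ``technical crux''. No multi-copy analogue of \Cref{lma:page_orth} is needed, and the orthogonality $\braket{\psi_{i_1'}|\psi_{i_1}}=0$ plays no role for the tail. The bound \eqref{eq:tr_frob} that you already invoke extends verbatim to the whole tail: for \emph{any} unit vectors $\ket{w},\ket{w'}$ on $(t-1)n$ qubits one has, deterministically and with no Haar averaging,
\[
\Norm{\Tr_{B_{\geq 2}}\bigl[U^{\otimes(t-1)}\ket{w}\bra{w'}U^{\dagger\otimes(t-1)}\bigr]}_2^2
=\Tr\bigl[(\ket{w'}\bra{w}\otimes\ket{w}\bra{w'})\cdot R\bigr]\leq\Norm{R}_\infty=1,
\]
by the same swap-trick calculation. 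So $\Norm{\rho_{A_{\geq 2}}(i_1,i_1')}_2\leq 1$ always, and all the smallness in the off-diagonal-in-$i_1$ term comes from the \emph{first} block alone via the single-copy \Cref{lma:page_orth}: $\E\Norm{T_1(i_1,i_1')}_2\leq 2^{(k_1-n)/2}$. Summing over $i_1\neq i_1'$ with weights $|\alpha_{i_1}\alpha_{i_1'}|$ then yields the single factor of $r$ by Cauchy--Schwarz, since $\sum_{i_1,i_1'}|\alpha_{i_1}||\alpha_{i_1'}|=(\sum_{i_1}|\alpha_{i_1}|)^2\leq r$.

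The paper carries this out non-inductively: for each $\tau$ it collapses the last $t-\tau$ blocks into a single unit vector $\ket{w_{\tau,i_1,\ldots,i_\tau}}$ (the tail of the recursive Schmidt decomposition stopped at level $\tau$), bounds its partial-trace Frobenius norm by $1$, and applies \Cref{lma:page_orth} only to block $\tau$. Your induction is precisely this argument unrolled; once you replace the phantom multi-copy lemma by the trivial bound $\Norm{\cdot}_2\leq 1$ on the tail, your proof and the paper's coincide. (One minor point: both arguments bound $\E\Norm{\rho_A-2^{-k}\Id_k}_2$ directly by triangle inequality on the decomposition, not $\E\Norm{\cdot}_2^2$ as your opening sentence suggests; squaring first would reintroduce the cross-terms you are trying to avoid.)
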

\begin{proof}
    To take the partial trace for subsystem $A$, we assume that $A$ consists of $k_1,\ldots,k_t$ qubits in each block of $n$ qubits respectively, and let $B_\tau$ be the part outside $A$ in the $\tau$-th block. With the recursive Schmidt decomposition  \eqref{eq:RSD} we can write 
    \begin{equation}\label{eq:pru}
    U^{\otimes t}\ket{\psi}\bra{\psi}U^{\dagger\otimes t} = \sum_{\substack{i_1,\ldots,i_t,\\j_1,\ldots,j_t=1}}^{r}
    \alpha_{i_1,\ldots,i_t}\overline{\alpha_{j_1,\ldots,j_t}}\cdot U\ket{v_{1,i_1}}\bra{v_{1,j_1}}U^\dagger\otimes\cdots\otimes
    U\ket{v_{t,i_1,\ldots,i_t}}\bra{v_{t,j_1,\ldots,j_t}}U^\dagger.
    \end{equation}
    In addition, we can also write the decomposition \eqref{eq:RSD} only to a certain level $\tau<t$ to get
    \[\ket{\psi}=\sum_{i_1,\ldots,i_\tau=1}^{r}\alpha_{i_1,\ldots,i_\tau}\cdot \ket{v_{1,i_1}}\otimes\ket{v_{2,i_1,i_2}}\otimes\cdots\otimes\ket{v_{\tau,i_1,\ldots,i_\tau}}\otimes\ket{w_{\tau,i_1,\ldots,i_\tau}},\]
    where $\ket{w_{\tau,i_1,\ldots,i_\tau}}$ is a state on $(t-\tau)n$ qubits such that
    \[\alpha_{i_1,\ldots,i_\tau}\ket{w_{\tau,i_1,\ldots,i_\tau}}=\sum_{i_{\tau+1},\ldots,i_t=1}^r \alpha_{i_1,\ldots,i_t}\cdot\ket{v_{\tau+1,i_1,\ldots,i_{\tau+1}}}\otimes\cdots\otimes\ket{v_{t,i_1,\ldots,i_t}}.\]
    Notice that it also implies
    \[|\alpha_{i_1,\ldots,i_\tau}|^2=\sum_{i_{\tau+1},\ldots,i_t=1}^r|\alpha_{i_1,\ldots,i_t}|^2.\]
    This way, we can group the summands in \eqref{eq:pru} depending on the smallest coordinate $\tau$ such that $i_\tau\neq j_\tau$ (when $\tau=t+1$, it means that $(i_1,\ldots,i_t)$ is identical to $(j_1,\ldots,j_t)$), and have
    \begin{align}
    U^{\otimes t}\ket{\psi}\bra{\psi}U^{\dagger\otimes t} = &\  \sum_{\tau=1}^{t+1}\sum_{i_1,\ldots,i_\tau=1}^r\sum_{j_\tau\neq i_\tau}
    \alpha_{i_1,\ldots,i_\tau}\overline{\alpha_{i_1,\ldots,i_{\tau-1},j_\tau}}\cdot U\ket{v_{1,i_1}}\bra{v_{1,i_1}}U^\dagger\otimes\cdots \nonumber\\
    &\ \otimes U\ket{v_{\tau,i_1,\ldots,i_\tau}}\bra{v_{\tau,i_1,\ldots,i_{\tau-1},j_\tau}}U^\dagger\otimes
    U^{\otimes(t-\tau)}\ket{w_{\tau,i_1,\ldots,i_\tau}}\bra{w_{\tau,i_1,\ldots,i_{\tau-1},j_\tau}}U^{\dagger\otimes(t-\tau)}. \label{eq:pru2}
    \end{align}
    
    Now consider each summand in \eqref{eq:pru2} with $\tau\leq t$. By \Cref{lma:page_orth}, we have
    \[\Ex{\Norm{\Tr_{B_\tau}[U\ket{v_{\tau,i_1,\ldots,i_\tau}}\bra{v_{\tau,i_1,\ldots,i_{\tau-1},j_\tau}}U^\dagger]}_2^2} \leq 2^{k_\tau-n}.\]
    Therefore, after taking the partial trace, with the bound \eqref{eq:tr_frob} on the Frobenius norms of the other blocks, we can bound the expected Frobenius norm of the entire summand by $\Abs{\alpha_{i_1,\ldots,i_\tau}\overline{\alpha_{i_1,\ldots,i_{\tau-1},j_\tau}}}\cdot 2^{(k_{\tau}-n)/2}$. Thus by linearity of expectation and triangular inequality, these summands in total has expected Frobenius norm of at most
    \begin{align*}
    &\ \sum_{\tau=1}^{t}\sum_{i_1,\ldots,i_\tau=1}^r\sum_{j_\tau\neq i_\tau}
    \Abs{\alpha_{i_1,\ldots,i_\tau}\overline{\alpha_{i_1,\ldots,i_{\tau-1},j_\tau}}}\cdot 2^{(k_\tau-n)/2} \\
    = &\ \sum_{\tau=1}^{t}\sum_{i_1,\ldots,i_{\tau-1}=1}^r\left(\sum_{i_\tau=1}^r\Abs{\alpha_{i_1,\ldots,i_\tau}}\right)^2\cdot 2^{(k_\tau-n)/2} \\
    \leq &\ \sum_{\tau=1}^t\sum_{i_1,\ldots,i_\tau=1}^r\Abs{\alpha_{i_1,\ldots,i_\tau}}^2\cdot r\cdot2^{(k_\tau-n)/2} \\
    \leq &\ r\cdot 2^{(k-n)/2}.
    \end{align*}

    The remaining summands are those with $(i_1,\ldots,i_t)=(j_1,\ldots,j_t)$, and their sum is exactly
    \[\sum_{i_1,\ldots,i_t=1}^r\Abs{\alpha_{i_1,\ldots,i_t}}^2\cdot U\ket{v_{1,i_1}}\bra{v_{1,i_1}}U^\dagger\otimes\cdots\otimes
    U\ket{v_{t,i_1,\ldots,i_t}}\bra{v_{t,i_1,\ldots,i_t}}U^\dagger.\]
    By \Cref{lma:page}, similarly to the deduction in \Cref{col:page_haar}, we know that the partial trace in each block, denoted by
    $M_\tau=\Tr_{B_\tau}[U\ket{v_{\tau,i_1,\ldots,i_\tau}}\bra{v_{\tau,i_1,\ldots,i_\tau}}U^\dagger]$,satisfies that
    \[\Ex{\Norm{M_\tau-\frac{1}{2^{k_\tau}}\Id_{k_\tau}}_2}\leq 2^{(k_\tau-n)/2}.\]
    Thus by a hybrid argument, we have
    \begin{align*}
        \Ex{\Norm{M_1\otimes\cdots\otimes M_t-\frac{1}{2^k}\Id_k}_2} &
        \leq \sum_{\tau=1}^t\Ex{\Norm{M_1\otimes\cdots\otimes M_{\tau-1}\otimes\left(M_\tau-\frac{1}{2^{k_\tau}}\Id_{k_\tau}\right)}_2}
        \\
        &\leq \sum_{\tau=1}^t 2^{(k_\tau-n)/2}
        \leq2^{(k-n)/2}.
    \end{align*}
    
    Since $\sum\Abs{\alpha_{i_1,\ldots,i_t}}^2=1$, we conclude that
    \[\Ex{\Norm{\rho_A-\frac{1}{2^k}\Id_k}_1}\leq 2^{k/2}\cdot \Ex{\Norm{\rho_A-\frac{1}{2^k}\Id_k}_2}\leq (r+1)\cdot 2^{k-n/2}.\]
    By Markov's inequality, we know that $\Norm{\rho_A-\frac{1}{2^k}\Id_k}_1\leq\delta$ holds with probability at least $1-(r+1)\cdot2^{k-n/2}\cdot\delta^{-1}$.
\end{proof}
Since the proof of \Cref{col:pru_haar} only uses the second moment properties of $U$ (\Cref{lma:page} and \Cref{lma:page_orth} to be exact), using \Cref{lma:approxdesign} we can also conclude the following for approximate unitary 2-designs.
\begin{corollary}\label{col:pru_design}
    Let $\{U_i\}$ be an $\eps$-approximate unitary $2$-design on $n$ qubits. For $t\geq 1$, let $\ket{\psi}$ be a $tn$-qubit state that admits the recursive Schmidt decomposition with rank $r$ as in \Cref{lma:RSD}. Let $\rho_A$ be the reduced density matrix of $U_i^{\otimes t}\ket{\psi}\bra{\psi}U_i^{\dagger\otimes t}$ over a subsystem $A$ with $|A|=k$ qubits. Then for every $\delta>0$, with probability at least $1-2^{O(k)}\cdot r\cdot(\eps+2^{-n})^{1/2}\cdot\delta^{-1}$ over $i$, it holds that
    \[\Norm{\rho_A-\frac{1}{2^k}\Id_k}_1\leq \delta.\]
\end{corollary}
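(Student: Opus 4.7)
The plan is to mirror the proof of \Cref{col:pru_haar} line-by-line, replacing the two Haar second moment inputs (\Cref{lma:page} for diagonal blocks and \Cref{lma:page_orth} for off-diagonal blocks) with their approximate-design counterpart provided by \Cref{lma:approxdesign}. That lemma tells us that for every $\ket{v},\ket{w}$ and every subsystem $B$ the ensemble satisfies
\[\Ex[i]{\Norm{\Tr_B[U_i\ket{v}\bra{w}U_i^\dagger]}_2^2}\leq \Ex{\Norm{\Tr_B[U_\haar\ket{v}\bra{w}U_\haar^\dagger]}_2^2}+\eps,\]
so in either the diagonal ($\ket{v}=\ket{w}$) or the orthogonal ($\braket{v|w}=0$) case we get a second-moment bound $\leq 2^{|A|-n}+\eps$, differing from the Haar bound by only the additive $\eps$. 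I also use the universal Frobenius bound \eqref{eq:tr_frob}, which is independent of the unitary and hence holds for each $U_i$.

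First I would decompose $U_i^{\otimes t}\ket{\psi}\bra{\psi}U_i^{\dagger\otimes t}$ using the recursive Schmidt decomposition of \Cref{lma:RSD} and group the summands by the smallest index $\tau$ where $(i_1,\ldots,i_\tau)$ and $(j_1,\ldots,j_\tau)$ diverge, reproducing the expression \eqref{eq:pru2}. For each cross-term with $\tau\leq t$, the $\tau$-th block is an off-diagonal conjugation $U_i\ket{v_{\tau,\ldots,i_\tau}}\bra{v_{\tau,\ldots,j_\tau}}U_i^\dagger$ of two orthogonal states, so \Cref{lma:approxdesign} plus \Cref{lma:page_orth} gives expected squared Frobenius norm at most $2^{k_\tau-n}+\eps$, while the other blocks contribute Frobenius norms at most $1$ each. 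Taking expectations, applying Cauchy--Schwarz, and summing exactly as in \Cref{col:pru_haar} produces a total expected Frobenius norm for these cross-contributions of at most $r\cdot(\eps+2^{-n})^{1/2}\cdot 2^{k/2}$.

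Next I would handle the diagonal summands with $(i_1,\ldots,i_t)=(j_1,\ldots,j_t)$, which form the convex combination $\sum|\alpha_{i_1,\ldots,i_t}|^2\cdot M_1\otimes\cdots\otimes M_t$ where $M_\tau=\Tr_{B_\tau}[U_i\ket{v_{\tau,\ldots}}\bra{v_{\tau,\ldots}}U_i^\dagger]$. Applying \Cref{lma:approxdesign} with $\ket{v}=\ket{w}$ together with the Haar bound in \Cref{lma:page} yields
\[\Ex[i]{\Norm{M_\tau-\tfrac{1}{2^{k_\tau}}\Id_{k_\tau}}_2}\leq (\eps+2^{k_\tau-n})^{1/2},\]
and the same hybrid telescoping used in \Cref{col:pru_haar}, combined with $\Norm{M_\tau}_2\leq 1$, gives $\Ex[i]{\Norm{M_1\otimes\cdots\otimes M_t-2^{-k}\Id_k}_2}\leq t\cdot(\eps+2^{-n})^{1/2}\cdot 2^{k/2}$, which is preserved under the convex combination.

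Finally I would combine both contributions, pass from the Frobenius bound to trace via $\Norm{\cdot}_1\leq 2^{k/2}\Norm{\cdot}_2$, obtaining $\Ex[i]{\Norm{\rho_A-2^{-k}\Id_k}_1}\leq O(r)\cdot 2^k\cdot(\eps+2^{-n})^{1/2}$, and apply Markov's inequality to extract the claimed probability $1-2^{O(k)}\cdot r\cdot(\eps+2^{-n})^{1/2}\cdot\delta^{-1}$. The only nontrivial bookkeeping point, and the one I would check most carefully, is that the additive $\eps$ from \Cref{lma:approxdesign} sits inside the square root alongside the Haar $2^{k_\tau-n}$ in precisely the same slot, so it gets absorbed into the common factor $(\eps+2^{-n})^{1/2}$ without blowing up the $r$-dependence or introducing an undesired $t$-dependence; any worse grouping would spoil the final probability bound.
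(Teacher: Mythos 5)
Your proposal is correct and matches the paper's approach exactly: the paper proves \Cref{col:pru_design} by the one-line observation that the proof of \Cref{col:pru_haar} only uses the second-moment properties of $U$ (namely \Cref{lma:page} and \Cref{lma:page_orth}), so \Cref{lma:approxdesign} lets one substitute the $\eps$-approximate design at the cost of an additive $\eps$ inside each square root, which is precisely the bookkeeping you carry out. Your worry about the $t$-dependence is handled automatically since blocks with $k_\tau=0$ contribute nothing (diagonal blocks become the scalar $1$ and off-diagonal blocks trace to $\braket{w|v}=0$), so the sums effectively run over at most $k$ blocks and are absorbed into the $2^{O(k)}$ factor.
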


Combining the above corollaries together, we obtain the desired pseudorandomness against geometrically local $\QNC^0$ circuits.
\begin{theorem}
\label{thm:qncpru}
    Let $\{U_i\}$ be an $\eps$-approximate unitary $2$-design on $n$ qubits for $\eps=2^{-\Omega(n)}$. Then $\{U_i\}$ is non-adaptive secure PRU against $1$-dimensional geometrically local $\QNC$ circuits of depth $d=\log n-\omega(1)$. Moreover, the pre-processing part of the $\QNC$ circuit (before applying $U_i$) could have depth up to $o(n)$.
\end{theorem}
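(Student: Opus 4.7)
The plan is to combine three tools from this section: the recursive Schmidt decomposition that a $1$-dimensional local $\QNC$ pre-processing imposes on the queried state (\Cref{lma:RSD}), the concentration of local reduced density matrices of the post-query state for Haar (\Cref{col:pru_haar}) and for an approximate design (\Cref{col:pru_design}), and the small backward lightcone of a shallow $1$-dimensional local post-processing. The overarching strategy is to argue that, with probability $1-\negl(n)$ over the choice of $U$ from either ensemble, the state $\rho_A=\Tr_{A^c}[U^{\otimes t}C\ket{0^{tn+a}}\bra{0^{tn+a}}C^\dagger U^{\dagger\otimes t}]$ on the backward lightcone $A$ of any designated output qubit of $C'$ is $\negl(n)$-close in trace distance to the maximally mixed state $\Id/2^{|A|}$, so that by the triangle inequality and the data-processing inequality, the bias on any observable supported in this lightcone is also $\negl(n)$, which suffices for the non-adaptive PRU security.

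First I would apply \Cref{lma:RSD} to $\ket{\psi}=C\ket{0^{tn+a}}$ to obtain a recursive Schmidt decomposition across the $t$ query blocks with rank $r\leq 4^{d_{\text{pre}}}=2^{o(n)}$. A minor technical point is the presence of ancillae: I would absorb each ancilla qubit into an adjacent $U$-block along the $1$-dimensional line, so that \Cref{lma:RSD} still applies with the same rank bound, since the Schmidt rank across each cut separating two consecutive $U$-blocks remains at most $4^{d_{\text{pre}}}$. Next, the $1$-dimensional geometric locality of $C'$ at depth $d_{\text{post}}=\log n-\omega(1)$ yields $|A|\leq 2d_{\text{post}}+1=O(\log n)$, and in particular $2^{|A|}=\poly(n)$.

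Then I would invoke \Cref{col:pru_haar} and \Cref{col:pru_design} on the subsystem $A$: we obtain $\Norm{\rho_A-\Id/2^{|A|}}_1\leq\delta$ with probability at least $1-O(r)\cdot 2^{|A|-n/2}\cdot\delta^{-1}$ in the Haar case and $1-2^{O(|A|)}\cdot r\cdot(\eps+2^{-n})^{1/2}\cdot\delta^{-1}$ in the design case. Substituting $r=2^{o(n)}$, $2^{|A|}=\poly(n)$, and $\eps^{1/2}=2^{-\Omega(n)}$, I can choose $\delta=2^{-\Theta(n)}$ with the exponent below the $\Omega(n)$ in $\eps$ so that both failure probabilities are negligible and $\delta$ itself is negligible. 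Conditioning on both high-probability events, the triangle inequality gives $\Norm{\rho_A^{\haar}-\rho_A^{\text{design}}}_1\leq 2\delta=\negl(n)$, and the bias on the designated output qubit of $C'$ inherits this bound by data processing.

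The main obstacle I expect is the parameter matching across the three exponential error sources in the probability bounds: $r=2^{o(n)}$ from the pre-processing depth, $2^{O(|A|)}=\poly(n)$ from the post-processing lightcone, and $(\eps+2^{-n})^{-1/2}=2^{\Omega(n)/2}$ from the design accuracy. The three quantitative hypotheses $\eps=2^{-\Omega(n)}$, $d_{\text{pre}}=o(n)$, and $d_{\text{post}}=\log n-\omega(1)$ are tightly coupled in the sense that a single choice of $\delta$ must simultaneously dominate all three while itself being negligible; verifying that such a $\delta$ exists is where the theorem's precise hypotheses are invoked in combination, and it is the step I would want to write out most carefully.
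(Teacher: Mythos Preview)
Your proposal is correct and follows essentially the same approach as the paper's proof, which likewise invokes \Cref{lma:RSD} for the pre-processed state, bounds the post-processing lightcone, applies \Cref{col:pru_haar} and \Cref{col:pru_design}, and then checks the parameter inequality $2^{O(k)}\cdot r\cdot(\eps+2^{-n})^{1/2}=\negl(n)$. The only minor difference is that you exploit the tighter $1$D-local lightcone bound $|A|\le 2d_{\text{post}}+1=O(\log n)$, whereas the paper uses the generic $\QNC$ bound $k=2^d=o(n)$; both suffice, and your more explicit treatment of ancillae (which the paper glosses over) is a reasonable addition, though making \Cref{col:pru_haar} and \Cref{col:pru_design} apply verbatim after absorbing ancillae into blocks would still require a small extension of those statements.
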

\begin{proof}
    The proof follows from that of \Cref{thm:prs}. Since the output of the $\QNC$ circuit depends on only $k=2^d=o(n)$ qubits from the output of $U^{\otimes t}$, by \Cref{col:pru_haar} and \Cref{col:pru_design} we know that the outputs have $\negl(n)$ trace distance between the Haar random $U$ and unitary design $\{U_i\}$, with probability $1-\negl(n)$ over the choice of the unitaries. This is because
    \[2^{O(k)}\cdot r\cdot(\eps+2^{-n})^{1/2}\leq\negl(n)\]
    when $k=o(n)$ and $\eps=2^{-\Omega(n)}$, and $r\leq 2^d$ from \Cref{lma:RSD}. In addition, the circuit depth $d$ used to bound the Schmidt rank $r$ only concerns the depth of the pre-processing part of the circuit, which can be raised up to $o(n)$.
\end{proof}

Although \Cref{col:pru_haar} and \Cref{col:pru_design} have virtually the same form as \Cref{col:page_haar} and \Cref{col:page_design}, we cannot use the techniques in \Cref{subsection: AC0} to directly obtain the similar security of PRU against $\QNC^0$ circuits with $\AC^0$ post-processing. The reason is that in \Cref{subsection: AC0} we have to limit the number of ancillae, and here they correspond to the qubits that $U$ is not applied to, which we cannot put any natural restrictions on. However, in the artificial scenario where we force the the unitary to be applied in parallel on all qubits (and therefore allow no ancillae for the post-processing $\AC^0\circ\QNC^0$ circuit), we can obtain the following statement in analogy to \Cref{thm:acqnc}:
\begin{theorem}
\label{thm:acqncpru}
    Let $\{U_i\}$ be an $\eps$-approximate unitary $2$-design on $n$ qubits for $\eps=2^{-\log^{\omega(1)}n}$. Then $\{U_i\}$ is PRU against a subclass of $\AC^0\circ\QNC^0$ circuits on a multiple of $n$ qubits, where $U_i$ is applied non-adaptively over all the qubits used by the circuit, and the pre-processing $\QNC^0$ circuit before applying $U_i$ is $1$-dimensional geometrically local. 
\end{theorem}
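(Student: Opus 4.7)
The plan is to merge the two techniques developed earlier in the paper: the recursive Schmidt structure from \Cref{thm:qncpru} and the almost $k$-wise-independence argument from \Cref{thm:acqnc}. The key use of the ``subclass'' restriction is that $U_i$ being applied on every qubit precludes any ancillae in the post-processing, exactly as in the $\AC^0\circ\QNC^0$-PRS case with no ancillae treated in \Cref{thm:acqnc}.

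First, since the pre-processing is a $1$-dimensional geometrically local $\QNC^0$ circuit of constant depth $d$, \Cref{lma:RSD} implies that the state fed into $U_i^{\otimes t}$ admits a recursive Schmidt decomposition of rank $r\leq 4^d=O(1)$. This lets us invoke \Cref{col:pru_design} (and \Cref{col:pru_haar}): for any fixed subsystem $A$ of $k$ qubits at the output of $U_i^{\otimes t}$, the reduced density matrix is $\delta$-close in trace distance to $2^{-k}\Id_k$, with probability $1-2^{O(k)}\cdot r\cdot(\eps+2^{-n})^{1/2}\cdot\delta^{-1}$ over the choice of $U_i$, and analogously over the Haar measure.

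Second, I would set the parameters exactly as in the proof of \Cref{thm:acqnc}: for a post-processing $\AC^0$ circuit of size $s$ and depth $D$, and a post-processing $\QNC^0$ of constant depth $d'$, take $k=(\log s)^{O(D)}\cdot\log^2 n=\log^{O(1)}n$ as dictated by \Cref{lma:fool}. Choose $\delta=2^{-\log^{\omega(1)}n}$ so that both $2^{O(k)}\cdot(\eps+2^{-n})^{1/2}\cdot\delta^{-1}$ and $n^{O(k)}\delta$ are negligible in $n$; this is feasible precisely because $\eps=2^{-\log^{\omega(1)}n}$. A union bound over the $n^{O(k)}$ possible subsystems gives that with probability $1-\negl(n)$, every $k$-qubit subsystem at the output of $U_i^{\otimes t}$ has reduced density matrix $\delta$-close to maximally mixed, both for $U_i$ drawn from the design and for $U_\haar$.

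Finally, I translate this into a statement about the full measurement distribution $\cD_{U_i}$ on all $tn$ output bits of the circuit (and similarly $\cD_{U_\haar}$). Any $k$ measurement bits depend on the unitary lightcone of at most $k\cdot 2^{d'}$ qubits at the output of $U_i^{\otimes t}$, which by the previous step has reduced density matrix $\delta$-close to maximally mixed. Since $U_i$ is applied on all $tn$ qubits (no ancillae are available to the post-processing $\QNC^0$), the post-processing unitary on a maximally mixed input produces the uniform distribution when measured, so the marginal on any $k$ output bits is $\delta$-close to uniform. Hence both $\cD_{U_i}$ and $\cD_{U_\haar}$ are $\delta$-almost $k$-wise independent, and \Cref{lma:fool} implies that each is $\negl(n)$-indistinguishable from the uniform distribution under any $\AC^0$ post-processor; the triangle inequality then yields the theorem. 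The main obstacle is bookkeeping rather than any new conceptual idea: ensuring that the quantitative bounds of \Cref{col:pru_design} (which carries the extra factor $r\leq 4^d$ from the recursive Schmidt rank) and \Cref{lma:fool} line up for a single choice of $k$ and $\delta$, which the hypothesis $\eps=2^{-\log^{\omega(1)}n}$ is precisely designed to allow.
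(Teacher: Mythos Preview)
Your proposal is correct and matches the paper's intended approach. The paper does not actually write out a proof for \Cref{thm:acqncpru}; it simply states that the result follows ``in analogy to \Cref{thm:acqnc}'', and your proposal is precisely that analogy carried out---replacing \Cref{col:page_design} with \Cref{col:pru_design} (which requires the recursive Schmidt decomposition from \Cref{lma:RSD} and hence the geometric locality of the pre-processing), adding the union bound over $k$-qubit subsystems that is not built into \Cref{col:pru_haar}/\Cref{col:pru_design}, and then invoking \Cref{lma:fool} exactly as in \Cref{thm:acqnc} since the ``$U$ on all qubits'' condition removes the ancillae.
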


\section{Discussion and outlook}
\label{sec: discussions}
Our work initiates the study of unconditionally fooling shallow quantum circuits. In general, statistical and computational notions of quantum pseudorandomness---such as $t$-designs and PRS---are incomparable. Our work shows that in the low-complexity regime, the two notions can in fact overlap and illustrate rich connections to computational complexity theory, reminiscent of the intimate relation between hardness and randomness in classical computation. Our work leaves a number of interesting questions regarding the connections between hardness and quantum pseudorandomness. We discuss a few of these below.

\paragraph{Fooling stronger circuits.}
What are the strongest class of quantum circuits that $t$-designs, or in particular $2$-designs, can fool? We conjecture that $2$-designs are computationally secure against $\QAC^0$ circuits as well. To prove this, it suffices to show a $\QAC$ analogy of Braverman's result on \emph{almost $k$-wise maximally mixed} input states. These states have the property that every subsystem on $k$ qubits is close to being maximally mixed, and we conjecture that they cannot be distinguished from the true maximally mixed state by $\QAC$ circuits of small depths.

A more immediate improvement on our results would be the removal of the constraints on ancillae in \Cref{thm:acqnc2} and \Cref{col:acqnc}. The reason we require a bounded number of ancillae is purely technical: we need this to argue that the output distribution has high min-entropy, as otherwise the $k$-wise indistinguishability would not guarantee that we can fool $\AC^0$ circuits. We note that a similar techinical issue occured in the $\AC^0\circ\QNC^0$ lower bound lower bound result of \cite{Slote24}.

\paragraph{Stronger security for PRU.}
The unconditional security we proved for PRU is quite limited. Specifically, in \Cref{thm:qncpru} we could only show security when the unitaries are queried non-adaptively, while the adversaries are 1-dimensional geometrically local $\QNC$ circuits. Can we lift the requirement of non-adaptivity or geometric locality?

We conjecture that new constructions are necessary in order to achieve security against adaptive queries. In other words, there exist (approximate) $t$-designs that are not PRU against $\QNC^0$ circuits with adaptive queries. Such an example that works for arbitrary $t\leq\poly(n)$ would also give a separation between non-adaptive and adaptive PRU.

\paragraph{Optimal pseudoentanglement.}
It is known that the optimal entanglement entropy gap for pseudoentanglement is $\omega(\log n)$ versus $O(n)$, which is achievable across every cut when assuming the existence of post-quantum one way functions \cite{aaronson2023quantumpseudoentanglement}. In \Cref{col:pre} we showed explicit examples of unconditional pseudoentanglement, which is optimal across every cut against $\QNC^0$ distinguishers, but only $\log^{\omega(1)}n$ versus $O(n)$ against an $\AC^0\circ\QNC^0$ adversary. Can we also construct unconditional optimal pseudoentanglement against $\AC^0\circ\QNC^0$, or even stronger circuits?

\paragraph{PRG against shallow circuits.}
In this work we did not consider classical pseudorandom primitives, such as PRGs. One of the reasons is that we need to be careful about the definition when the adversaries are shallow quantum circuits, since it makes a difference whether or not we allow access to multiple copies of the PRG output. In fact, if the adversary could only access a single copy, then the classical Nisan-Wigderson generator \cite{Nisan92,NisanW94} instantiated by the parity function would directly give an unconditional PRG with $\polylog (n)$ seed length against $\QAC^0$ circuits with bounded number of ancillae, using the recently results on the hardness of parity against $\QAC^0$ \cite{Nadimpalli2024QAC0,anshu2024Qac0Superlinear}. However, the hardness proofs in these works, which examine the Pauli spectrum of the $\QAC^0$ circuit, break down when allowing multi-copy access to the input, as even a single classical fan-out gate could significantly increase the Pauli weight of the overall circuit.

As a result, designing unconditional multi-copy secure PRG against $\QAC^0$ circuits remains to be an intriguing open problem. A follow-up direction is to use such PRG to construct other pseudorandom primitives such as pseudorandom functions or even PRS and PRU. Notice that we have recipes for these constructions against polynomial-sized adversaries, but in order to work against bounded adversaries, the construction needs to be super-efficient and such constructions are largely unknown.

\paragraph{Fooling other models.}
Finally, an open-ended question is to find unconditional pseudorandom objects against other restricted models of quantum computation, with or without our constructions and techniques. It was shown in \cite{GirishR22} that the INW generator \cite{ImpagliazzoNW94} is secure against space-bounded quantum computation. To study PRS or PRU against such models, the first challenge lies in finding the most relevant and useful definition, which we leave for future work.

\subsection*{Acknowledgements}
The authors would like to thank John Bostanci, Daniel Grier, Natalie Parham, Jack Morris, Francisca Vasconcelos, and Henry Yuen for stimulating discussions on related topics. SS is supported by a Royal Commission for the Exhibition of 1851 Research Fellowship.
\printbibliography  

\appendix

\section{Design properties of random phased subspace states}
\label{sec:appendixA}
Here we show that the random phased subspace states $\ket{\psi_{S,f}}$, defined in \Cref{def:subspaces}, form an approximate design. Fixing the subspace $S$ and taking the average over the random function $f:S\to\zo$, we have
\begin{equation}\label{eq:append}
    \Ex[f]{\ket{\psi_{S,f}}\bra{\psi_{S,f}}^{\otimes t}}
    =\frac{1}{2^{dt}}\sum_{\substack{x_1,\ldots,x_t\in S\\y_1,\ldots,y_t\in S}}\Ex[f]{(-1)^{f(x_1)+\ldots+f(x_t)+f(y_1)+\ldots+f(y_t)}}\ket{x_1\cdots x_t}\bra{y_1\cdots y_t},
\end{equation}
where the term $\ket{x_1\cdots x_t}\bra{y_1\cdots y_t}$ has non-zero coefficient only when each element of $S$ appears an even number of times in $(x_1,\ldots,x_t,y_1,\ldots,y_t)$. Among those let us consider the ones such that $x_1,\ldots,x_t$ are all distinct (so that $y_1,\ldots,y_t$ is a permutation of $x_1,\ldots,x_t$); the partial summation over these terms is
\begin{align*}
    &\ \frac{1}{2^{dt}}\sum_{\substack{x_1,\ldots,x_t\in S\\x_i\neq x_j,\forall i<j\\\pi\in \cS_t}} \ket{x_1\cdots x_t}\bra{x_{\pi(1)}\cdots x_{\pi(t)}} \\
    =&\ \frac{1}{2^{dt}}\sum_{\{x_1,\ldots,x_t\}\subset S}\Paren{\sum_{\pi\in\cS_t}\ket{x_{\pi(1)}\cdots x_{\pi(t)}}}\Paren{\sum_{\pi\in\cS_t}\bra{x_{\pi(1)}\cdots x_{\pi(t)}}} \\
    =&\ \frac{t!}{2^{dt}}\sum_{X\subset S,|X|=t}\ket{\mathsf{Sym_X}}\bra{\mathsf{Sym}_X}.
\end{align*}
Here $\cS_t$ is the symmetric group on $[t]$, and $\ket{\mathsf{Sym_X}}=\frac{1}{\sqrt{t!}}\sum_{\pi\in\cS_t}\ket{x_{\pi(1)}\cdots x_{\pi(t)}}$ for $X=\{x_1,\ldots,x_t\}$. As a result, the above partial sum has trace
\[\frac{t!}{2^{dt}}\cdot\binom{2^d}{t}=\frac{2^d\cdot(2^d-1)\cdot\cdots\cdot(2^d-t+1)}{2^{dt}}\geq 1-\frac{t^2}{2^d}.\]
Also notice that the partial sum can be thought as the projection of $\Ex[f]{\ket{\psi_{S,f}}\bra{\psi_{S,f}}^{\otimes t}}$ onto the subspace spanned by $\{\ket{\mathsf{Sym}_X}\}_{X\subset S,|X|=t}$, implying that the remaining part is still positive-definite. This allows us to bound the trace distance:
\begin{align*}
    &\ \Norm{\Ex[f]{\ket{\psi_{S,f}}\bra{\psi_{S,f}}^{\otimes t}}-\binom{2^d}{t}^{-1}\sum_{X\subset S,|X|=t}\ket{\mathsf{Sym_X}}\bra{\mathsf{Sym}_X}}_1 \\
    \leq&\ \Norm{\Ex[f]{\ket{\psi_{S,f}}\bra{\psi_{S,f}}^{\otimes t}}-\frac{t!}{2^{dt}}\sum_{X\subset S,|X|=t}\ket{\mathsf{Sym_X}}\bra{\mathsf{Sym}_X}}_1+\Abs{\frac{t!}{2^{dt}}\cdot\binom{2^d}{t}-1}
    \leq\frac{2t^2}{2^d}.
\end{align*}

Now we think of $S$ as a uniformly random $d$-dimensional subspace of $\zo^n$. For a uniformly random $X\subset S$ with $|X|=t$, the elements in $X$ are linearly dependent with probability at most
\[\frac{1}{2^d}+\frac{1}{2^{d-1}}+\cdots+\frac{1}{2^{d-t+1}}<\frac{1}{2^{d-t}}.\]
Similarly, when $X$ is a uniformly random size-$t$ subset of $\zo^n$, the elements in $X$ are linearly dependent with probability at most $2^{t-n}$. When conditioned on linear independence, the distributions of $X$ in both cases are the same, and thus
\[\Norm{\Ex[S]{\binom{2^d}{t}^{-1}\sum_{X\subset S,|X|=t}\ket{\mathsf{Sym_X}}\bra{\mathsf{Sym}_X}}-\binom{2^n}{t}^{-1}\sum_{X\subset \zo^n,|X|=t}\ket{\mathsf{Sym_X}}\bra{\mathsf{Sym}_X}}_1\leq 2^{t-d}+2^{t-n}.\]
Hence we conclude that
\[\Norm{\Ex[S,f]{\ket{\psi_{S,f}}\bra{\psi_{S,f}}^{\otimes t}}-\binom{2^n}{t}^{-1}\sum_{X\subset \zo^n,|X|=t}\ket{\mathsf{Sym_X}}\bra{\mathsf{Sym}_X}}_1\leq \frac{2t^2}{2^d}+2^{t-d}+2^{t-n}=O(2^{t-d}).\]

On the other hand, for the Haar random state $\ket{\psi_\haar}$, it is well known that (see e.g. \cite{mele2024introduction})
\[\Ex{\ket{\psi_\haar}\bra{\psi_\haar}^{\otimes t}}=\binom{2^n+t-1}{t}^{-1}\Pi_{\mathsf{sym}}^{(t,2^n)},\]
where $\Pi_{\mathsf{sym}}^{(t,2^n)}$ is the projection onto the symmetric subspace of $(\mathbb{C}^{2^n})^{\otimes t}$. Since all the $\ket{\mathsf{Sym}_X}$ take up $\binom{2^n}{t}$ dimensions in the subspace, their weight in $\Pi_{\mathsf{sym}}^{(t,2^n)}$ is at least
\[\binom{2^n}{t}\Big/\binom{2^n+t-1}{t}=\frac{2^n\cdot(2^n-1)\cdot\cdots\cdot(2^n-t+1)}{2^n\cdot(2^n+1)\cdot\cdots\cdot(2^n+t-1)}\geq 1-\frac{t^2}{2^n}.\]
This means that
\[\Norm{\Ex{\ket{\psi_\haar}\bra{\psi_\haar}^{\otimes t}}-\binom{2^n}{t}^{-1}\sum_{X\subset \zo^n,|X|=t}\ket{\mathsf{Sym_X}}\bra{\mathsf{Sym}_X}}_1\leq \frac{2t^2}{2^n},\]
and we can finally obtain that
\[\Norm{\Ex[S,f]{\ket{\psi_{S,f}}\bra{\psi_{S,f}}^{\otimes t}}-\Ex{\ket{\psi_\haar}\bra{\psi_\haar}^{\otimes t}}}_1\leq \frac{2t^2}{2^n}+O(2^{t-d})\leq O(2^{t-d}).\]

\end{document}